\newtheorem{theorem}{Theorem}
\newtheorem{lemma}{Lemma}
\newcommand{\Hbb}{\mathbb{H}}
\newcommand{\Nbb}{\mathbb{N}}
\newcommand{\Rbb}{\mathbb{R}}
\newcommand{\Vbb}{\mathbb{V}}
\newcommand{\Abf}{\mathbf{A}}
\newcommand{\bbf}{\mathbf{b}}
\newcommand{\Gbf}{\mathbf{G}}
\newcommand{\Hbf}{\mathbf{H}}
\newcommand{\Jbf}{\mathbf{J}}
\newcommand{\Lbf}{\mathbf{L}}
\newcommand{\Pbf}{\mathbf{P}}
\newcommand{\Rbf}{\mathbf{R}}
\newcommand{\tbf}{\mathbf{t}}
\newcommand{\Ubf}{\mathbf{U}}
\newcommand{\wbf}{\mathbf{w}}
\newcommand{\Ybf}{\mathbf{Y}}
\newcommand{\Zbf}{\mathbf{Z}}
\newcommand{\vep}{\varepsilon}
\newcommand{\ep}{\eta}
\newcommand{\CalF}{{\mathcal{F}}}
\newcommand{\CalH}{{\mathcal{H}}}
\newcommand{\CalL}{{\mathcal{L}}}
\newcommand{\CalO}{{\mathcal{O}}}
\newcommand{\CalQ}{{\mathcal{Q}}}
\newcommand{\CalR}{{\mathcal{R}}}
\newcommand{\nrm}[1]{\left\Vert {#1} \right\Vert}
\newcommand{\supp}[1]{\text{supp}\left(#1\right)}
\newcommand{\ind}[1]{\mathbbm{1}_{#1}}
\renewcommand{\tilde}[1]{\widetilde{#1}}
\DeclareMathOperator*{\argmin}{argmin}
\DeclareMathOperator*{\argmax}{argmax}
\newcommand{\lan}{\left\langle}
\newcommand{\ran}{\right\rangle}
\newcommand{\wstar}{\wbf^\star}
\newcommand{\what}{{\widehat{\wbf}}}
\newtheorem{defn}{Definition}[section]
\newcommand*{\dt}[1]{%
  \accentset{\mbox{\large{\hspace{0.1mm}.\hspace{0.75mm}}}}{#1}}
\newcommand{\Hav}{H_\vep^*}
\newcommand{\HavLam}{\CalH_{\mu}^*}
\newcommand{\Hred}{\CalH^{\mu}_0}
\newcommand{\Oav}{\Omega_\vep^*}
\newcommand{\OavLam}{\Omega_{\mu}^*}
\newcommand{\Ored}{\Omega^{\mu}_0}
\newcommand{\Xav}{X_\vep^*}
\newcommand{\XavLam}{X_{\mu}^*}
\newcommand{\Xred}{X^{\mu}_0}
\newcommand{\RLam}{R_{\mu}}
\newcommand{\Mred}{M^{\mu}_0}
\newcommand{\Hredhat}{\widehat{\CalH}^\mu_0}
\newcommand{\Xredhat}{\widehat{X}^{\mu}_0}
\title{Coarse-Graining Hamiltonian Systems Using WSINDy}
\author[1,*]{Daniel A. Messenger}
\author[2,$\dagger$]{Joshua W. Burby}
\author[1,+]{David M. Bortz}
\affil[1]{University of Colorado, Department of Applied Mathematics, Boulder, CO, 80309-0526, USA}
\affil[2]{Los Alamos National Laboratory, Theoretical Division, Los Alamos, NM, 87545, USA}
\affil[*]{daniel.messenger@colorado.edu}
\affil[$\dagger$]{jburby@lanl.gov}
\affil[+]{david.bortz@colorado.edu}
\keywords{WSINDy, Hamiltonian systems, coarse-graining}
\begin{abstract}
The Weak-form Sparse Identification of Nonlinear Dynamics algorithm (WSINDy) has been demonstrated to offer coarse-graining capabilities in the context of interacting particle systems (\url{https://doi.org/10.1016/j.physd.2022.133406}). In this work we extend this capability to the problem of coarse-graining Hamiltonian dynamics which possess approximate symmetries associated with timescale separation. A smooth $\vep$-dependent Hamiltonian vector field $X_\vep$ possesses an approximate symmetry if the limiting vector field $X_0=\lim_{\vep\to 0}X_\vep$ possesses an exact symmetry. Such approximate symmetries often lead to the existence of a Hamiltonian system of reduced dimension that may be used to efficiently capture the dynamics of the symmetry-invariant dependent variables. Deriving such reduced systems, or approximating them numerically, is an ongoing challenge. We demonstrate that WSINDy can successfully identify this reduced Hamiltonian system in the presence of large perturbations imparted in the $\vep>0$ regime, while remaining robust to extrinsic noise. This is significant in part due to the nontrivial means by which such systems are derived analytically. WSINDy naturally preserves the Hamiltonian structure by restricting to a trial basis of Hamiltonian vector fields. The methodology is computational efficient, often requiring only a single trajectory to learn the global reduced Hamiltonian, and avoiding forward solves in the learning process. In this way, we argue that weak-form equation learning is particularly well-suited for Hamiltonian coarse-graining. Using nearly-periodic Hamiltonian systems as a prototypical class of systems with approximate symmetries, we show that WSINDy robustly identifies the correct leading-order system, with dimension reduced by at least two, upon observation of the relevant degrees of freedom. We also provide a contribution to the literature on averaging theory by proving that first-order averaging at the level of vector fields preserves Hamiltonian structure in nearly-periodic Hamiltonian systems. We provide physically relevant examples, namely coupled oscillator dynamics, the H\'enon-Heiles system for stellar motion within a galaxy, and the dynamics of charged particles.
\end{abstract}
\begin{document}

\flushbottom
\maketitle
%
%
\thispagestyle{empty}


\section{Introduction}\label{sec:intro}

Hamiltonian mechanics is a formulation of classical mechanics that is used to describe non-dissipative systems\cite{AbrahamMarsden1978}. Hamiltonian descriptions of physical systems allow for geometric interpretations which are not immediately present in the Newtonian and Lagrangian formulations of classical mechanics. Expression of the dynamics in terms of a conserved quantity (the {\it Hamiltonian}) is also essential in the formulation of quantum mechanics. In fact, the Schr\"odinger equation comprises an infinite-dimensional Hamiltonian system\cite{Chernoff_1976}. The geometry of phase space allows one to systematically explore conserved quantities, also known as constants of motion, which indicate the presence of {\it symmetry}. A symmetry is map on phase space which commutes with the flow-map of the dynamics, and may be used to reduce the size of phase space. Continuous families of symmetries lead to phase space dimension reduction. While many physical systems do possess quantities which are strictly conserved, often the system possesses {\it approximately conserved quantities} which lead to {\it approximate symmetries}. Such approximate symmetries can still be used to derive a reduced-order system that approximates well the important degrees of freedom in the original system, but such derivations remain an ongoing challenge.

An important example of a system with an approximate symmetry is a charged particle moving in a strong magnetic field. The particle exhibits fast oscillations around the magnetic field lines, which are largely unimportant to measure. Isolating the slow drift motion along and across the field lines is essential in order to efficiently simulate such systems. The concept of {\it adiabatic invariance} provides the necessary approximately-conserved quantities which allow one to analytically derive reduced Hamiltonian systems for the slow dynamics of charged particle motion. Passage to this reduced system involves averaging over a continuous family of {\it approximate} symmetries resulting from the adiabatic invariant (discussed in Section \ref{sec:nearperreview}). 

In practice, full descriptions of Hamiltonian dynamics in the form of governing equations are challenging to identify from experimental data because the fast-scale oscillations are often underresolved. Moreover, given the full set of governing equations, analytically deriving the reduced-dimension system in the presence of symmetries (or approximate symmetries) becomes infeasible for complex systems. In this article we explore the ability of recent weak-form methods to identify sparse equations for the reduced system in the presence of approximate symmetries, directly from time series data on the Hamiltonian system in question. 

Using nearly-periodic systems as a test case, we show in this work that {\it weak-form equation discovery}, specifically the WSINDy algorithm, which interprets the to-be-discovered dynamics using test functions, provides a framework for directly coarse-graining Hamiltonian systems from observation of only the slow modes. The dictionary learning approach provides a wealth of information on the reduced order system, namely, one can in many cases learn the structure of the entire Hamiltonian from a single noisy trajectory. Moreover, in cases where the Hamiltonian structure is not identified correctly, often the dynamics near the level set of the trajectory are captured very accurately. 

Since the weak form is applied at the level of vector fields, we also provide theoretical results showing that first-order averaged vector fields remain Hamiltonian in the nearly-periodic case. While the all-orders averaging theory of Kruskal \cite{
Kruskal1962JournalofMathematicalPhysics} is known to be Hamiltonian, the Hamiltonian structure underlying first-order averaging has never been identified in full generality, thus we provide a self-contained description which, to the best of the authors' knowledge, constitutes a novel contribution to the theory of averaging.

\subsection{Literature Review}\label{sec:nearperreview}

Significant progress has been made recently in the development of data-driven methods for estimation and identification of Hamiltonian systems and other structured dynamics. These methods include dictionary-based Hamiltonian learning\cite{HolmsenEidnesRiemer-Sorensen2023arXiv230506920}, approximation of Hamiltonian systems and symplectic maps by neural networks \cite{burby2020fast,jin2020sympnets,BertalanDietrichMezicEtAl2019Chaos}, and combined dictionary and neural-network based structure preservation in learned dynamics systems  \cite{LeeTraskStinis}. The key difference between these works and ours are (a) we introduce a weak formulation of the discovery problem, which naturally enables discovery from corrupted data, and (b) we demonstrate that the weak form offers direct coarse-graining capabilities, allowing one to identify reduced-order Hamiltonian systems simply by interpreting the dynamics through the action of a suitable class of test functions.

The subject of coarse-graining and reduced-order modeling for Hamiltonian systems has certainly received attention in recent years but this subfield is far from complete. Peng and Mohseni\cite{PengMohseni2016SIAMJSciComput} developed a symplectic reduced-order modeling analogue of proper orthogonal decomposition, which has seen several extensions\cite{SharmaWangKramer2022PhysicaD}.
From a different perspective, Duruisseaux, Burby, and Tang\cite{DuruisseauxBurbyTang2023SciRep} introduced a neural-network architecture to specifically handle systems possessing an
adiabatic invariant, by which the dynamics may be reduced to an approximate Hamiltonian system of lower dimension. The latter is part of a series of recent works aimed at exploiting near-periodicity to design more efficient means of understanding and simulating dynamics relevant to plasma physics\cite{Burby2022LA-UR-22-265241875767,Burby_Hirvijoki_2021,Burby2022PhysicsofPlasmas,BHL_2023,BurbySquire2020JPlasmaPhys}. The current work can be seen as a continuation of this series, with the purpose of offering computationally efficient, noise-robust, and interpretable discovery of reduced Hamiltonian systems to complement neural network-based approaches\cite{DuruisseauxBurbyTang2023SciRep}.

Here we focus on the dictionary learning approach to identify reduced Hamiltonian systems. In particular, we employ the WSINDy algorithm (Weak-form Sparse Identification of Nonlinear Dynamics), which has its roots in the SINDy algorithm\cite{BruntonProctorKutz2016ProcNatlAcadSci}. The weak form has risen to prominence as a way to combat realistic challenges like noisy data and non-smooth dynamics 
\cite{BortzMessengerDukic2023BullMathBiol,MessengerBortz2022arXiv221116000,MessengerDallAneseBortz2022ProcThirdMathSciMachLearnConf,MessengerWheelerLiuEtAl2022JRSocInterface,MessengerBortz2021JComputPhys,MessengerBortz2021MultiscaleModelSimul,TangLiaoKuskeEtAl2023JComputPhys,SchaefferMcCalla2017PhysRevE,BertsimasGurnee2023NonlinearDyn,FaselKutzBruntonEtAl2022ProcRSocMathPhysEngSci,KaptanogluZhangNicolaouEtAl2023NonlinearDyn,WangHuanGarikipati2021ComputMethodsApplMechEng,GurevichReinboldGrigoriev2019Chaos}. Most relevant to this work, WSINDy has been demonstrated to offer coarse-graining capabilities\cite{MessengerBortz2022PhysicaD} in the context of interacting particle systems and homogenization of parabolic PDEs. In a similar vein to the current work, Bramburger, Dylewsky, and Kutz\cite{BramburgerDylewskyKutz2020PhysRevE} demonstrate that SINDy may be used to identify reduced dynamics in slow-fast systems. However structure-preservation is not considered, and the method is restricted to systems exhibiting an identifiable separation of scales. Moreover, the method assumes a short timescale characterized by periodic orbits, each with the same period. While this method is complementary to ours, the techniques developed here are very different, and are designed for a more general class of problems. In particular, while the method in \cite{BramburgerDylewskyKutz2020PhysRevE} depends on identification of the dominant fast timescale, we demonstrate here that representing the dynamics in weak form is sufficient to coarse-grain the fast scales in addition to combating measurement noise, which can easily hinder the identification of a fast timescale. Our method also allows for fast timescale orbits with variable period.

Compared with POD-based methods\cite{PengMohseni2016SIAMJSciComput,SharmaWangKramer2022PhysicaD}, which expand the reduced dynamics in terms of a data-driven basis, dictionary learning allows one to learn representations of the reduced dynamics in a basis that easily generalizes as it does not depend on the training dataset. In the context of Hamiltonian equation discovery, the method we study here identifies the Hamiltonian over all of phase space, and can easily be used to explore unseen energy levels. Moreover, structure preservation is easily enforced at the level of the dictionary (see Section \ref{sec:prelims}). On the other end of the spectrum, neural network-based approaches have the capacity to approximate well the underlying dynamics but lack the interpretability and computational efficiency native to dictionary learning. For Hamiltonian equation discovery, neural-networks typically require many trajectories and do not provide global knowledge of the Hamiltonian\cite{DuruisseauxBurbyTang2023SciRep}. However, structure preservation can be enforced in the neural network architecture\cite{burby2020fast,jin2020sympnets,DuruisseauxBurbyTang2023SciRep}.

Ultimately, the main purpose of this work is to demonstrate that the weak form itself has inherent temporal coarse-graining capabilities, which are especially useful in reduced-order Hamiltonian modeling. We note that combinations of weak-form equation learning with other reduced-order modeling paradigms is possible, as exhibited by previous works in other contexts (.e.g\ POD-based methods\cite{russo2022convergence,russo2023streaming} and Neural Networks\cite{stephany2023weak}), and we leave these synergies to future work.

\subsection{Paper organization}

We include preliminary concepts relevant to the study of Hamiltonian coarse-graining in Section \ref{sec:prelims}, namely, an overview of Hamiltonian systems (\ref{sec:Hsys}) with specific attention paid to nearly-periodic Hamiltonian systems (\ref{sec:nearperreview}). The latter is the prototypical class of approximate-symmetries which we use in the current manuscript to investigate weak-form coarse-graining. Section \ref{sec:1stOrderHtheory} contains theoretical results proving that first-order averaging of nearly-periodic Hamiltonian systems at the level of vector fields preserves Hamiltonian structure in the resulting reduced phase space, justifying a search for Hamiltonian coarse-grained models in the presence of approximate symmetries. In Section \ref{sec:wsindyH} we describe how the WSINDy algorithm may be applied to learn a general Hamiltonian system of the form \ref{sec:Hsys} and we demonstrate in \ref{sec:WSINDy4HS} that for systems with approximate symmetries, WSINDy can be employed to learn multiple relevant models to describe the system in different regimes. The bulk of our findings is presented in the Section \ref{sec:numerics}, where we quantify the performance of WSINDy applied to four physically-relevant nearly-periodic Hamiltonian systems of varying dimension.

\section{Preliminaries}
\label{sec:prelims}

In this Section we review Hamiltonian dynamical systems with specific attention paid to nearly-periodic systems in Section \ref{sec:nearperreview}.

\subsection{Hamiltonian systems}\label{sec:Hsys}

Classically, Hamilton's equations describe the evolution of a point $z=(q,p)$ in phase space $M=\Rbb^{2N}$ along a level set a function $H: M\to \Rbb$ referred to as the {\it Hamiltonian}. Hamilton's original equations are
\begin{equation}\label{HamiltonsEq}
\begin{dcases} \dt{q} = \nabla_p H \\ \dt{p} =- \nabla_q H,\end{dcases}
\end{equation}
and $(q,p)$ were originally associated with the position and momentum of a particle (the dot notation denoting the derivative with respect to time). It can readily be seen from equation \eqref{HamiltonsEq} that along any trajectory $t\to (q(t),p(t))$, $H$ is conserved: $\frac{d}{dt}H(q(t),p(t)) = 0.$
By defining the matrix
\begin{equation}\label{J}
\Jbf = \begin{pmatrix} 0 & {Id}_{\Rbb^N} \\ -{Id}_{\Rbb^N} & 0 \end{pmatrix},
\end{equation}
where ${Id}_{\Rbb^N}$ is the identity in $\Rbb^N$, we can equivalently write \eqref{HamiltonsEq} as $\dt{z} = X_H(z),$
where the {\it Hamiltonian vector field} $X_H$ is defined
\[X_H(z) = \Jbf\nabla H(z).\]
The matrix $\Jbf$ is nonsingular, anti-symmetric, and of even dimension, in other words it is {\it symplectic}. It is this symplectic structure that allows for a significantly more general formulation of Hamiltonian dynamics on arbitrary smooth manifolds using the language of differential forms. For a comprehensive review of the subject, see textbooks\cite{marsden2013introduction,AbrahamMarsden1978} and the exposition\cite{MacKay2020JPlasmaPhys} on differential forms in plasma physics. For the purposes of introducing a widely-applicable weak formulation, we will briefly describe general Hamiltonian systems starting with the following definition. 

\begin{defn}[Symplectic manifold] The pair $(M,\Omega)$ is a {\normalfont{symplectic manifold}} if $M$ is a smooth manifold and $\Omega$ is a closed, nondegenerate differential 2-form on $M$.
\end{defn}

That is, $d\Omega = 0$ ($\Omega$ is closed, see \cite{marsden2013introduction,AbrahamMarsden1978,MacKay2020JPlasmaPhys} for more details on the exterior derivative $d$) and for every $z\in M$, $\Omega_z$ is a bilinear map on the product tangent space at $z$, $T_zM \times T_zM$, that is anti-symmetric ($\Omega_z(X,V) = -\Omega_z(V,X)$ for all $X,V\in T_zM$) and non-degenerate ($ \Omega_z(X,V) = 0$ for all $V\in T_zM$ implies $X=0$).
Non-degeneracy and anti-symmetry imply that the dimension of $M$ must be even. If $\Omega$ is allowed to degenerate (i.e.\ admitting a null space) then $(M,\Omega)$ is referred to as a {\it presymplectic manifold}, which need not have even dimension. However, throughout we assume that $M$ has dimension $2N$ for some $N\geq 1$. We then classify a Hamiltonian system as follows.

\begin{defn}[Hamiltonian system]
Let $(M,\Omega)$ be a $(2N)$-dimensional presymplectic manifold and $H : M \to \Rbb$ a smooth function. The Hamiltonian vector field $X_H$ associated to $(M,\Omega,H)$ is defined by 
\begin{equation}\label{Hdef}
\iota_{X_H}\Omega = dH.
\end{equation}
The tuple $(M,\Omega,H,X_H)$ is referred to as a \normalfont{Hamiltonian system}.
\end{defn}

\noindent In \eqref{Hdef}, $dH$ is the differential of $H$ and $\iota$ denotes the interior product, that is, for a differential $k$-form $\tau$ and vector field $X$ on $M$, $\iota_{X}\tau$ is a $(k-1)$-form given by $(\iota_{X}\tau)_z(V_1,\dots,V_{k-1}) = \tau_z(X,V_1,\dots,V_{k-1})$ for all $V_1,\dots,V_{k-1}\in T_zM$. Equation \eqref{Hdef} will be referred to as {\it Hamilton's equations}, representing a generalization of Hamilton's original equations \eqref{HamiltonsEq} to arbitrary presymplectic manifolds $(M,\Omega)$. Since $\Omega$ is bilinear, Hamilton's equations \eqref{Hdef} can be written
\[\Omega(X_H,V) = dH(V), \quad \forall V\in TM.\]
In this way, $X_H$ is defined implicitly at each point $z\in M$ through the action of elements $\Omega_z(\cdot,V)$ in $(T_zM)^*$, the dual of the tangent space $T_zM$. In the Euclidean setting ($M = \Rbb^{2N})$, if $\Omega$ is symplectic, we can associate $\Omega$ with a quadratic form and use the Euclidean inner product (dot product) to write
\[\Omega_z(X_H,V) = (\Jbf_z^{-1}X_H)\cdot V, \quad dH(V) = \nabla H\cdot V\]
where $\Jbf_z$ is a symplectic matrix for each $z\in M$, and further, supressing the $z$-dependence,
\begin{equation}\label{HdefEuc}
V\cdot X_H = V\cdot \Jbf \nabla H, \quad \forall V\in TM.
\end{equation}
This will play a role in the weak formulation below.

\subsection{Nearly-periodic Hamiltonian systems}\label{sec:nearperreview}
A {\it nearly-periodic} system is a dynamical system that depends smoothly on a small parameter $\vep$ and is periodic in the limit $\vep\to 0$. A nearly-periodic {\it Hamiltonian} system is nearly periodic and is Hamiltonian for all $\vep\geq 0$. In practice, the symplectic form may depend on $\vep$ and degenerate at $\vep=0$, hence why recent developments in nearly-periodic system theory\cite{BurbySquire2020JPlasmaPhys,Burby_Hirvijoki_2021,BHL_2023} work with presymplectic, rather than symplectic Hamiltonian systems (see Examples 3 and 4 below). Nearly-periodic Hamiltonian systems are a prototypical class of dynamics to explore how approximate symmetries may be used to reduce the dimensionality of physical systems. An important example relevant to plasma physics is the dynamics of a charged particle in a strong magnetic field, which is emulated below in Examples 3 and 4.  

First, we introduce several definitions. The limiting symmetry in a nearly-periodic system is defined using a {\it circle action}.

\begin{defn}[Circle Action]
A 1-parameter family of diffeomorphisms $\{\Phi_\theta\ :\ \theta\in \Rbb\}$, on a manifold $M$ is a {\normalfont circle action} if $\Phi_{\theta+2\pi} = \Phi_\theta$, $\Phi_0 = \text{Id}_M$, and $\Phi_{\theta_1+\theta_2} = \Phi_{\theta_1}\circ\Phi_{\theta_2}$.
\end{defn}

\noindent Change with respect to an underlying vector field on $M$ is defined using the {\it Lie derivative}.

\begin{defn}[Lie Derivative]\label{liederiv}
Let $X$ be a vector field on $M$ with associated flow map ${\Phi_t}$ such that $\frac{d}{dt}{\Phi_t}(z) = X({\Phi_t}(z))$ for all $z\in M$. The Lie derivative of a tensor field $\tau$ on $M$ with respect to $X$ is defined by 
\[\CalL_X\tau(z) = \frac{d}{dt}\Big\vert_{t=0}\Phi_t^*\tau(z)\]
where $(\cdot)^*$ is the pullback operation (see e.g.\ \cite[Def.\ 1.7.16]{AbrahamMarsden1978}).
\end{defn}
\noindent For instance, if $f$ is a function on $M$, then $\CalL_Xf(z)$ is the directional derivative of $f$ at $z$ in the direction $X(z)$. If $V$ is a vector field on $M$, then $\CalL_X V$ is the Lie Bracket of $X$ and $V$, given by $\partial_X V - \partial_V X$, with $\partial$ denoting the directional derivative. If $\tau$ is a rank-$k$ differential form, then the Lie derivative is given by Cartan's formula, 
\begin{equation}\label{cartan}
\CalL_X\tau= \iota_X d\tau + d(\iota_X\tau).
\end{equation}
We will occasionally denote vector fields in differential operator notation, in other words with local coordinates $z=(z_1,\dots,z_{2N})$ for $M$ the vector field $V(z) = (V_1(z),\dots,V_{2N}(z))$ may be written $V(z) = \sum_{j=1}^{2N} V_j(z)\partial_{z_j}$. As well as providing a compact notation, this emphasizes the action of vector fields on functions through the $\CalL_Vf = df(V)$.

A nearly periodic Hamiltonian System is then defined as follows.

\begin{defn}[Nearly periodic Hamiltonian System]\label{def:nphs}
An $\vep$-dependent Hamiltonian system $(M,\Omega_\vep,H_\vep,X_\vep)$ is nearly periodic if there exists a function $\omega_0:M\to \Rbb$ and a circle action $\Phi_\theta$ such that
\begin{enumerate}
\item $\Omega_\vep,H_\vep$ depend smoothly on $\vep$
\item $X_0 = \omega_0R_0$ where $R_0$ is the infinitesimal 
generator of a circle action $\Phi_\theta$, i.e.\ $R_0 = \frac{d}{d\theta}\Big\vert_{\theta=0} \Phi_\theta$
\item The limiting angular frequency $\omega_0$ is strictly positive and satisfies $\CalL_{R_0}\omega_0 = 0$.
\end{enumerate}
\end{defn}

For a nearly-periodic Hamiltonian system $(M,\Omega_\vep,H_\vep,X_\vep)$, smoothness in $\vep$ implies that for sufficiently small $\vep$, $\Omega_\vep$ and $H_\vep$ have formal asymptotic expansions 
\[H_\vep(z) = H_0(z) +\vep H_1(z) +\vep^2 H_2(z)+\cdots\]
\[(\Omega_\vep)_z = (\Omega_0)_z +\vep (\Omega_1)_z +\vep^2 (\Omega_2)_z + \cdots\]
From Hamilton's equations \eqref{Hdef}, we find that the Hamiltonian vector $X_\vep$ has a formal power series $X_\vep = X_0+\vep X_1 + \vep^2X_2+\cdots$ with coefficient vector fields given by the infinite family of equations
\begin{align}
\label{leading_order_ham} \iota_{X_0}\Omega_0 &= dH_0 \\
\label{first_order_ham}\iota_{X_1}\Omega_0+\iota_{X_0}\Omega_1 &= dH_1 \\
\label{second_order_ham}\iota_{X_2}\Omega_0+\iota_{X_1}\Omega_1+\iota_{X_0}\Omega_2 &= dH_2 \\
 & \vdots \nonumber
\end{align}
In 1962 M. Kruskal\cite{Kruskal1962JournalofMathematicalPhysics} proved that every nearly periodic system possesses an approximate $U(1)$ symmetry given by a unique vector field $R_\vep$ referred to as the {\it roto-rate}, and defined as follows. 

\begin{defn}[Roto-rate]
The {\normalfont roto-rate} of a nearly periodic vector field $X_\vep$ is a formal power series $R_\vep = R_0 + \vep R_1 +\vep^2R_2+\cdots$ with vector field coefficients $R_k$ such that $R_0 = \omega_0^{-1}X_0$ and to all orders in $\vep$, 
\begin{enumerate}
\item $\CalL_{R_\vep}X_\vep = 0$ 
\item The integral curves of $R_\vep$ are $2\pi$-periodic.
\end{enumerate}
\end{defn}

\noindent Kruskal also showed that, in the Hamiltonian case, $R_\vep$ is itself Hamiltonian, with Hamiltonian $\mu_\vep$ known as the {\it adiabatic invariant}. The adiabatic invariant is defined as a formal power series 
\[\mu_\vep = \mu_0+\vep\mu_1+\vep^2\mu_2+\cdots.\]
Explicit formulas for the first few terms in the series were first found by Burby and Squire\cite{BurbySquire2020JPlasmaPhys}.
Since $R_\vep$ commutes with $X_\vep$ to all orders in $\vep$ ($\CalL_{R_\vep}X_\vep=0$), $\mu_\vep$ is formally conserved by $X_\vep$, hence the notion of $R_\vep$ as an approximate symmetry of $X_\vep$.

\subsection{Symmetry reduction in nearly-periodic Hamiltonian systems}

In the presence of an exact continuous symmetry $t\to {\Phi_t}$ of a Hamiltonian system $(M,\Omega,H,X_H)$, such that $\CalL_RX_H=0$ and $\iota_{R}\Omega = d\mu_*$, where $R$ is the infinitesimal generator of ${\Phi_t}$, the phase space $(M,\Omega)$ and dynamics $(H,X_H)$ may be reduced using the Marsden-Weinstein-Meyer construction (see \cite[Chapter 4]{AbrahamMarsden1978}). At a high level, this proceduring involves (a) restricting to a level set $\Lambda_\mu=\mu_*^{-1}(\mu)$ of the conserved quantity $\mu_*$ associated with ${\Phi_t}$, (b) identifying points on orbits of ${\Phi_t}$ to form the reduced phase space $\Mred$, and (c) identifying a suitable symplectic form $\Ored$ on $\Mred$. Existence of $\Ored$ on the reduced phase space $\Mred$ is at the heart of this result. On the other hand, the reduced Hamiltonian $\Hred$ on $\Mred$ is simply defined since $H$ is, by assumption, constant along $R$-orbits.
We note that other analytical methods of reducing Hamiltonian systems do exist, including those related to normal-form theory \cite{ChurchillKummerRod1983JournalofDifferentialEquations}, which involves successive near-identity coordinate transformations.

In the setting of an approximate symmetry, to derive a reduced Hamiltonian $\Hred$ the procedure is similar to Marsden-Weinstein-Meyer, but with substantially more effort for higher orders in $\vep$. For nearly-periodic systems one is guaranteed a reduction in dimension of at least 2, as (1) we restrict to a level set of the adiabatic invariant $\mu_\vep$, and (2) we condense phase space by forming an equivalence class of points that lie on the same integral curve of the roto-rate $R_\vep$. However, the reduction in dimension can be greater, as demonstrated below Section \ref{sec:numerics}, Example 4.

To connect theory and practice, it is useful to draw parallels between geometric reduction of Hamiltonian systems, such as the Marsden-Weinstein-Meyer construction, and dynamic reduction involving averages and restrictions at the level of vector fields. In the next section we place the latter (dynamic) reduction technique on firmer ground by providing new theoretical results that ensure first-order averaging at the level of $X_\vep$ preserves Hamiltonian structure. These theoretical results complement existing results on nearly periodic systems in \cite{BurbySquire2020JPlasmaPhys,Burby2022LA-UR-22-265241875767}, and provide motivation for weak-form equation learning from trajectories of nearly-periodic systems. Following is a discussion drawing parallels to classical time-averaging in fast-slow systems. 

\section{Hamiltonian structure of first-order averaging theory}\label{sec:1stOrderHtheory}
Let $X_\vep = \omega_0\,R_0 + \vep\,X_1 + \dots$ be a nearly-periodic Hamiltonian system on the exact presymplectic manifold $(M,\Omega_\vep)$ with Hamiltonian $H_\vep$ and limiting roto-rate $R_0$. Recall that presymplectic means $\Omega_\vep$ is antisymmetric and closed, $d\Omega_\vep =0$, but possibly degenerate. Exact means there is a $1$-form $\vartheta_\vep$ such that $\Omega_\vep = -d\vartheta_\vep$. Hamiltonian means $\iota_{X_\vep}\Omega_\vep = dH_\vep$, which implies the sequence of equations \eqref{leading_order_ham}-\eqref{second_order_ham}. The flow map for $R_0$ will be denoted $\Phi_\theta^0$, where $\theta\in\mathbb{R}\text{ mod }2\pi$ is the time parameter. Given a tensor $\tau$ on $M$ its $R_0$-average will be denoted by
\begin{equation}\label{R0avg}
\langle \tau\rangle = (2\pi)^{-1}\int_0^{2\pi}(\Phi_\theta^0)^*\tau\,d\theta
\end{equation}
that is, the time-average of the pullback of $\tau$ by $\Phi^0_\theta$.

Generic (e.g.\ possibly non-Hamiltonian) first-order averaging theory approximates the vector field $X_\vep$ with the {\it first-order average vector field} 
\[\Xav = X_0 + \vep\,\langle X_1\rangle.\]
The first-order average is automatically $R_0$-invariant, $\CalL_{R_0} \Xav = 0$, meaning the fast phase corresponding to motion along $X_0 = \omega_0\,R_0$ is ignorable after the replacement. We will show that the reduced dynamics defined by $\Xav$ and obtained by ignoring the fast phase is presymplectic Hamiltonian under the assumptions in the previous paragraph. Although the all-orders averaging theory of Kruskal \cite{
Kruskal1962JournalofMathematicalPhysics} is known to be Hamiltonian, the Hamiltonian structure underlying first-order averaging has never been identified in full generality. Since first-order averaging is sufficient for analyzing various important examples, including those considered in this work, a self-contained description here is warranted.

First we recall some useful results related to the adiabatic invariant $\mu_\vep$ from the general theory of nearly-periodic Hamiltonian systems on presymplectic manifolds\cite{BurbySquire2020JPlasmaPhys}. Adiabatic invariance means $\mathcal{L}_{X_\vep}\mu_\vep = 0$ in the sense of formal power series. There are general formulas for the first few terms in the series $\mu_\vep$, the simplest being
\begin{align}
\mu_0 = \iota_{R_0}\langle \vartheta_0\rangle,\label{mu0_formula}
\end{align}
where $\vartheta_0$ denotes the first term in the formal power series expansion for $\vartheta_\vep = \vartheta_0 + \vep\,\vartheta_1 + \dots$. Observe that the exterior derivative of Eq.\,\eqref{mu0_formula} justifies thinking of the leading-order adiabatic invariant as ``energy over frequency,"
\begin{align}
d\mu_0 = \iota_{R_0}d\langle\vartheta_0 \rangle = \omega_0^{-1}\iota_{X_0}\langle \Omega_0\rangle = \omega_0^{-1}\,dH_0,\label{energy_period_relation}
\end{align}
where we have used Eq.\,\eqref{leading_order_ham} to infer $\langle H_0\rangle = H_0$.
The $R_0$-average of $\mu_1$ is also simple:
\begin{align}
\langle \mu_1\rangle = \iota_{R_0}\langle \vartheta_1\rangle.\label{mu1_av_formula}
\end{align}
In general, some number of the first terms in the series $\mu_\vep$ vanish. Let $k$ denote the smallest integer such that $\mu_k$ is not identically $0$. Set $\mu^* = \mu_k$. The condition $\mathcal{L}_{X_\vep}\mu_\vep = 0$ is equivalent to the sequence of equations
\begin{align}
\mathcal{L}_{X_0}\mu^*& = 0\label{leading_order_mu}\\
\mathcal{L}_{X_1}\mu^* + \mathcal{L}_{X_0}\mu_{k+1} & = 0\label{first_order_mu}\\
\vdots\nonumber
\end{align}

Next we prove a simple technical lemma that establishes structural properties of the first-order averaged system related to its presymplectic Hamiltonian formulation.
\begin{lemma}\label{first_order_structure_lemma}
The first non-zero coefficient $\mu^*$ in the adiabatic invariant series is constant along solutions of the first-order averaged system $\Xav$,
\begin{align}
\mathcal{L}_{\Xav}\mu^* = 0.\label{av_ad_inv}
\end{align}
In addition, the first-order averaged system satisfies the Hamilton-like equation
\begin{align}
\iota_{\Xav}\Omega_0=dH_0 + \vep\,d\langle H_1\rangle - \vep\,\omega_0\,d\langle \mu_1\rangle.\label{av_ham_eqn_unreduced}
\end{align}
\end{lemma}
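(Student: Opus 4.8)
\emph{Proof proposal.} The plan is to reduce both claims to three elementary facts about the $R_0$-average $\langle\cdot\rangle$ of \eqref{R0avg}, and then assemble them using the order-by-order Hamilton equations \eqref{leading_order_ham}--\eqref{first_order_ham} and the adiabatic-invariance equations \eqref{leading_order_mu}--\eqref{first_order_mu}. The three facts I would establish first are: (i) the period-average of a Lie derivative along $R_0$ vanishes, $\langle\CalL_{R_0}\tau\rangle = 0$ for any tensor field $\tau$, since the group property of the circle action makes the integrand a total $\theta$-derivative, $(\Phi_\theta^0)^*\CalL_{R_0}\tau = \frac{d}{d\theta}(\Phi_\theta^0)^*\tau$, which integrates to $0$ over one period by $\Phi_{2\pi}^0 = \Phi_0^0 = \mathrm{Id}$; (ii) averaging commutes with $d$, and with contraction $\iota_V(\cdot)$ against any $R_0$-invariant tensor $\sigma$, i.e.\ $\langle\iota_V\sigma\rangle = \iota_{\langle V\rangle}\sigma$, by naturality of the pullback $\phi^*(\iota_V\sigma) = \iota_{\phi^*V}\phi^*\sigma$ together with linearity of $\iota$ in its vector slot; and (iii) $\omega_0$ and $R_0$ are $R_0$-invariant, so they pass through the average, $\langle\omega_0\tau\rangle = \omega_0\langle\tau\rangle$ and $\langle\iota_{R_0}\tau\rangle = \iota_{R_0}\langle\tau\rangle$.

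Before invoking (ii) I must verify the two invariances it will be applied to. That $\mu^*$ is $R_0$-invariant is immediate from \eqref{leading_order_mu}, since $\CalL_{X_0}\mu^* = \omega_0\,\CalL_{R_0}\mu^* = 0$ and $\omega_0>0$. That $\Omega_0$ is $R_0$-invariant is the one genuinely geometric input: Cartan's formula \eqref{cartan} with $d\Omega_0 = 0$ gives $\CalL_{R_0}\Omega_0 = d\,\iota_{R_0}\Omega_0$, and the energy-over-frequency relation \eqref{energy_period_relation} identifies $\iota_{R_0}\Omega_0 = \omega_0^{-1}dH_0 = d\mu_0$ as exact, hence closed, so $\CalL_{R_0}\Omega_0 = 0$ and $(\Phi_\theta^0)^*\Omega_0 = \Omega_0$.

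For \eqref{av_ad_inv} I would expand $\CalL_{\Xav}\mu^* = \CalL_{X_0}\mu^* + \vep\,\CalL_{\langle X_1\rangle}\mu^*$; the first term is $0$ by \eqref{leading_order_mu}. For the second, $R_0$-invariance of $\mu^*$ lets fact (ii) move the average outside, $\CalL_{\langle X_1\rangle}\mu^* = \langle\CalL_{X_1}\mu^*\rangle$, and \eqref{first_order_mu} rewrites this as $-\langle\CalL_{X_0}\mu_{k+1}\rangle = -\omega_0\langle\CalL_{R_0}\mu_{k+1}\rangle$, which vanishes by (iii) and (i). For \eqref{av_ham_eqn_unreduced} I would write $\iota_{\Xav}\Omega_0 = \iota_{X_0}\Omega_0 + \vep\,\iota_{\langle X_1\rangle}\Omega_0$, use \eqref{leading_order_ham} for the leading term, and use $R_0$-invariance of $\Omega_0$ with fact (ii) to get $\iota_{\langle X_1\rangle}\Omega_0 = \langle\iota_{X_1}\Omega_0\rangle$. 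Substituting \eqref{first_order_ham} in the form $\iota_{X_1}\Omega_0 = dH_1 - \iota_{X_0}\Omega_1$ and averaging yields $d\langle H_1\rangle - \omega_0\langle\iota_{R_0}\Omega_1\rangle$. The closing step is $\langle\iota_{R_0}\Omega_1\rangle = d\langle\mu_1\rangle$: writing $\Omega_1 = -d\vartheta_1$ from $\Omega_\vep = -d\vartheta_\vep$, Cartan's formula gives $\iota_{R_0}\Omega_1 = d\,\iota_{R_0}\vartheta_1 - \CalL_{R_0}\vartheta_1$, whose average kills the second term by (i), commutes with $d$ and $\iota_{R_0}$ on the first by (ii)--(iii), and \eqref{mu1_av_formula} identifies $\iota_{R_0}\langle\vartheta_1\rangle = \langle\mu_1\rangle$.

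I expect the main obstacle to be keeping the commutation rules of fact (ii) honest: averaging commutes with contraction against $\Omega_0$ \emph{only} because $\Omega_0$ is $R_0$-invariant, so the whole derivation genuinely hinges on the geometric lemma $\CalL_{R_0}\Omega_0 = 0$, which in turn leans on the energy-over-frequency relation \eqref{energy_period_relation}. The remaining manipulations are bookkeeping with Cartan's formula and the order-by-order equations, but care is needed to distinguish contractions against $X_0$ from those against $R_0$ so that the scalar factor $\omega_0$ is absorbed correctly, and to justify differentiating the form-valued integrals under the integral sign.
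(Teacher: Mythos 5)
Your proposal is correct and follows essentially the same route as the paper's proof: average the order-by-order equations \eqref{first_order_mu} and \eqref{first_order_ham}, use that averaged Lie derivatives along $R_0$ vanish, and apply Cartan's formula with \eqref{mu1_av_formula} to identify $\langle\iota_{X_0}\Omega_1\rangle$ with $\omega_0\,d\langle\mu_1\rangle$. The only difference is that you explicitly verify the $R_0$-invariance of $\Omega_0$ and $\mu^*$ needed to pass the average through the contractions, a point the paper's proof uses tacitly.
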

\begin{proof}
The conservation law \eqref{av_ad_inv} follows immediately from \eqref{leading_order_mu} and the $R_0$-average of \eqref{first_order_mu}. (Notice that $\langle \mathcal{L}_{X_0}\mu_1\rangle = \omega_0\,\langle \mathcal{L}_{R_0}\mu_1\rangle = 0$ and $\langle \mu^*\rangle = \mu^*$.) As for \eqref{av_ham_eqn_unreduced}, it follows directly from the $R_0$-average of \eqref{first_order_ham} and the identity
\begin{align*}
\iota_{X_0}\langle \Omega_1\rangle = -\omega_0\langle \iota_{R_0}d\vartheta_1\rangle = \omega_0\,d(\iota_{R_0}\langle\vartheta_1\rangle) = \omega_0\,d\langle \mu_1\rangle,
\end{align*}
which represents an application of Cartan's formula \eqref{cartan} and \eqref{mu1_av_formula}.
\end{proof}

Because $\mu^*$ is automatically conserved by the first-order averaged system we can study $\Xav$ one level set of $\mu^*$ at a time. Let $\Lambda_\mu = (\mu^*)^{-1}\{\mu\}$ be a regular level set for $\mu^*$ with regular value $\mu$. By Lemma \ref{first_order_structure_lemma} $\Xav$ restricts to a vector field $\XavLam$ on $\Lambda_\mu$. It turns out that $\XavLam$ satisfies a genuine Hamilton equation, even though $\Xav$ is not Hamiltonian in general due to the inexact $1$-form $\omega_0\,d\langle \mu_1\rangle$ in Eq.\,\eqref{av_ham_eqn_unreduced}.
\begin{lemma}\label{level_set_ham}
Let $i_\mu:\Lambda_\mu\rightarrow M$ denote the canonical inclusion map. Define the closed $2$-form $\OavLam =i_\mu^*\Omega_0$ and the function $\HavLam = i_\mu^*(H_0 + \vep\,\langle H_1\rangle - \vep\,\omega_0\langle \mu_1\rangle)$. The restricted first-order averaged system $\XavLam$ is $\OavLam$-Hamiltonian with Hamiltonian $\HavLam$,
\begin{align}
\iota_{\XavLam}\OavLam = d\HavLam.\label{res_ham}
\end{align}
\end{lemma}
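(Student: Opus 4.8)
The plan is to derive \eqref{res_ham} simply by pulling the unreduced identity \eqref{av_ham_eqn_unreduced} back along the inclusion $i_\mu$, so the real content is bookkeeping about what survives the restriction. First, Lemma \ref{first_order_structure_lemma} gives $\CalL_{\Xav}\mu^* = 0$, which says $\Xav$ is tangent to the regular level set $\Lambda_\mu$ and hence restricts to a well-defined field $\XavLam$ on $\Lambda_\mu$ with $di_\mu(\XavLam) = \Xav\circ i_\mu$; this tangency is precisely what makes $\XavLam$ meaningful. It also lets me commute the pullback past the interior product, using the elementary fact that for a field tangent to a submanifold $i_\mu^*(\iota_{\Xav}\Omega_0) = \iota_{\XavLam}(i_\mu^*\Omega_0) = \iota_{\XavLam}\OavLam$. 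The closedness of $\OavLam = i_\mu^*\Omega_0$ asserted in the statement is immediate from $i_\mu^* d\Omega_0 = 0$.

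For the right-hand side of \eqref{av_ham_eqn_unreduced}, pullback commutes with $d$, so $i_\mu^* dH_0 = d(i_\mu^*H_0)$ and $i_\mu^*\,d\langle H_1\rangle = d(i_\mu^*\langle H_1\rangle)$ reproduce the $H_0$ and $\vep\langle H_1\rangle$ parts of $\HavLam$ verbatim. The only term not already exact on $M$ is $\vep\,\omega_0\,d\langle\mu_1\rangle$ --- indeed this is exactly why $\Xav$ is not globally Hamiltonian. The trick is the Leibniz splitting $\omega_0\,d\langle\mu_1\rangle = d(\omega_0\langle\mu_1\rangle) - \langle\mu_1\rangle\,d\omega_0$, which recasts \eqref{av_ham_eqn_unreduced} as $\iota_{\Xav}\Omega_0 = d\big(H_0 + \vep\langle H_1\rangle - \vep\,\omega_0\langle\mu_1\rangle\big) + \vep\,\langle\mu_1\rangle\,d\omega_0$. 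Pulling this back and invoking the interior-product identity above gives $\iota_{\XavLam}\OavLam = d\HavLam + \vep\,i_\mu^*\big(\langle\mu_1\rangle\,d\omega_0\big)$, so the whole lemma collapses to the single assertion that the defect one-form $\langle\mu_1\rangle\,d\omega_0$ pulls back to zero on $\Lambda_\mu$.

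Showing the defect vanishes is where I expect the essential difficulty, and the mechanism is that $\omega_0$ is constant along $\Lambda_\mu$. The input is the ``energy over frequency'' relation \eqref{energy_period_relation}, $dH_0 = \omega_0\,d\mu_0$: applying $d$ and using $d^2 = 0$ annihilates the left side and yields $d\omega_0\wedge d\mu_0 = 0$, so $d\omega_0$ is pointwise proportional to $d\mu_0$ and $\omega_0$ is functionally dependent on the leading adiabatic invariant. In the generic case $\mu^* = \mu_0$ this immediately gives $i_\mu^* d\omega_0 = d(i_\mu^*\omega_0) = 0$ on the regular level set, and since $i_\mu^*\langle\mu_1\rangle$ is merely a function, $i_\mu^*(\langle\mu_1\rangle\,d\omega_0) = 0$ and \eqref{res_ham} follows. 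For the degenerate cases $\mu^* = \mu_k$ with $k\geq 1$, relation \eqref{energy_period_relation} forces $\mu_0\equiv 0$ and hence $dH_0 = 0$; when $k\geq 2$ one also has $\mu_1\equiv 0$, so the defect term is absent outright, whereas the borderline case $k=1$ (where $\langle\mu_1\rangle = \mu^*$ is itself constant on $\Lambda_\mu$) needs a separate check that $\omega_0$ is constant along $\Lambda_\mu$, and I regard this as the most delicate point of the argument.
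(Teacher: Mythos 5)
Your overall route is the same as the paper's: pull \eqref{av_ham_eqn_unreduced} back along $i_\mu$ (using tangency of $\Xav$ to $\Lambda_\mu$ from Lemma \ref{first_order_structure_lemma} to commute the pullback past the interior product), observe that everything except the $\vep\,\omega_0\,d\langle\mu_1\rangle$ term is manifestly exact, and then split into the three cases $\mu^*=\mu_0$, $\mu^*=\mu_1$, and $\mu^*=\mu_k$ with $k>1$. Your treatment of the first and third cases coincides with the paper's: the energy--period relation \eqref{energy_period_relation} gives $d\omega_0\wedge d\mu_0=0$, hence $d(i_\mu^*\omega_0)=0$ when $\mu^*=\mu_0$; and when $k\geq 2$ the term is absent because $\mu_1\equiv 0$.

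The genuine gap is the case $\mu^*=\mu_1$, which you explicitly leave open. The ingredient you are missing is Eq.~\eqref{leading_order_mu}: $\mathcal{L}_{X_0}\mu^*=\omega_0\,\mathcal{L}_{R_0}\mu_1=0$ forces $\mu_1$ to be $R_0$-invariant, so $\langle\mu_1\rangle=\mu_1=\mu^*$, and therefore $i_\mu^*d\langle\mu_1\rangle=d(i_\mu^*\mu^*)=0$ on the level set; the offending term $i_\mu^*\bigl(\omega_0\,d\langle\mu_1\rangle\bigr)=(i_\mu^*\omega_0)\,di_\mu^*\mu^*$ then vanishes outright, with no claim about $\omega_0$ required. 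Your Leibniz splitting $\omega_0\,d\langle\mu_1\rangle=d(\omega_0\langle\mu_1\rangle)-\langle\mu_1\rangle\,d\omega_0$ is what moves the exterior derivative onto $\omega_0$ and manufactures the apparent need for $i_\mu^*d\omega_0=0$; keeping the derivative on $\langle\mu_1\rangle$, as the paper does, avoids it. That said, your reformulation does put a finger on something the paper's case~2 passes over silently: since the splitting is an identity, the defect you cannot kill, $\vep\,\mu\,d(i_\mu^*\omega_0)$, is precisely the discrepancy between $d\HavLam$ (with the stated $\HavLam$ containing $-\vep\,\omega_0\langle\mu_1\rangle$) and the exact form $di_\mu^*(H_0+\vep\langle H_1\rangle)$ that the pullback actually yields in this case. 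So the clean conclusion when $\mu^*=\mu_1$ is that $\XavLam$ is Hamiltonian with Hamiltonian $i_\mu^*(H_0+\vep\langle H_1\rangle)$; identifying this with the stated $\HavLam$ does require exactly the constancy of $i_\mu^*\omega_0$ (or $\mu=0$) that you flag. If you restate your target in the paper's form and prove it as $i_\mu^*(\omega_0\,d\langle\mu_1\rangle)=0$, your case analysis closes; as written, the proof is incomplete at the point you yourself identify.
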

\begin{proof}
Pulling back Eq.\,\eqref{av_ham_eqn_unreduced} along $i_\mu$ implies
\begin{align*}
\iota_{\XavLam}\OavLam = di_\mu^*(H_0 + \vep\,\langle H_1\rangle) - \vep\,i_\mu^*\bigg(\omega_0\,d\langle \mu_1\rangle\bigg).
\end{align*}
So the proof merely requires showing $i_\mu^*\bigg(\omega_0\,d\langle \mu_1\rangle\bigg) = di_\mu^*(\omega_0\,\langle \mu_1\rangle)$. We consider three cases separately.

\emph{case $1$:} $\mu^* = \mu_0$. In this case the energy-period relation \eqref{energy_period_relation} implies $d\omega_0 \wedge d\mu^* = d\omega_0\wedge d\mu_0= ddH_0 = 0$. Thus, $d\omega_0 = \lambda\,d\mu^*$, for some function $\lambda$. Pulling back along $i_\mu$ then implies $di_\mu^*\omega_0 = 0$, which says that $i_\mu^*\omega_0$ is constant. Since the exterior derivative is $\mathbb{R}$-linear, we therefore find
\begin{align*}
i_\mu^*\bigg(\omega_0\,d\langle \mu_1\rangle\bigg) = (i_\mu^*\omega_0)di_\mu^*\langle \mu_1\rangle = d\bigg((i_\mu^*\omega_0)i_\mu^*\langle \mu_1\rangle\bigg) = di_\mu^*(\omega_0\,\langle \mu_1\rangle),
\end{align*}
as desired.

\emph{case $2$:} $\mu^* = \mu_1$. In this case Eq.\,\eqref{leading_order_mu} implies $\mu^* = \mu_1 = \langle \mu_1\rangle$. Therefore
\begin{align*}
i_\mu^*\bigg(\omega_0\,d\langle \mu_1\rangle\bigg) = (i_\mu^*\omega_0)di_\mu^*\langle \mu_1\rangle = (i_\mu^*\omega_0)di_\mu^*\mu^* = 0,
\end{align*}
as desired.

\emph{case $3$:} $\mu^* = \mu_k$, $k >1$. In this case both $\mu_0$ and $\mu_1$ vanish identically. Therefore
\begin{align*}
i_\mu^*\bigg(\omega_0\,d\langle \mu_1\rangle\bigg) = i_\mu^*\bigg(\omega_0\,d0\bigg)=0,
\end{align*}
as desired.
\end{proof}

We are finally in position to establish our main claim: after ``forgetting" the fast phase in the first-order averaged system, the resulting slow dynamics is presymplectic Hamiltonian. The result follows from Lemma \ref{level_set_ham} and a variant of the Marsden-Weinstein symplectic quotient technique.

\begin{theorem}\label{thm_1stOrder}
Let $\Mred = \Lambda_\mu/U(1)$ denote the space of $R_\mu$-orbits in $\Lambda_\mu$. Assume it is a manifold. Let $\pi:\Lambda_\mu\rightarrow \Mred$ denote the corresponding quotient map. On $\Mred$, there is a unique vector field $\Xred$, a unique closed $2$-form $\Ored$, and a unique function $\Hred$ such that
\begin{align}
T\pi\circ \XavLam = \Xred\circ\pi, \quad \pi^*\Ored = \OavLam,\quad \pi^*\Hred = \HavLam.\label{reduction_formulas}
\end{align}
Moreover, $\Xred$ is $\Ored$-Hamiltonian with Hamiltonian function $\Hred$,
\begin{align}
\iota_{\Xred}\Ored = d\Hred.\label{reduced_ham}
\end{align}
\end{theorem}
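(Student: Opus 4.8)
The plan is to treat this as a Marsden--Weinstein--Meyer-type symplectic quotient: I would show that each of the three restricted objects $\HavLam$, $\OavLam$, $\XavLam$ furnished by Lemma \ref{level_set_ham} is \emph{basic} with respect to the circle action generated by $R_0$ on $\Lambda_\mu$ (this is the $U(1)$-action $\RLam = R_0|_{\Lambda_\mu}$ whose orbits define $\Mred$), so that each descends uniquely through the quotient map $\pi$, and then push the restricted Hamilton equation \eqref{res_ham} down to $\Mred$ by naturality of the interior product under $\pi$-related vector fields. Throughout, the uniqueness asserted in \eqref{reduction_formulas} comes for free: since $\pi$ is a surjective submersion (granted by the standing assumption that $\Mred$ is a manifold), $\pi^*$ is injective on tensors, so any object satisfying the stated relation is determined.

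First I would record that $R_0$ is tangent to $\Lambda_\mu$: because $X_0 = \omega_0 R_0$ with $\omega_0 > 0$, Eq.\,\eqref{leading_order_mu} gives $\CalL_{R_0}\mu^* = \omega_0^{-1}\CalL_{X_0}\mu^* = 0$, so the flow of $R_0$ preserves $\Lambda_\mu$ and restricts to the desired circle action. For the function, I would check $\CalL_{R_0}\HavLam = 0$ termwise: $\CalL_{R_0}H_0 = \omega_0^{-1}\iota_{X_0}\iota_{X_0}\Omega_0 = 0$ by antisymmetry of $\Omega_0$ together with \eqref{leading_order_ham}; $\CalL_{R_0}\omega_0 = 0$ by Definition \ref{def:nphs}; and every $R_0$-average is annihilated by $\CalL_{R_0}$, since $(\Phi^0_s)^*\langle \tau\rangle = \langle \tau\rangle$ for all $s$ by $2\pi$-periodicity and a change of variables. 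Invariance of $\HavLam$ then yields a unique $\Hred$ with $\pi^*\Hred = \HavLam$. For the vector field, projectability of $\XavLam$ to a unique $\Xred$ with $T\pi\circ \XavLam = \Xred\circ\pi$ is equivalent to $\CalL_{R_0}\XavLam = 0$, which follows from the already-noted identity $\CalL_{R_0}\Xav = 0$ restricted to $\Lambda_\mu$ (using tangency of $R_0$).

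The crux is the $2$-form, and I expect its horizontality to be the main obstacle. To descend $\OavLam = i_\mu^*\Omega_0$ I must show it is both $R_0$-invariant and horizontal. Invariance reduces to $\CalL_{R_0}\Omega_0 = 0$, which I would obtain from Cartan's formula \eqref{cartan} and closedness of $\Omega_0$: $\CalL_{R_0}\Omega_0 = d\,\iota_{R_0}\Omega_0 = d(\omega_0^{-1}dH_0) = d\,d\mu_0 = 0$, where the energy--period relation \eqref{energy_period_relation} identifies $\omega_0^{-1}dH_0 = d\mu_0$; pulling back along $i_\mu$ then gives $\CalL_{R_0}\OavLam = 0$. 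Horizontality is precisely where the restriction to a level set of $\mu^*$ is essential: $\iota_{R_0}\OavLam = i_\mu^*\iota_{R_0}\Omega_0 = i_\mu^*\,d\mu_0$, which vanishes because either $\mu^* = \mu_0$ is constant on $\Lambda_\mu$ (so $i_\mu^*d\mu_0 = d\,i_\mu^*\mu_0 = 0$), or $\mu_0\equiv 0$ when $\mu^* = \mu_k$ with $k\geq 1$. This is exactly the case split already exploited in Lemma \ref{level_set_ham}, and without it $\OavLam$ would fail to be horizontal and would not descend. Being basic, $\OavLam$ determines a unique $\Ored$ with $\pi^*\Ored = \OavLam$, and $\Ored$ is closed since $\pi^*d\Ored = i_\mu^*d\Omega_0 = 0$ and $\pi^*$ is injective.

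Finally I would establish the reduced Hamilton equation \eqref{reduced_ham} by pushing \eqref{res_ham} through $\pi$. Since $\XavLam$ and $\Xred$ are $\pi$-related and $\pi^*\Ored = \OavLam$, naturality gives $\pi^*(\iota_{\Xred}\Ored) = \iota_{\XavLam}\pi^*\Ored = \iota_{\XavLam}\OavLam$, while $\pi^*(d\Hred) = d\,\pi^*\Hred = d\HavLam$. By Lemma \ref{level_set_ham} the two pulled-back one-forms coincide, $\iota_{\XavLam}\OavLam = d\HavLam$, so $\pi^*(\iota_{\Xred}\Ored) = \pi^*(d\Hred)$, and injectivity of $\pi^*$ yields $\iota_{\Xred}\Ored = d\Hred$, completing the argument.
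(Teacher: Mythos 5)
Your proposal is correct and follows essentially the same route as the paper: establish tangency of $R_0$ and $\Xav$ to $\Lambda_\mu$, show $\XavLam$, $\OavLam$, $\HavLam$ are basic (with the same case split on $\mu^*=\mu_0$ versus $\mu^*=\mu_k$, $k\geq 1$, to get $\iota_{R_\mu}\OavLam = d\,i_\mu^*\mu_0 = 0$), and push \eqref{res_ham} through $\pi$ using injectivity of $\pi^*$ for a surjective submersion. The only difference is that you spell out the $R_0$-invariance checks (of $\HavLam$ and of $\Omega_0$) that the paper leaves implicit, which is a completeness refinement rather than a different argument.
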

\begin{proof}
By construction, the first-order averaged system commutes with $R_0$, $\CalL_{R_0}\Xav = 0$. By Eqs.\,\eqref{leading_order_mu} and \eqref{av_ad_inv} both $R_0$ and $\Xav$ are tangent to $\Lambda_\mu$. Therefore $\CalL_{R_\mu}\XavLam = 0$, which says that the $\XavLam$-flow maps $R_\mu$-orbits into $R_\mu$-orbits. It follows that there is a unique vector field $\Xred$ on orbit space $\Mred$ that is $\pi$-related to $\XavLam$.

In general, the interior product of $R_\mu$ with $\OavLam$ is given by $\iota_{R_\mu}\OavLam = i_\mu^*(\iota_{R_0}\Omega_0) = di_\mu^*\mu_0$. When $\mu^* = \mu_0$, $i_\mu^*\mu_0$ is constant and $\iota_{R_\mu}\OavLam = 0$. When $\mu^* = \mu_k$, $k>0$, $\mu_0  =0$ and, again, $\iota_{R_\mu}\OavLam = 0$. So $\iota_{R_\mu}\OavLam = 0$ in general. It follows that there is a unique $2$-form $\Ored$ on $\Mred$ such that $\pi^*\Ored = \OavLam$. It is straightforward to show that $d\Ored = 0$.

Since $\mathcal{L}_{R_\mu}\HavLam  =0$ there is a unique function $\Hred$ such that $\pi^*\Hred = \HavLam$. Overall, we have just established existence of the desired objects on $\Mred$ that satisfy Eq.\,\eqref{reduction_formulas}. Establishing the Hamilton equation \eqref{reduced_ham} is now simple. Notice that
\begin{align*}
\iota_{\XavLam}\OavLam = \iota_{\XavLam}\pi^*\Ored = \pi^*(\iota_{\Xred}\Ored) = \pi^*d\Hred,
\end{align*}
which says that the $1$-form $\pi^*(\iota_{\Xred}\Ored-d\Hred)$ on $\Lambda_\mu$ vanishes. But because $\pi$ is a surjective submersion $\pi^*$ is injective. It follows that $\iota_{\Xred}\Ored-d\Hred = 0$, as claimed. 
\end{proof}

\noindent From equations \eqref{av_ham_eqn_unreduced}, \eqref{res_ham}, and \eqref{reduced_ham}, we can identify two methods of obtaining the leading-order reduced Hamiltonian system $(\Mred,\Ored,\Hred,\Xred)$.  To clarify terminology, we refer to $(\Mred,\Ored,\Hred,\Xred)$ as the ``leading-order reduced'' system since it is obtained from utilizing the roto-rate $R_\vep$ and adiabatic invariant $\mu_\vep$ each to leading-order (while $\Xav$ is the ``first-order averaged'' vector field since it is obtained from averaging $X_\vep$ to first order, $\Xav= \langle X_0+\vep X_1\rangle$). Higher-order reduced systems can be obtained in a similar manner when given access to higher-order terms in $R_\vep$ and $\mu_\vep$, however the order-by-order Hamiltonian structure is as-of-yet undetermined, and we reserve it for a future work. We will differentiate the two methods of obtaining $(\Mred,\Ored,\Hred,\Xred)$ as {\it geometric reduction} and {\it dynamic reduction}:

\begin{enumerate}[label = (\Roman*)]
\item {\it Geometric reduction}: 
\begin{enumerate}
\item Average $(H_\vep,\Omega_\vep)$ around $R_0$ to get $(\Hav,\Oav)$, keeping terms to first-order in accordance with Hamilton's equations \eqref{leading_order_ham}-\eqref{second_order_ham} 
\item Restrict $(\Hav,\Oav)$ to the level set $\Lambda_\mu = \mu_*^{-1}\{\mu\}$ to obtain $\HavLam = i_\mu^*(\Hav-\vep\omega_0\langle \mu_1\rangle)$ and $\OavLam = i_\mu^*\Omega$, having adjusted the energy $\Hav$ according to the $\CalO(\vep)$ adiabatic invariant $\mu_1$
\item Quotient out $\Lambda_\mu$ by orbits of $\RLam = i_\mu^*R_0$ to get the reduced phase space $\Mred$, reducing $(\HavLam,\OavLam)$ through invariance to $\RLam$ to $(\Hred,\Ored)$
\item Obtain the reduced Hamiltonian vector field $\Xred$ on $\Mred$ using Hamilton's equations applied to $(\Hred,\Ored)$ 
\end{enumerate}
\item {\it Dynamic reduction}:
\begin{enumerate}
\item Obtain $\Xav$ by averaging the first-order vector field $X_0+\vep X_1$ around the flow of $R_0$
\item Notice that for regular values of $\mu_*$, $\Xav$ is tangent to $\Lambda_\mu$, hence $\XavLam$ is obtained by observing $\Xav$ on $\Lambda_\mu$
\item Represent the quotient space $\Mred$ as a section from $\Lambda_\mu$, that is, let each point in $\Mred$ correspond to a unique $\RLam$ orbit lying in $\Lambda_\mu$
\item $\XavLam$ observed on this section equals $\Xred$ and is Hamiltonian according to \eqref{reduced_ham}
\end{enumerate}
\end{enumerate}

\noindent Geometric reduction is more robust and allows for systematic exploration of higher-order reductions in $\vep$. On the other hand dynamic reduction merely involves observing the system on the reduced submanifold $\Mred$, after averaging around the flow of the limiting roto-rate $R_0$. In physical systems, it can be assumed that the system is already close to this manifold, hence some method of averaging observations is all that is needed to observe the Hamiltonian structure of the system to first order. We exploit this property in the current work, whereby the weak form emulates the averaging necessary to reveal Hamiltonian structure in the dynamics. 

\subsection{Connection with classical time-averaging}

One case of interest is when $M=\Rbb^{2N}$ and the dynamics can be partitioned into slow and fast modes, $z = (z_s,z_f)\in \Rbb^{2(N-1)}\times\Rbb^{2}$, with Hamiltonian $H_\vep$ given by 
\[H_\vep(z) = H_0(z_f) + \vep H_1(z_s,z_f).\]
 As in Examples 1 and 2 below, the first-order averaged Hamiltonian then takes the form, for some $\tilde{H}:\Rbb^{2(N-1)}\times\Rbb\to \Rbb$, 
 \[\Hav(z) = H_0(z_f) + \vep \tilde{H}(z_s,\mu_0(z_f))\]
where $\mu_*=\mu_0$ is the leading-order adiabatic invariant. When the symplectic form $\Omega_\vep$ satisfies $\Omega_\vep = \Omega_0$, the first-order averaged dynamics $\Xav$ are already Hamiltonian, given by $\iota_{\Xav} \Omega_0 = d\Hav$. If we further have $\iota_{V_s}\iota_{V_f}\Omega_0=0$ for any pair of vector fields $V_s$, $V_f$ tangent to the slow and fast submanifolds, respectively, then the dynamics $\dot{\tilde{z}} = \Xav(\tilde{z})$ take the form
\begin{equation}\label{slowfastdecouple}
\begin{dcases} \dt{\tilde{z}}_s = \vep \Jbf_s\nabla_{z_s} \tilde{H}(\tilde{z}_s,\mu_0(\tilde{z}_f)) \\
\dt{\tilde{z}}_f = \Jbf_f \Big(\nabla_{z_f}H_0(\tilde{z}_f)\ +\ \vep \partial_2 \tilde{H}(\tilde{z}_s,\mu_0(\tilde{z}_f))\nabla_{z_f} \mu_0(\tilde{z}_f)\Big)\end{dcases}
\end{equation}
with $\partial_2$ denoting differentiation with respect to the second argument and symplectic matrices $\Jbf_s$, $\Jbf_f$ derived from $\Omega_0$ for the slow and fast subsystems. While we have not yet restricted to a level set of $\mu_0$, for any trajectory $\tilde{z}=(\tilde{z_s},\tilde{z}_f)$ of \eqref{slowfastdecouple} it holds that
\[\frac{d}{dt}\mu_0(\tilde{z}(t))\ =\ \nabla_{z_f}\mu_0(\tilde{z}_f)\cdot\Jbf_f \nabla_{z_f}H_0(\tilde{z}_f)\ +\  \left[\nabla_{z_f}\mu_0(\tilde{z}_f)\cdot\Jbf_f \nabla_{z_f} \mu_0(\tilde{z}_f)\right]\left(1+\vep \partial_2 \tilde{H}(\tilde{z}_s,\mu_0(\tilde{z}_f))\right)\ =\ 0\]
where the first term is zero because $\mu_0$ is conserved by the flow of $X_0$, or
\[\nabla_{z_f}\mu_0(\tilde{z}_f)\cdot\Jbf_f \nabla_{z_f}H_0(\tilde{z}_f) = \nabla_{z_f}\mu_0(\tilde{z}_f)\cdot X_0(\tilde{z}) = \CalL_{X_0}\mu_0 (\tilde{z}) = 0,\]
    and the second term is zero because $\Jbf_f$ is symplectic. From the dependence of $\tilde{H}$ on $\mu_0$, conservation of $\mu_0$ along this reduced flow implies that the slow system $\tilde{z}_s$ {\it decouples from the fast system} $\tilde{z}_f$. Reducing phase space according to $M\to\Lambda_\mu\to \Mred$ is then just restriction to the slow variables $z_s$. The reduced Hamiltonian is then given by, up to a constant depending on $\mu$, $\Hred =  \vep\tilde{H}(z_s,\mu)$.

We can now make some connections with classical time-averaging. If we instead represent the dynamics of the original slow variables $z_s$ as the non-autonomous system
\begin{equation}\label{classicalaveraging}
\dt{z}_s = \vep F(z_s,t),
\end{equation}
suppressing explicit dependence on $z_f$ through the second argument of $F$, then the previous procedure of first-order averaging by the limiting roto-rate $R_0$ is similar to classical averaging techniques \cite[Ch.2]{2007}. Specifically, for flow-map $\Phi^0_t$ of $R_0$, it holds that $z_f(t) = (\Phi^0_t z(0))_f+\CalO(\vep)$, and we see that $F$ is $\CalO(\vep)$ close to being $2\pi$-periodic in its second variable. The time-averaged {\it autonomous} system 
\[\dt{\overline{z}} = \vep\overline{F}(\overline{z}), \qquad \overline{F}(z) := \frac{1}{2\pi}\int_0^{2\pi} F(z,t)\,dt\]
then stays $\CalO(\vep)$ close to the slow portion of the averaged system \eqref{slowfastdecouple} (and hence to the original system) on timescales of $\CalO(\vep^{-1})$.

While classical time-averaging is very useful for systems \eqref{classicalaveraging} with $F$ exactly $2\pi$-periodic in its second argument, this is the extent of its utility, and its application to nearly-periodic systems is severely limited. The fast mode $z_f$ is only $2\pi$-periodic when $\vep=0$, such that for any $\vep>0$, the slow and fast systems are coupled, with the dynamics for $z_s$ not falling into the class \eqref{classicalaveraging} with $F$ periodic in $t$. Deriving the correct analytical formulas for reductions of nearly-periodic systems is a more subtle and labor-intensive task (if not intractable), especially if a Hamiltonian structure is to be preserved (as outlined in the previous section), but significant progress has been made in this direction \cite{BurbySquire2020JPlasmaPhys,Burby2022LA-UR-22-265241875767,BHL_2023}. 

The aim of the current work is to complement these analytical techniques with computational methods to learn the correct reduced Hamiltonian systems using weak-form equation learning. Given the intricacies of reducing a nearly periodic system by means of the roto-rate $R_\vep$, it is surprising that the weak form recovers the same reduced dynamics from discrete-time trajectory data. In the next section we describe how WSINDy may be employed in this context. 

\section{WSINDy for Hamiltonian systems}\label{sec:wsindyH}

In this section we describe how WSINDy may be used to identify the Hamiltonian $H$ defining a Hamiltonian system using discrete-time observations of the flow of $X_H$. For the remainder of this article we will work in $M=\Rbb^{2N}$, although many of the concepts that follow have direct extensions to general manifolds. We will also assume that the symplectic form $\Omega$ is known, and leave identification of $\Omega$ and extension to general manifolds to future work. 

We can define a weaker version of \eqref{HdefEuc} (i.e.\ requiring less regularity of the resulting integral curves of $X_H$) by considering a trajectory $z:[0,T]\to \Rbb^{2N}$ satisfying 
\[\dt{z} = X_H(z) :=\Jbf(z) \nabla H(z)\]
where $\Jbf^{-1}(z)$ is the symplectic matrix associated with $\Omega_z$. Using integration by parts in time against a smooth time-dependent vector field $V(z,t)$ satisfying $V(z(0),0) = V(z(T),T) = 0$, we have
\begin{equation}\label{wfHdef}
-\int_0^T \left(\frac{d}{dt}{V}(z(t),t)\right)\cdot z(t)\,dt = \int_0^T V(z(t),t)\cdot\Jbf(z(t)) \nabla H(z(t))\,dt.
\end{equation}
If $V$ has no explicit $z$-dependence, the left-hand side can be evaluated without differentiating $z$, leading to 
\begin{equation}\label{wfHdef_noz}
-\int_0^T \dt{V}(t)\cdot z(t)\,dt = \int_0^T V(t)\cdot\Jbf(z(t)) \nabla H(z(t))\,dt.
\end{equation}
On the other hand, if $V$ depends on $z$, we can use the dynamics $\dt{z} = X_H(z)$ and rearrange terms to get
\[-\int_0^T\partial_tV(z(t),t)\cdot z(t)\,dt = \int_0^T\left(V(z(t),t) + z\cdot \nabla V(z(t),t)\right)\cdot\Jbf(z(t)) \nabla H(z(t))\,dt.\]
In either case, the phase space variables $z$ are not differentiated. This is crucial to accurate identification of $H$ from time series data that is corrupted from measurement noise. We will show here that this weak formulation also serves to filter out intrinsic dynamics that occur on a faster timescale. We restrict ourselves to the simpler case \eqref{wfHdef_noz} in this work and leave full exploration of \eqref{wfHdef} with general test vector fields $V = V(z,t)$ to future research.

To identify $H$ from data, we consider noisy evaluations  $\Zbf = z(\tbf)+\ep$ where $t\to z(t)$ is a trajectory from some Hamiltonian system $(\Rbb^{2N},\Omega,H,X_H)$, $\tbf = (t_0,\dots,t_{m+1})$ are the timepoints, and $\ep$ represents i.i.d.\ mean-zero measurement noise with variance $\sigma^2<\infty$. We approximate $H$ by expanding in terms of a chosen library $\Hbb=(H_1,\dots,H_J)$ of $J$ trial Hamiltonian functions, 
\[\widehat{H} = \sum_{j=1}^J\what_jH_j:=\Hbb\what\]
and we make the assumption that $\what$ is sparse. To solve for $\what$, we first discretize \eqref{wfHdef_noz} using $K$ test vector fields $\Vbb = (V_1,\dots,V_K)$ to arrive at the linear system $(\bbf,\Gbf)\in \Rbb^K\times \Rbb^{K\times J}$ defined by
\[\bbf_k = -\lan \dt{V}_k, \Zbf\ran_{\tbf}, \quad \Gbf_{kj} = \lan V_k, \Jbf(\Zbf)\nabla H_j(\Zbf)\ran_{\tbf}.\] 
Here, $\lan \cdot,\cdot\ran_\tbf$ defines a discrete inner product on $\Rbb^{2N}$-valued functions of time using $\tbf$ as quadrature nodes. For simplicity and previously demonstrated benefits \cite{MessengerBortz2021JComputPhys,MessengerBortz2021MultiscaleModelSimul}, we use the trapezoidal rule throughout, so that
\[\lan V,X\ran_\tbf := \sum_{i=0}^{m} \frac{\Delta t_i}{2}\left({V_i}\cdot X_i+{V_{i+1}}\cdot X_{i+1}\right)\]
where $V_i := V(t_i)$ and $\Delta t_i := t_{i+1}-t_i$. For compactly supported $V$ (or $X$) in time, this reduces to
\[\lan V,X\ran_\tbf = \sum_{i=1}^{m} \left(\frac{\Delta t_i+\Delta t_{i-1}}{2}\right){V_i}\cdot X_i.\]
In this work, we use the convolutional approach as in\cite{MessengerBortz2021JComputPhys}, that is, we fix a reference test function $\phi\in C^\infty_c(\Rbb)$ supported on $[-T_\phi/2,T_\phi/2]$ for some radius $T_\phi$, and we set the test vector fields to
\[V_k(z,t) = \phi(t-t_k)\sum_{j=1}^{2N} \partial_{z_j}, \qquad 1\leq k\leq K\]
for a fixed set of {\it query timepoints} $\CalQ := \{t_k\}_{k=1}^K$. We then solve the sparse regression problem
\begin{equation}\label{eq:SR}
\what = \argmin_\wbf \|\Gbf\wbf-\bbf\|_2^2+\lambda^2\|\wbf\|_0,
\end{equation}
where in this work we use the MSTLS algorithm to solve \eqref{eq:SR} (modified Sequential Thresholding Least Squares \cite{MessengerBortz2021JComputPhys}), which uses sequential thresholding on the combined term $\|\Gbf_j\wbf_j\|_2/\|\bbf\|_2$ and coefficient magnitudes $|\wbf_j|$. In addition, MSTLS performs a grid search for a suitable value of $\lambda$ (see \cite{MessengerBortz2021JComputPhys} for more details and \cite{BruntonProctorKutz2016ProcNatlAcadSci} for the original STLS algorithm). More information on MSTLS is provided in Appendix \ref{app:MSTLS}.
 
Data from multiple trajectories can easily be incorporated to improve the recovery process. When $L$ trajectories $\Zbf = (\Zbf^{(1)},\dots,\Zbf^{(L)})$ are available, with samples $\Zbf^{(\ell)} = z^{(\ell)}(\tbf^{(\ell)})+\vep$, the linear system $(\Gbf,\bbf)$ is formed by vertically concatenating the linear systems from each trajectory, $\Gbf = [(\Gbf^{(1)})^T\ |\ \cdots\ |\ (\Gbf^{(L)})^T ]^T$, $\bbf = [(\bbf^{(1)})^T\ |\ \cdots\ |\ (\bbf^{(L)})^T ]^T$. Note that the test vector fields in this case need not be the same for each trajectory. In the current work however, we focus on demonstrating recovery of a suitable reduced Hamiltonian $\Hred$ using only a single trajectory.

\subsection{WSINDy for Hamiltonian Systems with Approximate Symmetries}\label{sec:WSINDy4HS}

\begin{figure}
\begin{center}
\begin{tabular}{|@{}c@{}c@{}|}
\hline     \includegraphics[trim={50 0 50 20},clip,width=0.38\textwidth]{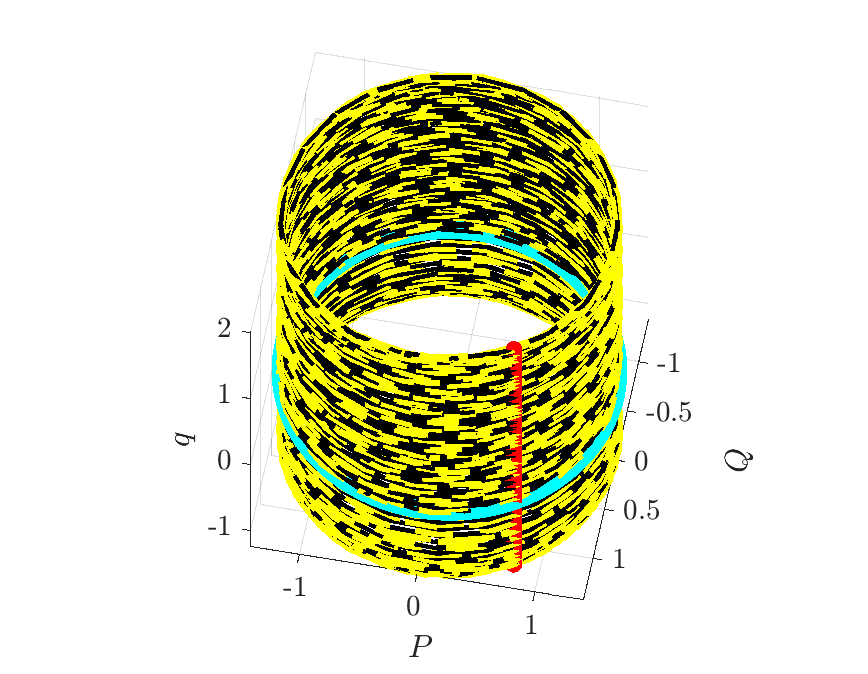} & 
     \includegraphics[trim={0 0 0 0},clip,width=0.5\textwidth]{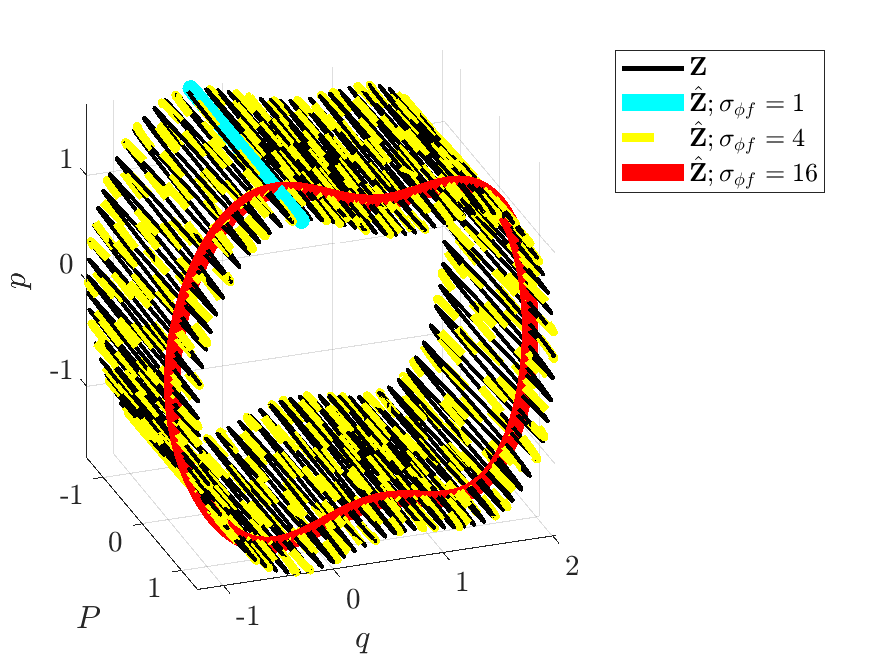} \\ \hline
\end{tabular}
\end{center}
\caption{Trajectories $\widehat{\Zbf}$ (cyan, yellow, red) learned using WSINDy applied to data $\Zbf$ (black) from the nearly-periodic system \eqref{coupledosc_dyn} in Example 1 with $\vep=0.01$, $1\%$ noise, and $z(0)$-index 7 (see Figure \ref{fig:mt}), for $\sigma_{\phi f}\in\{1,4,16\}$ (eq. \eqref{sigmas2}). Left and right: $(Q,P,q)$ and $(P,q,p)$ subdomains of phase space $(Q,P,q,p)\in \Rbb^4$. As $\sigma_{\phi f}$ increases, the learned Hamiltonian transitions from $H_0$ (cyan), to $H_\vep$ (yellow), to $\Hred$ (red), identifying the limiting roto-rate, the full system, and the leading-order reduced dynamics.}
\label{fig:full_sims}
\end{figure}

\begin{figure}
\begin{center}
\begin{tabular}{@{}c@{}c@{}}
     \includegraphics[trim={75 25 80 10},clip,width=0.5\textwidth]{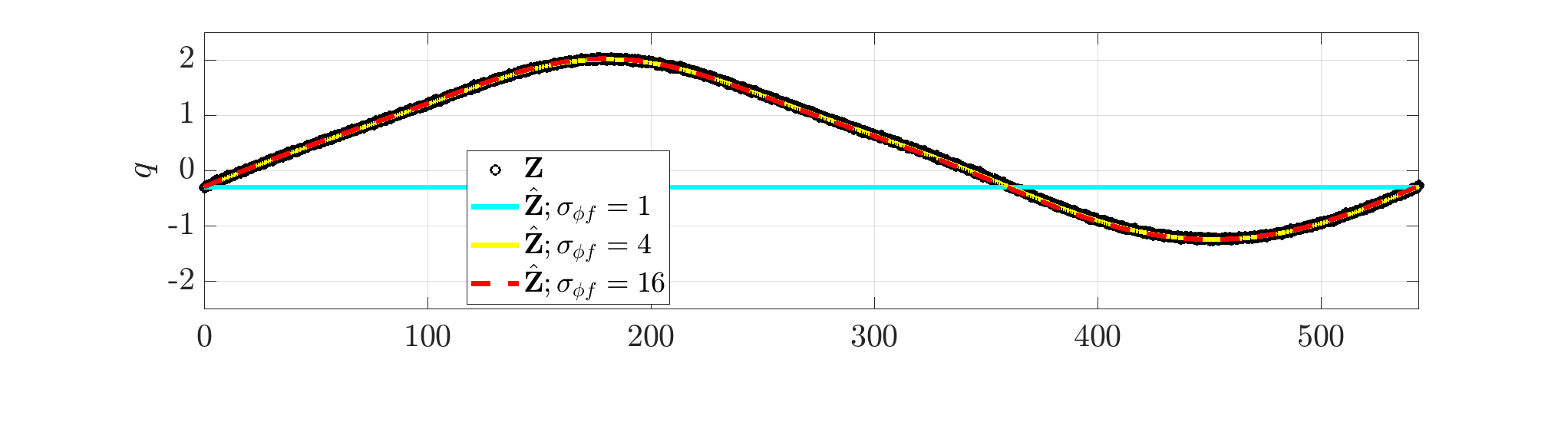}&
     \includegraphics[trim={75 25 80 10},clip,width=0.5\textwidth]{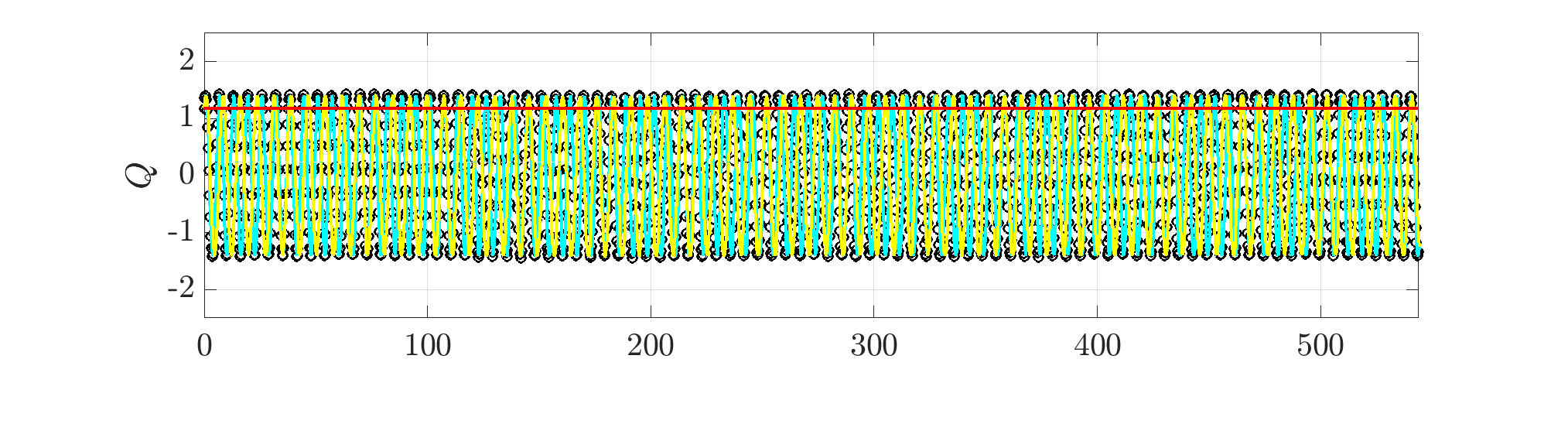}\\
     \includegraphics[trim={75 25 80 10},clip,width=0.5\textwidth]{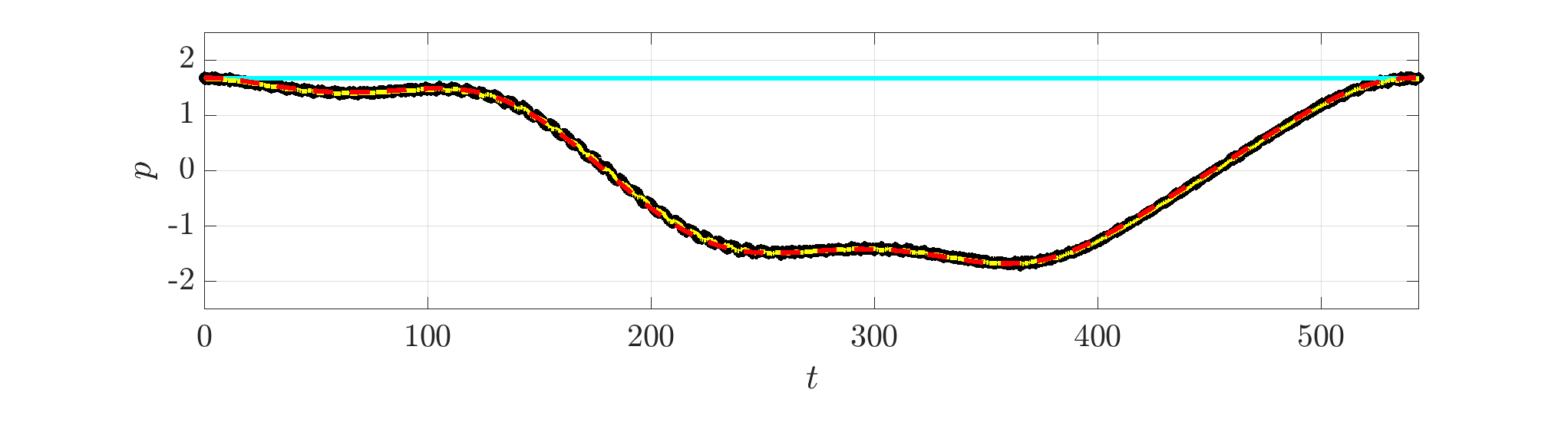}& 
     \includegraphics[trim={75 25 80 10},clip,width=0.5\textwidth]{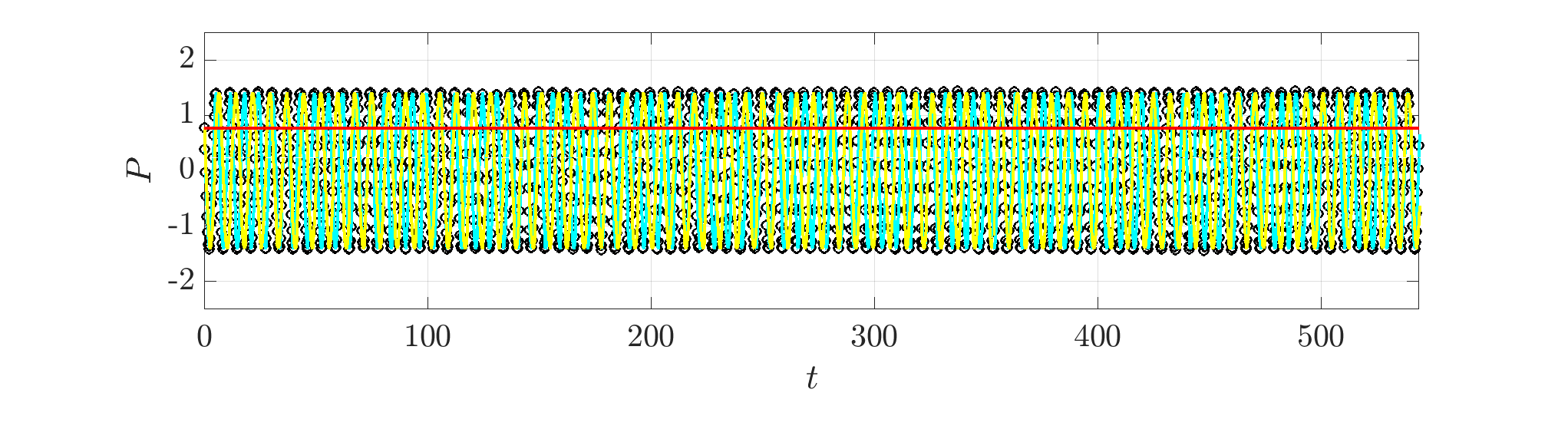} \\ \hline
 \end{tabular}
\end{center}
\caption{Time series plots of the data and learned systems from Figure \ref{fig:full_sims}. The coarse-grained model (red) accurately captures the slow dynamics (left, yellow).}
\label{fig:full_sims_time}
\end{figure}

An intriguing aspect of Hamiltonian systems that exhibit approximate symmetries, from a computational standpoint, is that they typically admit a hierarchy of models which can be used to efficiently express the dynamics in different regimes. These consist of the dynamics of the limiting roto-rate $R_0$, the full Hamiltonian vector field $X_\vep$, and reduced dynamics for each order in $\vep$ obtained from averaging and restricting $X_\vep$ according to the methodology in Section \ref{sec:1stOrderHtheory}. Theorem  \ref{thm_1stOrder} indicates that the leading-order reduced vector-field $\Xred$ is Hamiltonian, and we conjecture that similar conditions exist for higher-order-averaged systems to remain Hamiltonian, justifying a search for Hamiltonian structure in coarse-grained models of $X_\vep$. While the focus of this article is to demonstrate that weak-form methods can be used to identify a coarse-grained Hamiltonian system that agrees with averaging theory, in this Subsection we demonstrate a more general property, that when given access to all state variables, WSINDy allows access to multiple models from the hierarchy associated with nearly-periodic Hamiltonian systems, simply by varying the test function radius $T_\phi$. That is, this Subsection provides justification for a complete analysis of this multi-model inference capability, which will be provided in a future work. 


\begin{figure}
\begin{tabular}{@{}cc@{}}
\includegraphics[trim={0 0 0 0},clip,width=0.5\textwidth]{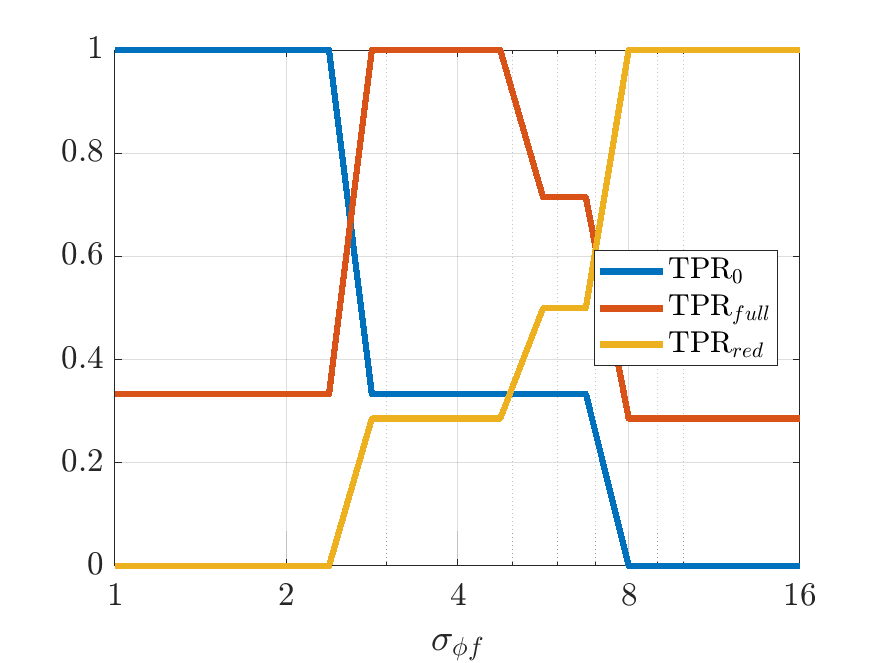} &
\includegraphics[trim={0 0 0 0},clip,width=0.485\textwidth]{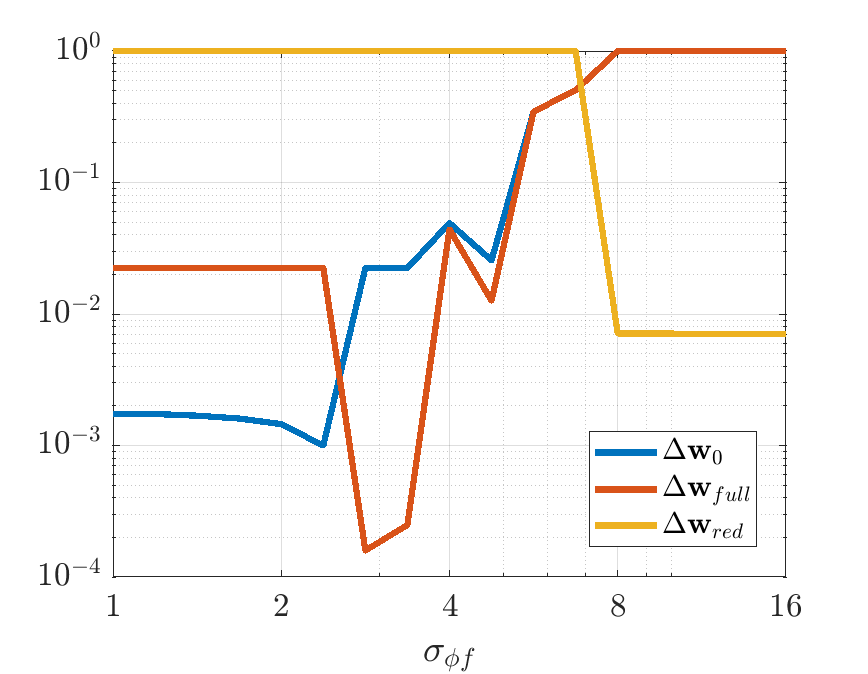}\\ \hline
\end{tabular}
\caption{Transitions in the WSINDy output from $H_0$ (blue), to $H_\vep$ (red), to $\Hred$ (yellow) as $\sigma_{\phi f}$ (eq. \eqref{sigmas2}) is increased using data as in Figure \ref{fig:full_sims}. Results are averaged over 100 instantiations of noise. The left and right plots show the TPR values (eq.\ \eqref{tpr}, a TPR of 1 indicates identification of the true model) and coefficient errors $\Delta \wbf_{(*)}:=\nrm{\what-\wbf_{(*)}}_2/\nrm{\wbf_{(*)}}_2$ for coefficients $\wbf_{(*)}$ defining the Hamiltonians $H_0$, $H_\vep$, and $\Hred$, associated with subscripts ``0'', ``full'', and ``red'' respectively.}
\label{fig:full_results}
\end{figure}

The advantages of weak-form coarse-graining can be clearly observed by examining the performance of WSINDy as a function of the {\it weak-form scale selector} 
\begin{equation}\label{sigmas2}
\sigma_{\phi f} = \frac{T_\phi}{T_f}
\end{equation}
where $T_\phi$ is the support length (in time) of the test function $\phi$ and $T_f$ is the dominant period of the fast scale. In words, $\sigma_{\phi f}$ is the number of fast periods that occur in one integration of the dynamics against $\phi$. Figures \ref{fig:full_sims} and \ref{fig:full_results} depict the performance of WSINDy as $\sigma_{\phi f}$ is varied from 1 to 16 (i.e.\ $T_\phi$ is varied from $T_f$ to $16T_f$) when applied to the full 4-dimensional nearly-periodic Hamiltonian system \eqref{coupledosc_dyn} in Example 1. Here the library $\Hbb$ contains only the 7 terms necessary to represent all three relevant models, that is, the leading order dynamics given by $H_0 = \frac{1}{2}(Q^2+P^2)$, the full dynamics given by $H_\vep$ (eq. \eqref{eq:Hvepexp1}), and the reduced Hamiltonian $\Hred$ (eq. \eqref{eq:Hcalexp1}). Figure \ref{fig:full_results} demonstrates that simply by varying the support of the test function, we gain access to all three models. The added 1\% noise affects the transition between models (Figure \ref{fig:full_results}, left; with lower noise requiring larger $\sigma_{\phi f}$ to obtain the reduced model) as well as the recovered coefficient accuracy (Figure \ref{fig:full_results}, right), which is less that $1\%$ for each model in its range of validity (i.e.\ when $\text{TPR}=1$, see eq. \eqref{tpr}). Figures \ref{fig:full_sims} and \ref{fig:full_sims_time} and provide visualizations of the learned dynamics $\widehat{\Zbf}$ in phase space and as time series. It is clear from the plots over $(P,q,p)$ and $(Q,P,q)$ space (Figure \ref{fig:full_sims}) that the dynamics can be approximately described as two commutative flows given by the coarse-grained (red) and roto-rate (cyan) dynamics. Meanwhile, the time series plots of the slow variables (\ref{fig:full_sims_time}, left) indicate that the coarse-grained dynamics (red) very accurately match the full model (yellow). Hence, from a single dataset, the weak form allows access to three models that are fundamental to understanding a nearly-periodic Hamiltonian system.

For the remainder of the article, we focus solely on the ability of WSINDy to identify a suitable coarse-grained Hamiltonian model in the more realistic setting of having access only to the slow degrees of freedom, as the perturbation level $\vep$, the noise level $\sigma_{NR}$, and the region of phase space are varied.

\section{Numerical Experiments}\label{sec:numerics}



To exhibit the efficacy of weak-form coarse-graining in a variety of contexts, we examine the following nearly periodic Hamiltonian systems. In each case we apply the geometric reduction procedure outlined in Section \ref{sec:1stOrderHtheory} to derive the reduced Hamiltonian system to be discovered. Note that in Examples 1 and 2 the coefficients of the reduced system are given in terms of special functions and integrals which must be numerically evaluated.

\subsection{Example 1: Nonlinearly Coupled Oscillators}

\begin{figure}
\begin{tabular}{@{}c@{}c@{}}
		\includegraphics[trim={90 0 90 10},clip,width=0.5\textwidth]{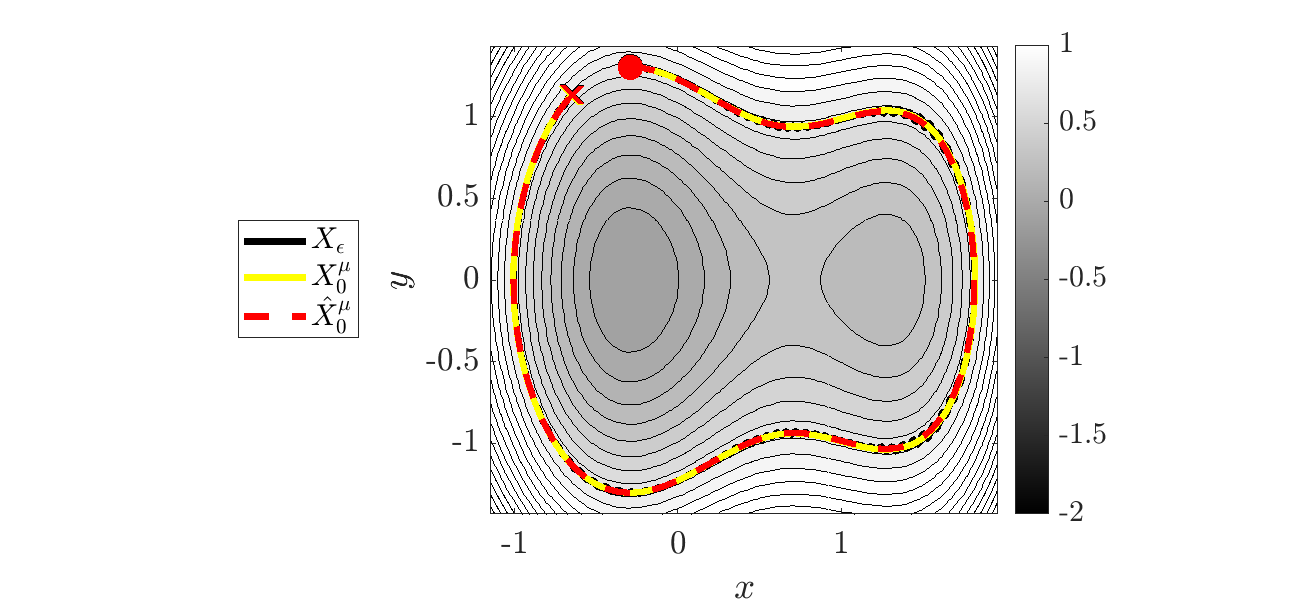} &
	    \includegraphics[trim={90 0 90 10},clip,width=0.5\textwidth]{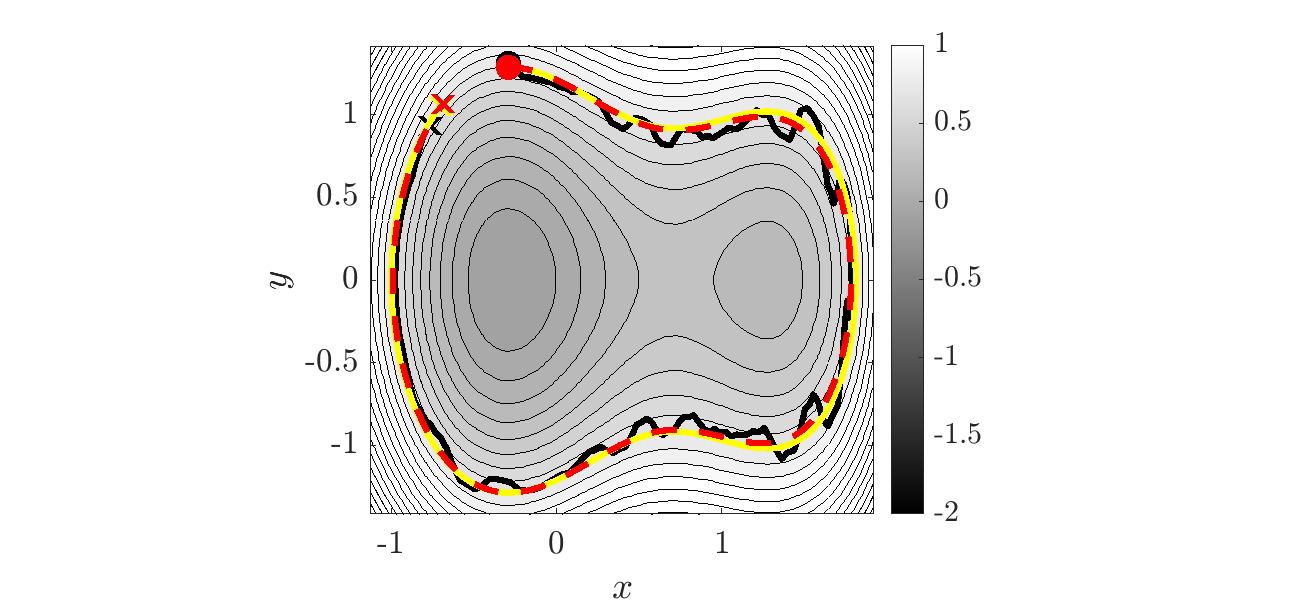} \\ \hline
\end{tabular}
\caption{Dynamics of Example 1 in reduced phase space, with true dynamics $X_\vep$ in black, leading-order reduced dynamics $\Xred$ in yellow, and learned reduced dynamics $\Xredhat$ in red (initial and final conditions are marked with dots and x's), overlaying contours of the learned reduced Hamiltonian $\Hredhat$. Left: $\vep= 0.01$ dynamics, right: $\vep=0.05$. At $\vep=0.05$ the nonlinear coupling between the fast and slow oscillators results in irregular perturbations to the slow variables.}
\label{fig:exp1_sim}
\end{figure}

For our first system we examine two coupled oscillators with different timescales and a nonlinear coupling potential. The $\vep$-dependent Hamiltonian and sympletic form on $\Rbb^4$ are given by
\begin{equation}\label{eq:Hvepexp1}
    H_\vep (Q,P,q,p) = \frac{1}{2} (Q^2 +P^2 ) + \frac{1}{2} \vep (q^2 + p^2 )  + \vep V(Q,q), \quad \Omega = dQ\wedge dP+dq\wedge dp 
\end{equation}
with coupling potential given by
\begin{equation}
	V(Q,q) = Qq \sin{(2Q + 2q)}.
\end{equation}
Note that $\Omega_\vep = \Omega_0=\Omega$. The variables $(Q,P)$ denote the position and momentum of a fast oscillator with dynamics on $\CalO(1)$ timescales, while $(q,p)$ denotes the slow variables evolving on a timescale of $\CalO(\vep)$. The equations of motion for this two-oscillator system are
\begin{equation} \label{coupledosc_dyn}	\begin{cases}
		\dt{Q} = P, \qquad & \dt{P} = -Q - \vep \partial_{Q} V(Q,q) = -Q - \vep\,\bigg(q\,\sin{(2Q + 2q)} + 2\,Q\,q\,\cos(2Q + 2q)\bigg) \\ \dt{q} = \vep p, \qquad & \dt{p} = -\vep q - \vep \partial_{q} V(Q,q) = -\vep\,q - \vep\bigg(Q\, \sin{(2Q + 2q)} + 2\,Q\,q\,\cos(2Q+2q)\bigg).
\end{cases} \end{equation}
As $\vep$ increases, the fast variables $(Q,P)$ impart large perturbations to the oscillating motion of the slow variables $(q,p)$ by means of  $V$ (see Figure \ref{fig:exp1_sim}, right). At $\vep = 0$ the two oscillators decouple and the dynamics are governed by the limiting roto-rate
\[R_0 = P\partial_Q-Q\partial_P,\]
which we identify as the Hamiltonian flow on $(\Rbb^4,\Omega)$ of the leading-order adiabatic invariant
\begin{align*}
\mu_0(Q,P,q,p) = \frac{1}{2}(Q^2 + P^2).
\end{align*}
In this case the limiting frequency function $\omega_0$ such that $X_0=\omega_0R_0$ is unity. 
Hence, to leading order $(Q,P)$ exhibits periodic clockwise circular rotation while $(q,p)$ remains fixed. The flow-map $\Phi_t^0$ for $R_0$ is thus given by 
\begin{equation}
\begin{pmatrix} Q\\ P\end{pmatrix} \mapsto \CalR(t)\begin{pmatrix} Q\\ P\end{pmatrix}, \quad \begin{pmatrix} q\\ p\end{pmatrix} \mapsto \begin{pmatrix} q\\ p\end{pmatrix};\qquad 
    \qquad \CalR(t) = \begin{pmatrix}
		\cos(t) & \sin(t) \\ -\sin(t) & \cos(t)\end{pmatrix}.
\end{equation}
Each level set of $\mu_0$ is a single orbit of $R_0$ crossed with the slow variables $(q,p)\in \Rbb^2$, hence the quotient procedure $M\to \Lambda_\mu\to\Mred$ produces $\Mred =(q,p)\in \Rbb^2$. Thus, the leading-order reduced Hamiltonian $\Hred(q,p)$, is obtained from averaging $H_\vep$ around $\Phi_t^0$ starting from an arbitrary point $(Q,P,q,p)$ satisfying $\mu = \mu_0(Q,P,q,p)$:
\begin{align}
\Hred(q,p) &= \frac{1}{2\pi}\int_0^{2\pi} H_\vep(\Phi_t^0(Q,P,q,p))\,dt,\\
& =\mu + \vep\, \frac{1}{2}(q^2 + p^2) + \vep\,\frac{1}{2\pi} \int_0^{2\pi}{V(Q \cos{t} + P \sin{t} , q) dt}   \\ 
&= \mu + \vep\, \frac{1}{2}(q^2 + p^2)  +  \vep\,\frac{q}{2\pi}\int_0^{2\pi}{ \left( Q \cos{t} + P \sin{t} \right)  \sin{\left(2\left( Q \cos{t} + P \sin{t} \right)+ 2q \right)} dt}  \\
\label{eq:Hcalexp1}&= \mu + \vep\, \frac{1}{2}(q^2 + p^2)  +  \vep\sqrt{ 2\mu}   \ \mathcal{J}_1\left(2 \sqrt{ 2\mu} \right)\,Q \cos(2 Q)   ,
\end{align}
where $\mathcal{J}_1(z)$ denotes the Bessel function of the first kind.
The leading-order reduced equations of motion for the slow variables $(q,p)$ are therefore given by
\begin{align}
\dt{q} & = \partial_{p}\Hred =\vep\,p \label{reduced_q2}\\
\dt{p} & = -\partial_{q}\Hred = -\vep\,q - \vep\,\sqrt{ 2\mu}   \ \mathcal{J}_1\left(2 \sqrt{ 2\mu} \right)\,\cos(2q) + 2\,\vep\,\sqrt{ 2\mu}   \ \mathcal{J}_1\left(2 \sqrt{ 2\mu} \right)\,q\,\sin(2q).\label{reduced_p2}
\end{align}

\subsection{Example 2: H\'enon-Heiles Embedded Pendulum}

\begin{figure}
\begin{tabular}{@{}c@{}c@{}}
		\includegraphics[trim={90 0 60 10},clip,width=0.5\textwidth]{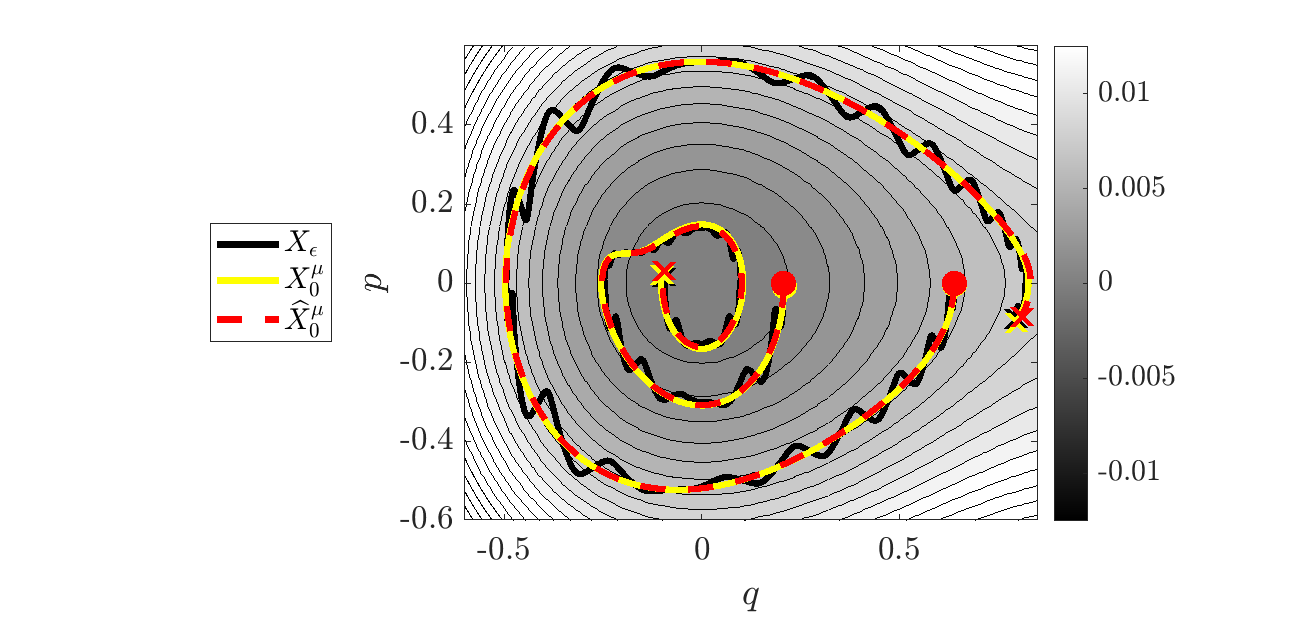} &
	    \includegraphics[trim={90 0 60 10},clip,width=0.5\textwidth]{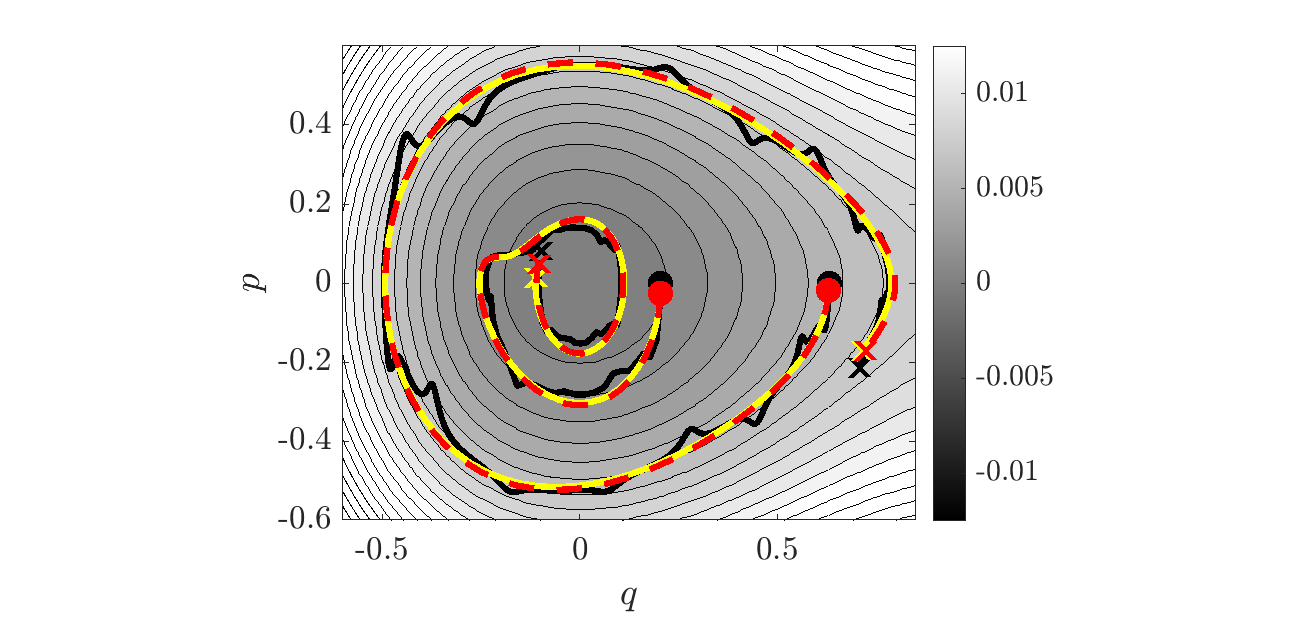} \\ \hline
\end{tabular}
\caption{Reduced dynamics of Example 2 with $\vep= 0.05$ and different initializations of the embedded pendulum. Left: $(Q(0),P(0)) = (\frac{1}{2}\pi,0)$, with pendulum dynamics close to linear. Right: $(Q(0),P(0)) = (\frac{31}{32}\pi,0)$, with the pendulum initialized close to the unstable fixed, removing the strict scale separation (see Figure \ref{fig:exp2_fft}). For each, $(q_1,p_1)$ and $(q_2,p_2)$ are plotted as separate trajectories overlaying $\Hredhat(0,0,q_2,p_2)$ (coloring conventions are the same as in Figure \ref{fig:exp1_sim}). }
\label{fig:exp2_sim}
\end{figure}

Let $z = (Q,P,q_1,p_1,q_2,p_2)$ denote the variables of a 3-degree of freedom system with Hamiltonian
\begin{align*}
H_\vep(z) & = H_0(z) + \vep\,H_1(z)\\
& = \frac{1}{2}P^2 + \alpha^2\,(1-\cos\,Q) + \vep\left(\frac{1}{2}p_1^2 + \frac{1}{2}\nu_1(Q)q_1^2 + \frac{1}{2}p_2^2 + \frac{1}{2}\nu_2(Q)\,q_2^2 + V(q_1,q_2,Q)\right),
\end{align*}
and canonical symplectic form $\Omega = dQ\wedge dP + dq_1\wedge dp_1 + dq_2 \wedge dp_2$, again independent of $\vep$. Here $\alpha^2$ is a positive constant and $\nu_1:\mathbb{R}\rightarrow\mathbb{R}$, $\nu_2:\mathbb{R}\rightarrow\mathbb{R}$, $V:\mathbb{R}^3\rightarrow\mathbb{R}$ are each fixed smooth functions. The corresponding equations of motion are
\begin{align}
\dt{P} &= -\alpha^2\,\sin\,Q - \vep\,\bigg(\frac{1}{2}\frac{d\nu_1}{dQ}\,q_1^2 + \frac{1}{2}\frac{d\nu_2}{dQ}\,q_2^2 + \partial_QV\bigg)\label{ae_micro_1}\\
\dt{Q} & = P\\
\dt{p}_1 & = - \vep\bigg(\nu_1(Q)\,q_1  +\partial_{q_1}V\bigg)\\
\dt{q}_1 & = \vep\,p_1\\
\dt{p}_2 & = -\vep\bigg(\nu_2(Q)\,q_2 + \partial_{q_2}V\bigg)\\
\dt{q}_2 & = \vep\,p_2.\label{ae_micro_6}
\end{align}
On the subset of phase space $
S_0 = \left\{z\ :\ \frac{1}{2}P^2 < \alpha^2(1+\cos\,Q)\right\}$, equations  \eqref{ae_micro_1}-\eqref{ae_micro_6} define a fast-slow system. The limiting angular frequency function (see Definition \ref{def:nphs}) is
\begin{align}
\omega_0(z) =\frac{4}{\alpha}\int_0^1\frac{ds}{\sqrt{1 - \frac{H_0(z)}{2\alpha^2}s^2}\sqrt{1-s^2}}= \frac{4}{\alpha}K\left({\frac{H_0(z)}{2\alpha^2}}\right)
\end{align}
where $K$ denotes the complete elliptic integral of the first kind (defined in e.g.\ \cite{AbramowitzStegun1972}; note that within $S_0$ the argument for $K$ is in the interval $[0,1]$). The limiting roto-rate is
\begin{align*}
R_0  = -\frac{\alpha^2}{\omega_0}\,\sin Q\,\partial_{P} + \frac{P}{\omega_0}\,\partial_Q.
\end{align*}
Let $\Phi_\theta$ denote the time-$\theta$ flow map for $R_0$. The leading-order adiabatic invariant is given by
\begin{align*}
\mu_0(z) &= \bigg(\frac{1}{2\pi}\int_0^{2\pi}\Phi_\theta^*(P\,dQ + p_1\,dq_2 + p_2\,dq_2)\,d\theta\bigg)_z(R_0(z))\\
& = \frac{1}{2\pi}\int_0^{2\pi}(P\,dQ + p_1\,dq_2 + p_2\,dq_2)_{\Phi_\theta(z)}(T_z\Phi_\theta R_0(z))\,d\theta\\
& = \frac{1}{2\pi}\int_0^{2\pi} \bigg(\iota_{R_0}(P\,dQ + p_1\,dq_2 + p_2\,dq_2)\bigg)(\Phi_\theta(z))\,d\theta\\
& = \frac{1}{2\pi}\int_{\gamma(z)}P\,dQ.
\end{align*}
where $\gamma(z)$ is the parameterized curve $\gamma(z)(\theta) = \Phi_\theta(z)$. This integral can also be expressed in terms of elliptic integrals as
\begin{align*}
\mu_0(z) = 16\,\alpha\,E\left({\frac{H_0(z)}{2\alpha^2}}\right) + \bigg(8\frac{H_0(z)}{\alpha} - 16\,\alpha\bigg)K\left({\frac{H_0(z)}{2\alpha^2}}\right),
\end{align*}
where $E$ denotes the complete elliptic integral of the second kind. 

Like Example 1, $\mu_0$ is only a function of the fast variables $(Q,P)$. For $Q$ mod 2$\pi$, level sets of $\mu_0$ contain a single $R_0$ orbit, so again the quotient procedure $M\to \Lambda_\mu\to\Mred$ simply eliminates the fast variables $(Q,P)$. Let $(Q_\theta,P_\theta)$ be defined by $\Phi_\theta(z) = (Q_\theta,q_1,q_2,P_\theta,p_1,p_2)$. The first-order averaged Hamiltonian $\Hav$ and reduced Hamiltonian $\Hred$ are then given by
\begin{align*}
\Hav(z) = H_0(z)+\vep\,\bigg(\frac{1}{2}p_1^2 + \frac{1}{2}\overline{\nu}_1(\mu)\,q_1^2 + \frac{1}{2}p_2^2 + \frac{1}{2}\overline{\nu}_2(\mu)\,q_2^2 +\overline{V}(q_1,q_2,\mu) \bigg) = H_0(z) +\vep \Hred(q_1,p_1,q_2,p_2) 
\end{align*}
where
\begin{equation}
\overline{\nu}_1(\mu) = \frac{1}{2\pi}\int_0^{2\pi} \nu_1(Q_\theta)\,d\theta,\quad 
\overline{\nu}_2(\mu) = \frac{1}{2\pi}\int_0^{2\pi} \nu_2(Q_\theta)\,d\theta,\quad 
\overline{V}(q_1,q_2,\mu) = \frac{1}{2\pi}\int_0^{2\pi}V(q_1,q_2,Q_\theta)\,d\theta,
\end{equation}
with the integrals taken along any trajectory $(Q_\theta,P_\theta)$ of $R_0$ with $\mu_0(Q_0,P_0) = \mu$. 

We choose the following parametrizations, which lead to the reduced dynamics reproducing the famous H\'enon-Heiles system\cite{HenonHeiles1964AstronJ} proposed as a simple model for the motion of a star within a galaxy: 
\[\nu_1(Q) = \nu_2(Q)=\nu_3(Q)=1 + 2\sin(Q),\quad V(Q,q_1,q_2) = \nu_3(Q)\left(q_1^2\,q_2 -\frac{1}{3} q_2^3\right).\]
Unlike the previous example, here the nonlinear pendulum dynamics lead to anisotropic oscillation frequencies depending on the initial conditions. Correspondingly, the identified reduced system is highly-dependent on which integral curve of the roto-rate the initial conditions lie. For $(Q(0),P(0))$ close to the unstable fixed-point $Q(0) = \pi$, the fast dynamics become increasingly nonlinear and exhibit a longer period. This leads to a blending of the fast and slow scales, as visualized in Figure \ref{fig:exp2_fft} in Fourier space. The right plots show the power spectrum of the trajectories at $Q(0) = \pi/2$ (top) and $Q(0)=31\pi/32$ (bottom), where the latter clearly has much less of a separation of scales. This example demonstrates that WSINDy is still able to identify the leading-order reduced dynamics from systems with a complex microstructure.

\begin{figure}
\begin{center}
\begin{tabular}{@{}c@{}c@{}}
    \includegraphics[trim={0 0 0 0},clip,width=0.4\textwidth]{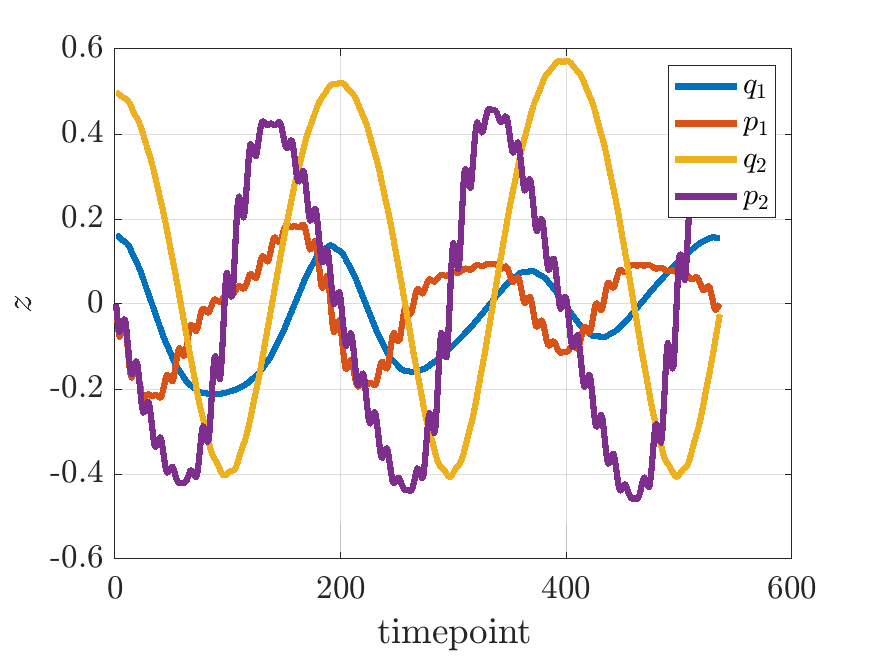} &
		\includegraphics[trim={0 0 0 0},clip,width=0.4\textwidth]{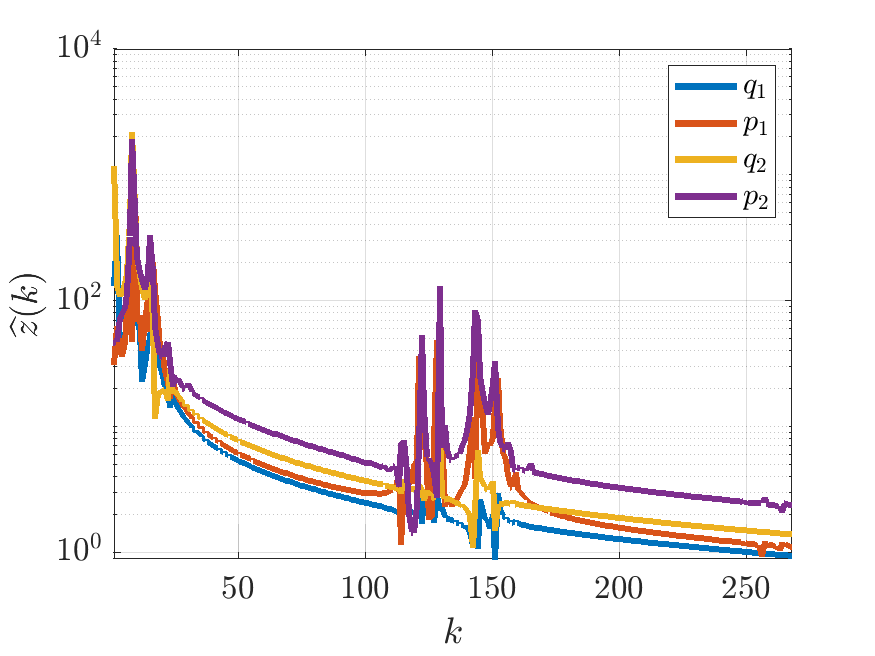} \\
		\includegraphics[trim={0 0 0 0},clip,width=0.4\textwidth]{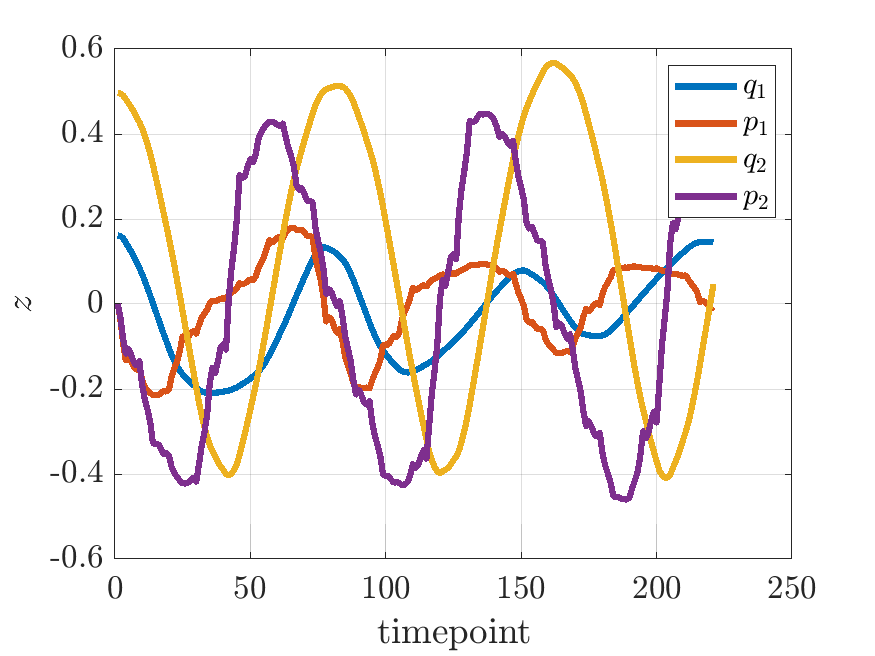} &
		\includegraphics[trim={0 0 0 0},clip,width=0.4\textwidth]{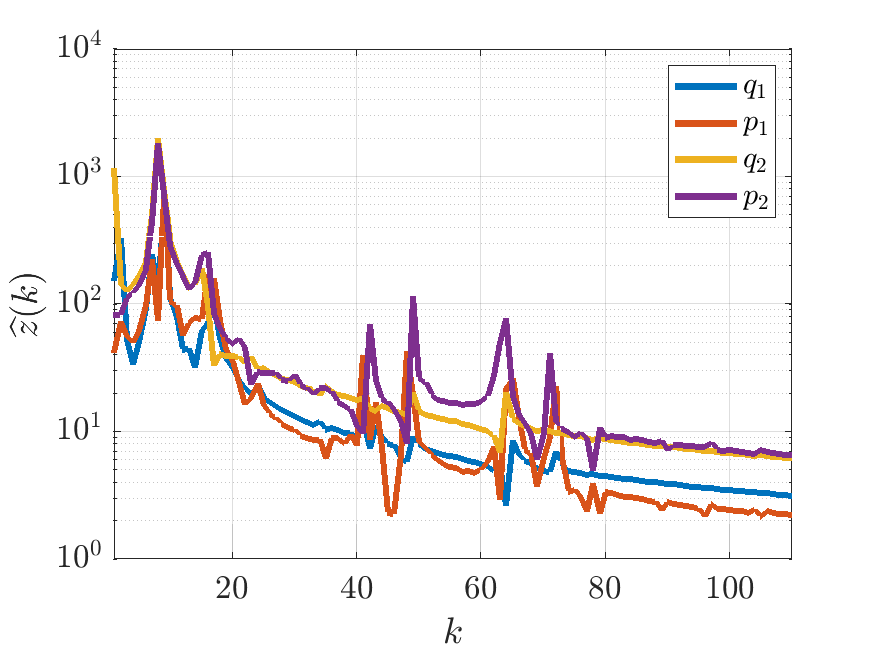} \\ \hline
\end{tabular}
\end{center}
\caption{Time series of the slow variables for Example 2 with $\vep=0.05$ (left) along with power spectra (right). The top plots show $Q(0) = \pi/2$, and an observable separation of scales (top right). The separation of scales breaks down at $Q(0) = 31\pi/32$ (bottom row) as the embedded pendulum dynamics approach the separatrix, as observed in the power spectrum.}
\label{fig:exp2_fft}
\end{figure}

\subsection{Example 3: Charged particle motion}

\begin{figure}
\begin{tabular}{@{}c@{}c@{}}
		\includegraphics[trim={25 0 35 10},clip,width=0.485\textwidth]{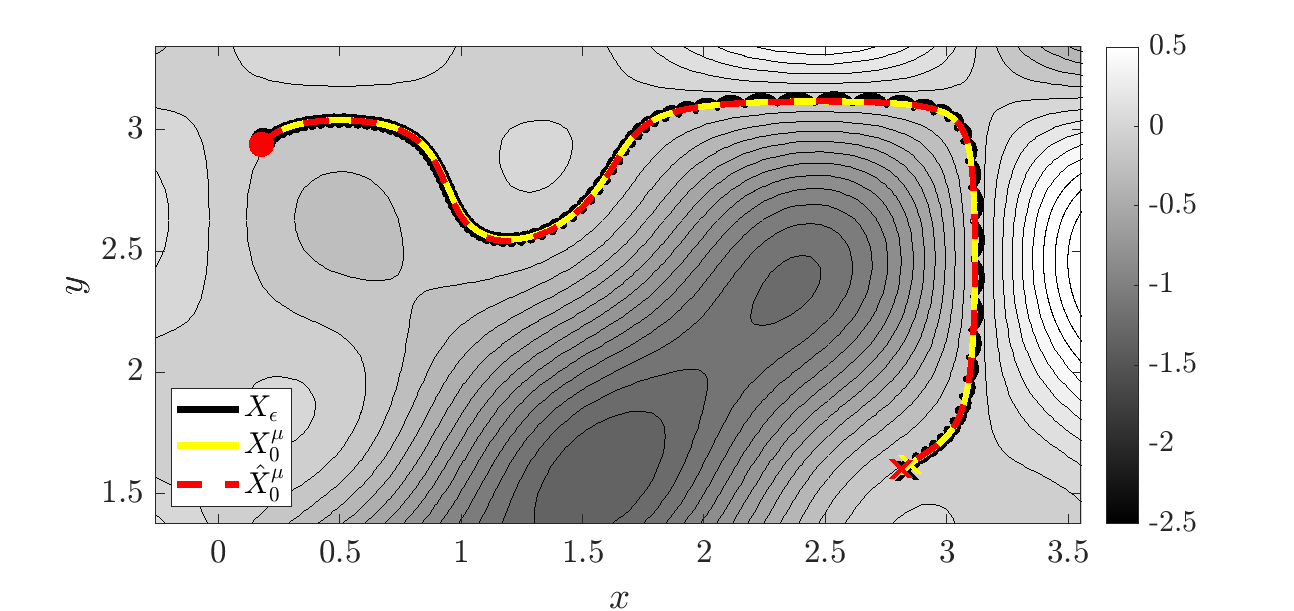} &
	    \includegraphics[trim={25 0 35 5},clip,width=0.5\textwidth]{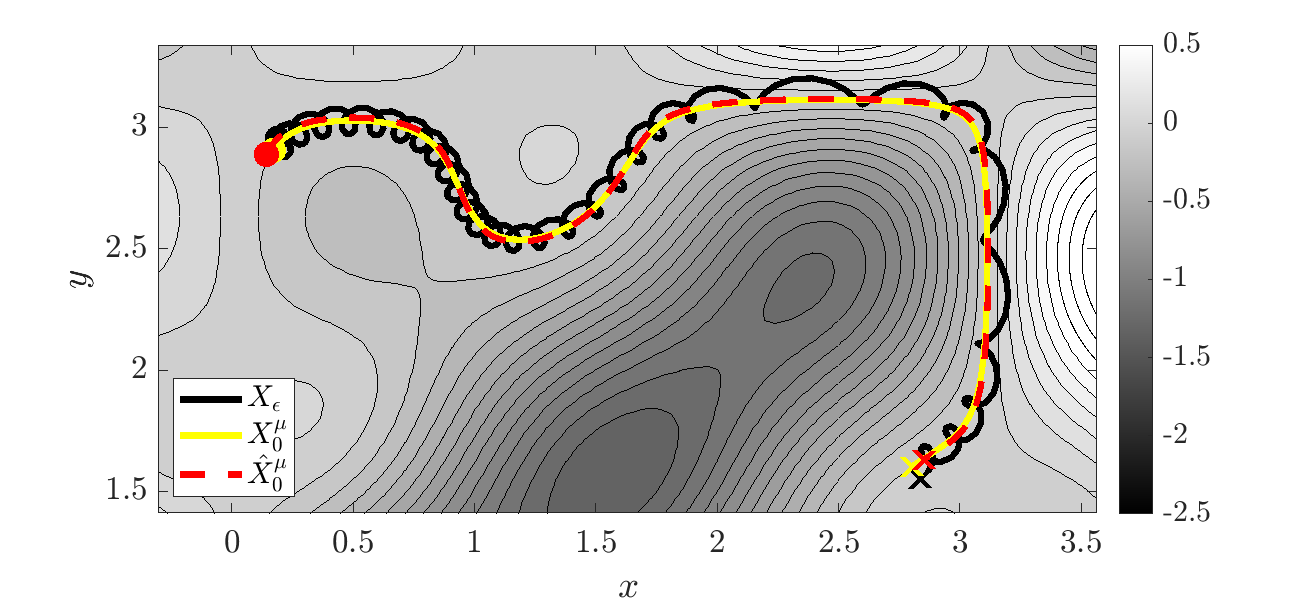} \\ \hline
\end{tabular}
\caption{Dynamics of Example 3 in reduced phase space $(x,y)\in \Rbb^2$. Left: $\vep= 0.01$ dynamics, right: $\vep=0.03$. On the right one can observe irregular perturbations as the unreduced dynamics (black curve) jump between contours of the electric potential energy $\varphi$ shown in gray. (Coloring conventions are the same as in Figure \ref{fig:exp1_sim}). }
\label{fig:exp3_sim}
\end{figure}

Consider the motion of a charged particle with position $(x,y,z)$ and velocity $(v_x,v_y,v_z)$ in a time-independent electromagnetic field of the form $\bm{B}(x,y,z) = \nabla \times \Abf(x,y,z)$, $\bm{E}(x,y,z) = -\nabla\varphi(x,y,z)$. If the field potentials are independent of $z$, that is, $\Abf = (A_x(x,y),A_y(x,y),0)$ and $\varphi = \varphi(x,y)$, then the $z$-component of the particle momentum is a constant of motion. Therefore the Lorentz force equation describing the particle's motion may be written, for $B(x,y) = \partial_xA_y-\partial_yA_x$, as
\begin{align*}
\dt{v}_x &= -\partial_x\varphi(x,y) +v_y\,B(x,y)\\
\dt{v}_y & = -\partial_y\varphi(x,y) - v_x\,B(x,y)\\
\dt{x} & = \vep\,v_x\\
\dt{y} & = \vep\,v_y.
\end{align*}
This is an $\vep$-dependent Hamiltonian system on $\mathbb{R}^4\ni (x,y,v_x,v_y)$ if the Hamiltonian and symplectic form are given by
\begin{align*}
&H_\vep(x,y,v_x,v_y)     = \vep^2\,\frac{1}{2}(v_x^2 + v_y^2  ) + \vep\,\varphi(x,y) \\
&\Omega_\vep = -B(x,y)\,dx\wedge dy + \vep\,(dx\wedge dv_x + dy\wedge dv_y).
\end{align*}
Letting $q = (x,y)$, $v = (v_x,v_y)$, the flow map of the roto-rate at $\vep=0$ is given by
\begin{equation}
\Phi_t^0(q,v) =\Big( q,\ D(q) + \CalR(t)(v-D(q))\Big), \qquad D(q) := (-\partial_y\varphi/ B ,\ \partial_x\varphi / B)
\end{equation}
with $\CalR$ the rotation matrix from Example 1. Using the formulas in \cite{BurbySquire2020JPlasmaPhys}, the adiabatic invariant is degenerate to first order, with $\mu_* = \mu_2$ (see \eqref{av_ad_inv}) given by 
\begin{align*}
\mu_2(q,v) &= \frac{1}{2}\frac{|v - D(q)|^2}{B(q)}.
\end{align*}
The first-order averaged system $\Xav$ is automatically presymplectic Hamiltonian according to Lemma \ref{level_set_ham}, with Hamilton's equations \eqref{av_ham_eqn_unreduced} for $\Hav=\vep\,\varphi(x,y)$ and $\Omega_0=- B(x,y)\,dx\wedge dy$ producing the first-order averaged dynamics 
\begin{equation}
\dt{x} = -\vep\,\frac{\partial_y \varphi(x,y)}{B(x,y)}, \qquad
\dt{y}  = \vep\,\frac{\partial_x \varphi(x,y)}{B(x,y)}.
\end{equation}
Restriction to a level set of $\mu_2$ merely restricts $v$-space, which is already absent from dynamics, and the flow of the roto-rate fixes $q$-variables, so we have $\Hred = \Hav$, $\Ored = \Oav$. 


A simple test case for this problem is a constant magnetic field. This corresponds physically to zero current density, as this reduces Amp\'ere's law to $\nabla \times \pmb{B} = 0$, under which the assumption $\pmb{B} = B(x,y)\hat{z}$ leads to $B = const$. We set $B=1$ throughout, which for example corresponds to the vector potential $\Abf=(\frac{1}{2}x,-\frac{1}{2}y,0)$. We prescribe a sinusoidal ambient electric field potential 
\begin{equation}\label{trigE}
\varphi(x,y) = \sum_{j=1}^3 \frac{1}{j} \sin(jx)\sin(jy)
\end{equation}
which, given Gauss's law $\Delta \varphi = \vep_0^{-1}\rho$ relating the electric potential $\varphi$ to the charge density $\rho$, loosely corresponds to a lattice of bound charges.

As the charged particle approximately travels along contours of $\phi$, the velocity variables $(v_x,v_y)$ exhibit near-periodic fast-scale oscillations. For larger $\vep$, the motion of $(v_x,v_y)$ becomes modulated by the steepness of $\nabla\phi$, which leads to irregular perturbations to the position variables $(x,y)$. This can be seen in Figure \ref{fig:exp3_sim} (right) as the oscillations in the black curve become elongated.

\subsection{Example 4: Coupled charged particle motion}

\begin{figure}
\begin{tabular}{@{}c@{}c@{}}
		\includegraphics[trim={30 0 30 10},clip,width=0.5\textwidth]{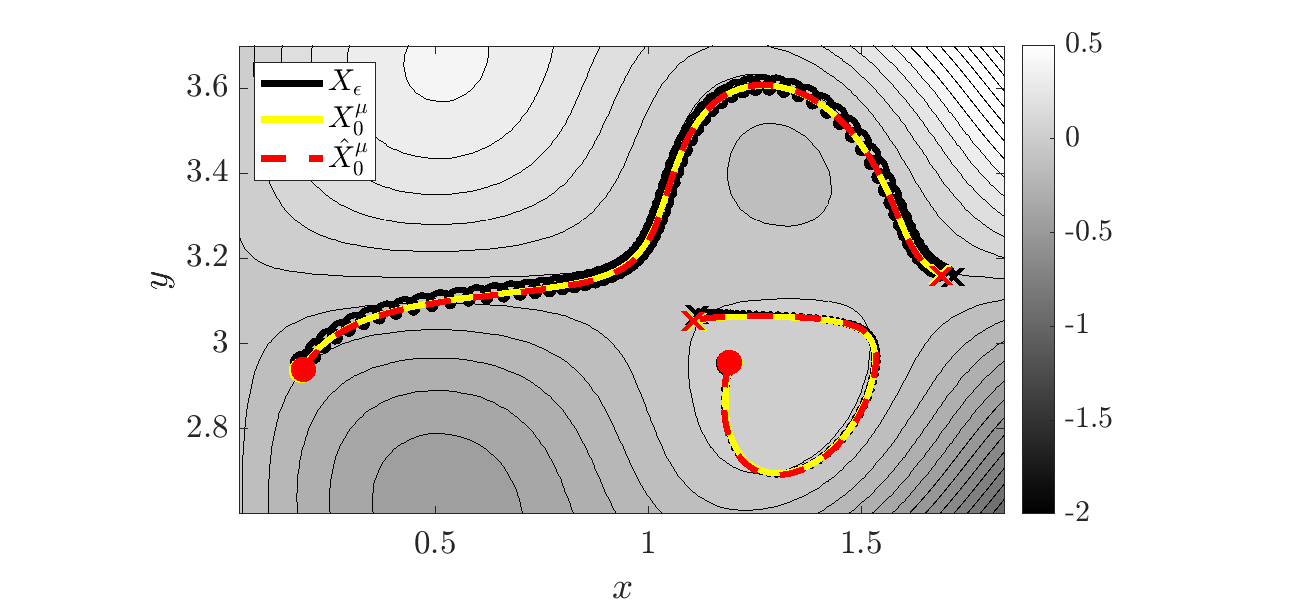} &
	    \includegraphics[trim={30 0 55 10},clip,width=0.5\textwidth]{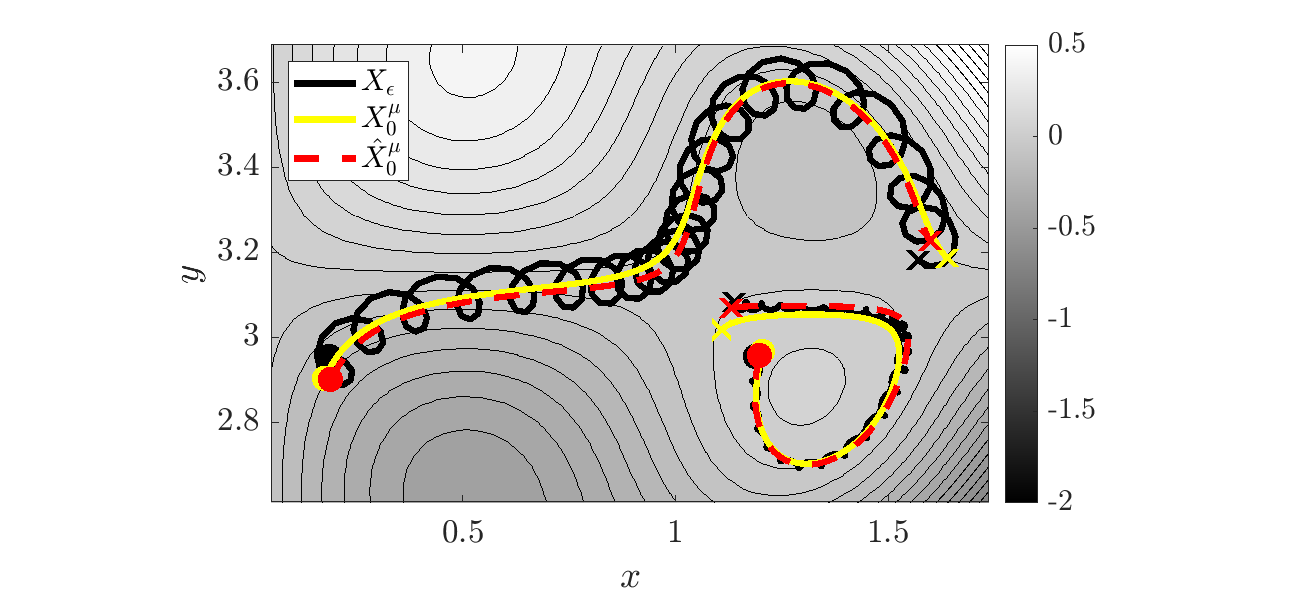} \\ \hline
\end{tabular}
\caption{Dynamics of Example 4 in reduced phase space overlaying contours of the learned background electric field . Left: $\vep= 0.01$ dynamics, right: $\vep=0.03$. Particle two, initialized at $(x_2,y_2) = (1.2,3.0)$, has zero initial velocity and hence smaller perturbations as a result of the Lorentz force.}
\label{fig:exp3_sim}
\end{figure}

The previous example outlines a method of reducing the dimensionality of certain systems by a factor of two. We extend it in our last example to a system of two particles $\{(x_1,y_1,v_{x_1},v_{y_1})$, $(x_2,y_2,v_{x_2},v_{y_2})\}\in \Rbb^{4\times 4}\cong \Rbb^8$ interacting under a pair-wise potential $K$ as well as the same background fields $(\pmb{E},\pmb{B})$ in the previous example. The Hamiltonian describing this motion is 
\begin{align*}
H_\vep      &= \vep^2\,\frac{1}{2}(v_{x_1}^2 + v_{y_1}^2 +v_{x_2}^2 + v_{y_2}^2 ) + \vep\,(\varphi(x_1,y_1) + \varphi(x_2,y_2) + K(x_1,y_1,x_2,y_2)).
\end{align*}
with symplectic form
\begin{align*}
\Omega_\vep &= -\,dx_1\wedge dy_1 -\,dx_2\wedge dy_2 + \vep\,(dx_1\wedge dv_{x_1} + dy_1\wedge dv_{y_1}+dx_2\wedge dv_{x_2} + dy_2\wedge dv_{y_2}).
\end{align*}
If $K$ is the Coulomb potential in $2D$, $K_c(x_1,y_1,x_2,y_2) = Q\log\left(\sqrt{(x_1-x_2)^2+(y_1-y_2)^2}\right)$, the system may be used to classically describe a pair of charged particles ($Q$ corresponding to the product of two charges). To regularize the system we use a crude approximation to $K_c$ given by the first-order Taylor expansion about unit separation
\[K(x_1,y_1,x_2,y_2)= -\frac{Q}{2}\big(1 - (x_1-x_2)^2-(y_1-y_2)^2\big),\]
which provides a reasonable approximation for interparticle distances near unity. We use a repulsive version of this interaction potential with $Q = -(2\pi)^{-1}$, under which the dynamics become
\begin{align*}
\dt{v}_{x_1} &= -\partial_{x_1}\varphi(x_1,y_1)-\partial_{x_1}K(x_1,y_1,x_2,y_2) +v_{y_1} = -\partial_{x_1}\varphi(x_1,y_1)+\frac{1}{2\pi}(x_1-x_2) +v_{y_1}\\
\dt{v}_{y_1} & = -\partial_{y_1}\varphi(x_1,y_1)-\partial_{y_1}K(x_1,y_1,x_2,y_2) - v_{x_1} = -\partial_{y_1}\varphi(x_1,y_1)+\frac{1}{2\pi}(y_1-y_2) - v_{x_1}\\
\dt{x_1} & = \vep\,v_{x_1}\\
\dt{y_1} & = \vep\,v_{y_1},\\
\dt{v}_{x_2} &= -\partial_{x_2}\varphi(x_2,y_2)-\partial_{x_2}K(x_1,y_1,x_2,y_2) +v_{y_2}= -\partial_{x_2}\varphi(x_2,y_2)+\frac{1}{2\pi}(x_2-x_1) +v_{y_2}\\
\dt{v}_{y_2} & = -\partial_{y_2}\varphi(x_2,y_2)-\partial_{y_2}K(x_1,y_1,x_2,y_2) - v_{x_2}= -\partial_{y_2}\varphi(x_2,y_2)+\frac{1}{2\pi}(y_2-y_1) - v_{x_2}\\
\dt{x_2} & = \vep\,v_{x_2}\\
\dt{y_2} & = \vep\,v_{y_2},
\end{align*}
where $\varphi$ is given by \eqref{trigE}. This leads to the leading order roto-rate:
\[ R_0 = (-\partial_{x_1}\varphi(x_1,y_1)+\frac{1}{2\pi}(x_1-x_2) +v_{y_1})\partial_{v_{x_1}} +(-\partial_{y_1}\varphi(x_1,y_1)+\frac{1}{2\pi}(y_1-y_2) - v_{x_1})\partial_{v_{y_1}} \]
\[+\ (-\partial_{x_2}\varphi(x_2,y_2)+\frac{1}{2\pi}(x_2-x_1) +v_{y_2})\partial_{v_{x_2}} +(-\partial_{y_2}\varphi(x_2,y_2)+\frac{1}{2\pi}(y_2-y_1) - v_{x_2})\partial_{v_{y_2}}.\]
Carrying out the same procedures as before (note that presymplectic Hamiltonian dynamics also appear), we arrive at the reduced Hamiltonian
\[\Hred(x_1,y_1,x_2,y_2) = \vep \left(\varphi(x_1,y_1)+\varphi(x_2,y_2) +K(x_1,y_1,x_2,y_2)\right)\]
which under the reduced symplectic form $\Ored = -dx_1\wedge dy_1-dx_2\wedge dy_2$ gives us the reduced Hamiltonian dynamics in $\Rbb^4$,
\begin{align*}
\dt{x_1} & = \vep\,\left( -\partial_{y_1}\varphi(x_1,y_1)+\frac{1}{2\pi}(y_1-y_2)\right)\\
\dt{y_1} & = \vep\,\left(\partial_{x_1}\varphi(x_1,y_1)-\frac{1}{2\pi}(x_1-x_2)\right),\\
\dt{x_2} & = \vep\,\left(-\partial_{y_2}\varphi(x_2,y_2)+\frac{1}{2\pi}(y_2-y_1) \right)\\
\dt{y_2} & = \vep\,\left(\partial_{x_2}\varphi(x_2,y_2)-\frac{1}{2\pi}(x_2-x_1)\right).
\end{align*}

\subsection{Dynamics and sampling regimes}\label{sec:regimes}



Since we are interested in identifying coarse-grained models, we focus on observing only the slow variables and attempt to recover $\Hred$. It is clear from Figure \ref{fig:full_results} that this depends on $\sigma_{\phi f}$ being sufficiently large. In Figure \ref{fig:mt} we will see that this is ensured by choosing $T_\phi$ according to the method described in Appendix \ref{app:cornerpt}, a modification of the procedure in \cite{MessengerBortz2021JComputPhys}. 

To exemplify a range of conditions under which WSINDy identifies the correct coarse-grained dynamics, we quantify the accuracy between the learned  ($\Hredhat$) and true ($\Hred$) reduced Hamiltonian under the following settings:

\begin{enumerate}[label=(\roman*)]
\item {\bf Regions of phase space:} We sample trajectories with initial conditions approaching an elliptic fixed point of $\Hred$, near which the $\vep=0$ dynamics dominate and $\Hred$ fails to be recovered, to probe the near-periodicity of the system. (For Example 4, one particle is placed near an elliptic fixed point and the other a distance one away with zero velocity). 

\item {\bf Perturbative regime:}
We examine two perturbative regimes, labeled the {\it mild} and {\it extreme} regimes. For Examples 1,3, and 4 this is defined by two different values of $\vep$, with $\vep\in \{0.01,0.05\}$ for Example 1 and $\vep\in \{0.01,0.03\}$ for Examples 3 and 4. For Example 2 we fix $\vep$ at the reasonably large value $\vep=0.05$ and define the mild and extreme regimes as the two different sets of initial conditions for the nonlinear pendulum, $(Q(0),P(0)) \in\{(\frac{\pi}{2},0),(\frac{31\pi}{32},0)\}$, the latter driving the system very close to the separatrix at $Q=\pi$, whereby the scale separation disappears (see Figure \ref{fig:exp2_fft}). 
\item {\bf Data sampling regime:}
We fix the timestep $\Delta t$ to be 10 points per fast cycle and the total time window $T$ to be 4 cycles of the slow system, or with $T_s,T_f$ denoting dominant slow and fast periods,
\begin{equation}
T = 4T_s, \qquad \Delta t = T_f / 10.
\end{equation}
From a practical perspective, this is realistic because one cannot expect to sample the fast system at a high resolution; however, this level of resolution is still adversarial because the sampling rate is high enough for WSINDy to resolve the fast system (as evidenced by Figures \ref{fig:full_sims}-\ref{fig:full_results}). With coarser sampling one might expect to more easily recover the reduced dynamics  via aliasing, a property leveraged in \cite{BramburgerDylewskyKutz2020PhysRevE}, although in general, larger $\vep$ leads to mixing of the slow and fast scales over longer time windows. This is why the test function support in Example 2 cannot be taken too large (see Figure \ref{fig:mt}, row 2), hence we conjecture that in general the aliasing approach will lead to inaccurate reduced dynamics. A total time window of 4 cycles ($T= 4T_s$) is used to provide enough data to identify a dominate slow scale (i.e.\ by sampling for longer than a single slow cycle). We also found that $T= 4T_s$ was necessary to identify the dynamics in Example 2 at $Q(0) = 31\pi/32$, as in this case the slow-fast period ratio is $T_{sf} := T_s/T_f\approx 7$, leading to only $278$ timepoints at the resolution $\Delta t = T_f/10$, which did not provide adequate data for robust recovery in the high perturbative regime from time windows $T\leq 3T_s$.
\item {\bf Extrinsic noise:} Realistic measurement settings require consideration of extrinsic (measurement) noise. We define the noise level of the data by 
\begin{equation}\label{sigmas1}
\sigma_{NR} = \frac{\nrm{\Zbf^\star-\Zbf}_2}{\nrm{\Zbf^\star}_2}
\end{equation}
where $\Zbf^\star$ represents the exact (noise-free) data and $\Zbf$ the observed (possibly noisy) data. We consider additive mean-zero i.i.d. Gaussian noise and noise levels up to $\sigma_{NR} = 0.1$ (or 10\% noise), which together with the intrinsic perturbations imparted by the fast scale represents a significant level of corruption. 
\end{enumerate}

\subsection{Hyperparameters}\label{sec:hyperparams}

The WSINDy hyperparameters consist mainly of the library $\Hbb$ and test functions $\Vbb$. The main purpose of this article is not to show robustness to sheer library size, but instead to show that under the weak-form transformations
\[\Zbf \to \lan \dt{V},\Zbf\ran_\tbf, \quad X_{H}(\Zbf) \to \lan V, X_{H}(\Zbf)\ran_\tbf, \quad V\in \Vbb, H\in \Hbb\]
the dynamics agree well with $\Hred$. For this reason, we restrict the library $\Hbb$ to $40-70$ possible terms which include a representation of $\Hred$. 
In Examples 1-4, the leading-order reduced Hamiltonian $\Hred$ can be represented with trigonometric functions, polynomials and products thereof. Define the monomial library of degree $n$ in $2N$ variables by
\begin{equation}\label{pollib}
P^{(n)}_{2N} = \left\{z\to \prod_{i=1}^{2N}z_i^{j_i}\ :\ 1\leq \sum_{i=1}^{2N} j_i\leq n, j_i\in \Nbb\cup \{0\}\right\},
\end{equation}
and the partial cosine library with base frequency $f_0$ up to maximum frequency $nf_0$ as (discarding redundancies)
\begin{equation}\label{coslib}
C^{(n,f_0)}_{2N}= \left\{(q,p)\to \cos(jq_i+kp_i)\pm\cos(jq_i-kp_i)\ :\ (j,k)\in\{0,f_0,\dots,nf_0\}^2, i\in\{1,\dots,N\}\right\}.
\end{equation}
Also define the product linear-trig library with trigonometric terms of frequencies $f_0$ to $nf_0$ by
\begin{equation}\label{prodlib}
LT^{(n,f_0)}_{2N} = \{z\to z_ig(kz_j)\ :\ (i,j)\in \{1,\dots,2N\}^2, g\in \{\cos,\sin\}, k\in \{0,f_0,\dots,nf_0\}\}.
\end{equation}
The libraries used in each example are combinations of $P^{(n)}_{2N}$, $C^{(n,f_0)}_{2N}$, and $LT^{(n,f_0)}_{2N}$, given in Table \ref{libs}.

\begin{table}
\begin{center}
\begin{tabular}{llccl}
Example & $\Hbb$ & $\#\{\Hbb\}$ & $\Hred$ sparsity & $Q$ \\ \hline
1. Coupled Oscillators & $P^{(2)}_2\cup C^{(3,2)}_{2}\cup LT^{(3,2)}_2$  & 41 & 3 & $\Hred$\\
2. H\'enon-Heiles Pendulum & $P^{(4)}_4$ & 69 & 6 & $\Hred(q_1,q_2,0,0)$\\
3. Charged particle & $C^{(4,1)}_{2}$ & 40 & 3 & $\Hred$\\
4. Coupled charged particles & $P^{(2)}_4\cup C^{(3,1)}_{4}$ & 62 & 12 & $\Hred(x_1,y_1,x_1,y_1)$
\end{tabular}
\end{center}
\caption{Algorithmic specifications for examples, including the Hamiltonian libraries $\Hbb$ employed in sparse regression, the total number of terms in $\Hbb$, the number of terms required to represent $\Hred$ from $\Hbb$, and the qauntity of interest $Q$ used to measure agreement with $\Hred$ (see eq. \eqref{delH}). The components $P^{(n)}_{2N}, C^{(n,f_0)}_{2N},LT^{(n,f_0)}_{2N}$ of $\Hbb$ are defined in \eqref{pollib}-\eqref{prodlib}. By $\#\{\Hbb\}$ we denote the cardinality of $\Hbb$.} 
\label{libs}
\end{table}

For the set of test vector fields $\Vbb$, we take the simple convolutional approach as in\cite{MessengerBortz2021JComputPhys}. That is, we fix a reference test function 
\[\phi(t) = \exp\left(\frac{9}{(2t/T_\phi)^2-1}\right)\ind{[-T_\phi/2,T_\phi/2]}(t),\]
which is $C^\infty_c(\Rbb)$ and supported on $[-T_\phi/2,T_\phi/2]$ ($\ind{S}$ denotes the indicator function on the set $S$). We then set the test vector fields to 
\[V_k(z,t) = \phi(t-t_k)\sum_{j=1}^{2N} \partial_{z_j}, \qquad 1\leq k\leq K\]
for a fixed set of {\it query timepoints} $\CalQ := \{t_k\}_{k=1}^K$. The free parameters are $\CalQ$ and the support width $T_\phi$. We choose $T_\phi$ by first finding a cornerpoint $k^*$ in the Fourier spectrum of the data, according to the method in Appendix \ref{app:cornerpt}, and then assigning $k^*$ to be 2 standard deviations into the tail of the power spectrum $|\CalF[\phi(\tbf)]|$, where $\CalF$ is the discrete Fourier transform and $|\CalF[\phi(\tbf)]|$ is interpreted as a probability distribution over Fourier modes. We let $\CalQ$ be equally spaced and covering the given time grid such that $T_\phi/(t_{k+1}-t_k) = 12$ for $1\leq k\leq K$.

\subsection{Performance metrics}\label{sec:metrics}

We measure accuracy of the recovered Hamiltonian $\Hredhat$ with respect to $\Hred$ using three metrics. While all should be considered together to assess performance, each is independently useful and may suffice for specific applications. 

We measure the model selection accuracy using the {\it true positivity ratio} $\text{TPR}(\what)$, defined as 
\begin{equation}\label{tpr}
\text{TPR}(\what) = \frac{TP(\what)}{TP(\what)+FP(\what)+FN(\what)}
\end{equation}
where $TP$ is the number nonzero entries in $\what$ that appear in the correct weight vector $\wstar$, $FP$ is the number of nonzero entries in $\what$ that are zero in $\wstar$, and $FN$ is the number of terms that are zero in $\what$ but nonzero in $\wstar$. In this way, recovering $S$ terms correctly and no false terms leads to $\text{TPR} = S/S^\star$ where $S^\star = \nrm{\wstar}_0$ is the number of true correct terms.

To assess pointwise agreement with $\Hred$, we measure the relative error of a related quantity of interest $Q(z)$, 
\begin{equation}\label{delH}
\Delta H(\what) = \nrm{Q -\widehat{Q}}_{\ell_2(D(\Zbf))}/\nrm{Q}_{\ell_2(D(\Zbf))}
\end{equation} 
where $\widehat{Q}$ is the learned quantity of interest, $D(\Zbf)$ is an equally-spaced computational grid covering the smallest rectangle in phase space containing the observed data $\Zbf$, and $\nrm{Q}_{\ell_2(D(\Zbf))} = \sqrt{\sum_{z\in D(\Zbf)} Q(z)^2}$. The quantities $Q$ used for each example are listed in Table \ref{libs}, with $Q = \Hred$ for Examples 1 and 3. For Example 2 we use the zero-momentum section, $Q(z) = \Hred(q_1,q_2,0,0)$, which captures agreement with the potential field $\overline{V}$. In Example 4 we use agreement with the $(x_1,y_1) = (x_2,y_2)$ section, $Q(z) =\Hred(x_1,y_1,x_1,y_1)$, which measures agreement with the background electric field $\varphi(x,y)$.

Lastly, we measure agreement with forward simulations of the Hamiltonian dynamics given by $\Hred$ and $\Hredhat$. This is a more subtle performance metric because the initial conditions for the reduced system are not simply the restriction of the initial conditions to the slow variables. In addition, the full dynamics may be chaotic, which may also lead to chaos in the reduced system if the reduced system has two or more degrees of freedom (e.g.\ Examples 2 and 4). To find accurate initial conditions we be do a minimal grid as described in Appendix \ref{app:IC}. To deal with chaos, we only measure agreement up to the first full period of the observed variables, $T_s$, defined as the period of the dominant Fourier mode in the dynamics. Letting $\Zbf$, $\Zbf^\mu$, and $\widehat{\Zbf}^\mu$ denote the observed data, the forward simulation from the true reduced system $\Hred$, and the forward simulation from $\Hredhat$, respectively, we define the following forward simulation errors:
\begin{align}
\label{delZ}\Delta \Zbf &= \nrm{\textsf{vec}(\Zbf - \widehat{\Zbf}^\mu)}_2/\nrm{\textsf{vec}(\Zbf)}_2\\
\label{delZmu}\Delta \Zbf^\mu &= \nrm{\textsf{vec}(\Zbf^\mu - \widehat{\Zbf}^\mu)}_2/\nrm{\textsf{vec}(\Zbf^\mu)}_2\\
\label{delZstar}\Delta \Zbf^\star &= \nrm{\textsf{vec}(\Zbf - \Zbf^\mu)}_2/\nrm{\textsf{vec}(\Zbf)}_2.
\end{align}
The value $\Delta \Zbf$ is not expected to be small for larger $\vep$, but assesses whether the trajectory exhibits a phase error from the full system, wheres $\Delta \Zbf^\mu$ assesses agreement with the analytical reduced system. The value $\Delta \Zbf^\star$ is a reference measure and serves as an approximate lower bound for $\Delta \Zbf$. 

\subsection{Dependence on test function radius}\label{sec:Tfsupport}

As mentioned previously, the ratio between the test function support width $T_\phi$ and the fast scale $T_f$, $\sigma_{\phi f} = T_\phi/T_f$, plays a role in the accuracy of $\Hredhat$ with respect to $\Hred$. This relationship is graphically represented in Figure \ref{fig:mt}, where we present the TPR, $\Delta H$, and $\Delta \Zbf^\mu$ values (equations \eqref{tpr}-\eqref{delZstar}) for each example over a range of $\sigma_{\phi f}$, as well as the $\sigma_{\phi f}$ values resulting from $T_\phi$ as computed using the method in this paper (black curves), see Appendix \ref{app:cornerpt}. In Figure \ref{fig:mt} we display only the results for the extreme perturbative regime ($\vep=0.05$ for Examples 1 \& 2 and $\vep=0.03$ for Examples 3 \& 4) as the method performs very well for a wide range of $\sigma_{\phi f}$ in the milder perturbative regime. 

Several general trends can be observed when $\sigma_{\phi f}$ is varied. From the perspective of the TPR score, it can been seen that the optimal $T_\phi$ is usually associated with some $\sigma_{\phi f}\in [5,30]$, and the exact optimal $\sigma_{\phi f}$ is highly dependent on the trajectory. Comparing the first and second columns, we see that TPR=1 always correlates with lower $\Delta H$, yet there are instances where the Hamiltonian is captured accurately (low $\Delta H$) without correct identification of $\Hred$ (e.g.\ Example 3, $z(0)$ index 10). On the other hand, an accurate forward solve does not always correlate with TPR, so from the perspective of $\Delta \Zbf^\mu$ we get a different optimal $\sigma_{\phi f}$. Often larger $\sigma_{\phi f}$ will yield lower forward simulation errors $\Delta \Zbf^\mu$ despite not capturing the correct form of $\Hred$ (i.e.\ TPR<1). This is most readily observed from Examples 3 and 4 (right column), as well as Example 1 for $z(0)$ indices $6-10$. This serves to highlight the importance of examining multiple performance metrics, as the down-stream task (forward simulations, scientific inference, etc.) should determine which metric is weighted most highly.

The black curve indicates values of $\sigma_{\phi f}$ (hence $T_\phi$) resulting from default settings of the method as presented here. In all cases except $z(0)$ indices $\{8,9,10,12\}$ of Example 1 and $z(0)$ index 10 in Example 3, the identified $T_\phi$ yields the correct model terms, which automatically grants excellent agreement with $\Hred$ (column 2). In these five cases with $\text{TPR}<1$, the Hamiltonian is still very accurate, with $\Delta H\approx 0.02$. For  Example 2, the black curve successfully lies in the region of admissible $\sigma_{\phi f}$ values, which is narrow due to mixing of slow and fast scales which causes lower $\Delta H$ and $\Delta \Zbf^\mu$ at larger $T_\phi$. Results are similar for Example 3. In Example 4 the correct model is found for nearly all $\sigma_{\phi f}$ and $z(0)$, although larger $T_\phi$ may increase accuracy.

\begin{figure}
\begin{center}
\begin{tabular}{m{0.1\textwidth}@{}m{0.3\textwidth}@{}m{0.3\textwidth}m{0.3\textwidth}}
& \begin{center}{\fbox{TPR}}\end{center} & \begin{center}{\fbox{$\Delta H$}}\end{center} & \begin{center}{\fbox{$\Delta \Zbf^\mu$}}\end{center} \\
\fbox{$\substack{\text{Example 1} \\ \vep=0.05}$}
&\hspace{1mm}\includegraphics[trim={0 0 0 0},clip,width=0.3\textwidth]{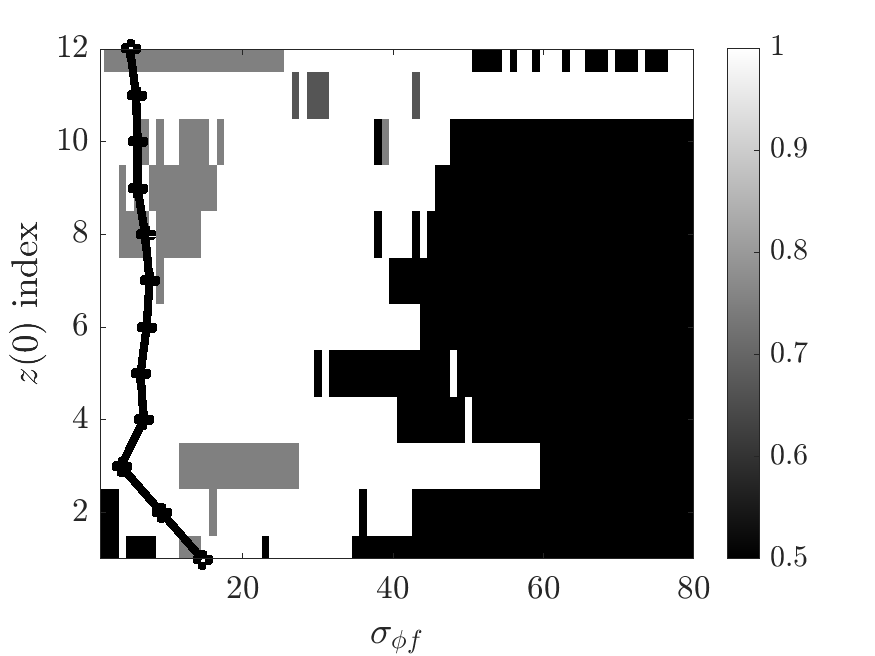} &\hspace{1mm}
		\includegraphics[trim={0 0 0 0},clip,width=0.3\textwidth]{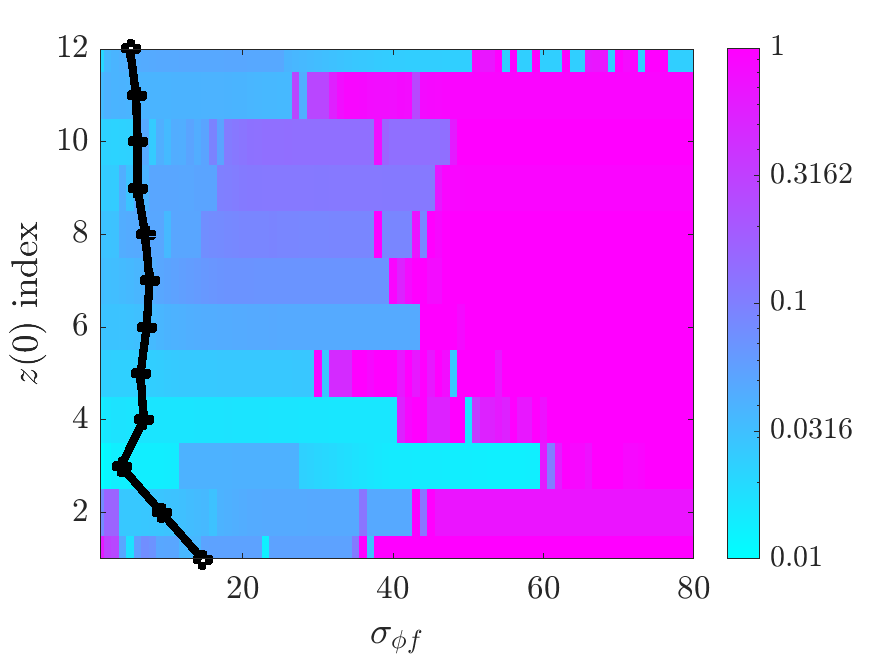} &
		\includegraphics[trim={0 0 0 0},clip,width=0.3\textwidth]{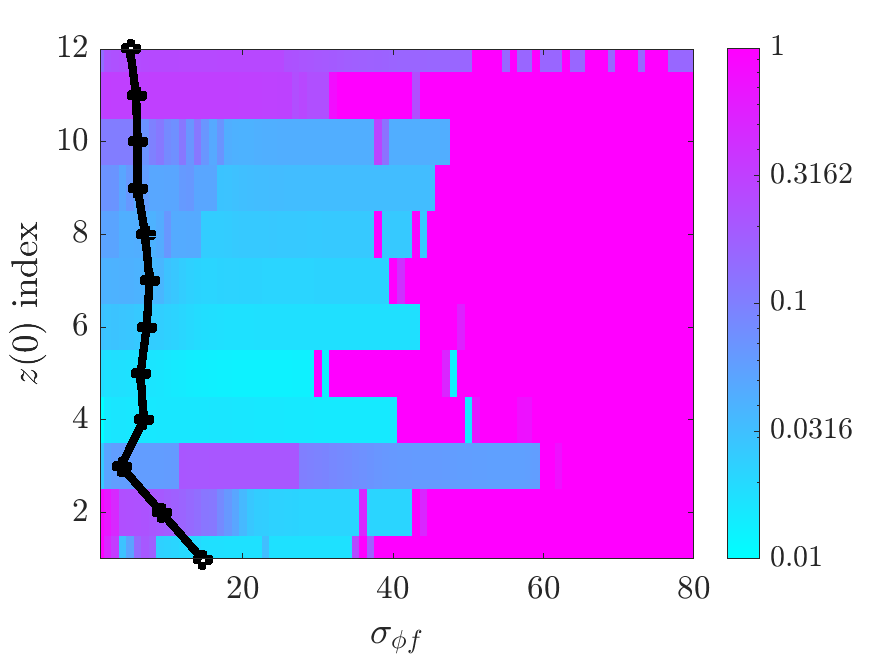} \\
\fbox{$\substack{\text{Example 2} \\ \vep=0.05 \\ Q(0)=\frac{31\pi}{32}}$}
&\includegraphics[trim={0 0 0 0},clip,width=0.3\textwidth]{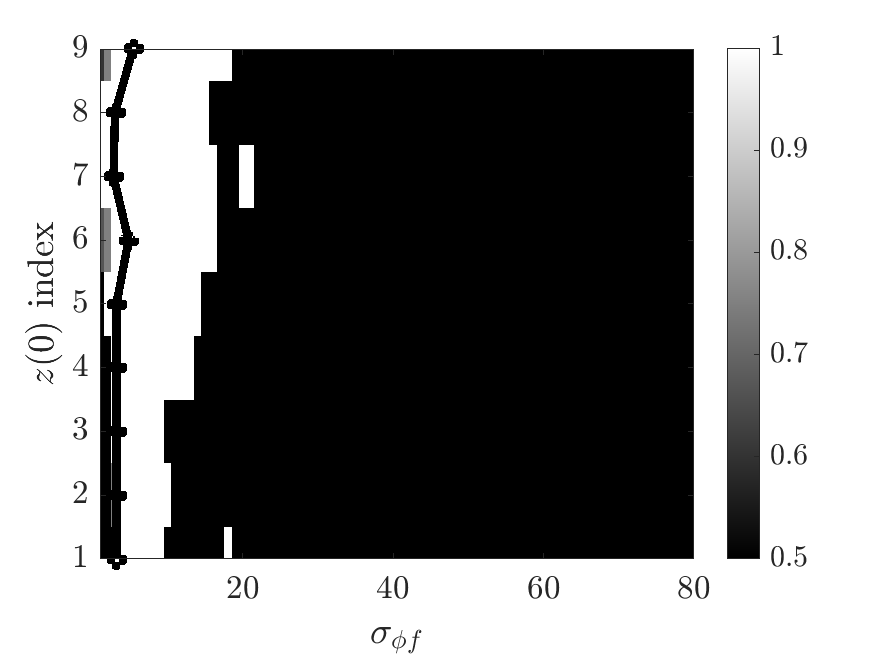} &
		\includegraphics[trim={0 0 0 0},clip,width=0.3\textwidth]{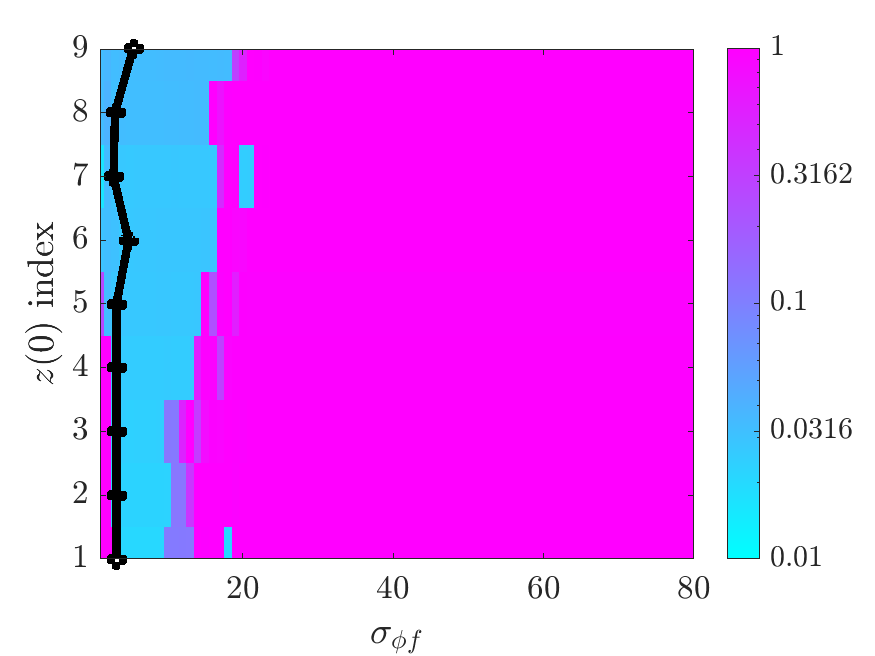} &
		\includegraphics[trim={0 0 0 0},clip,width=0.3\textwidth]{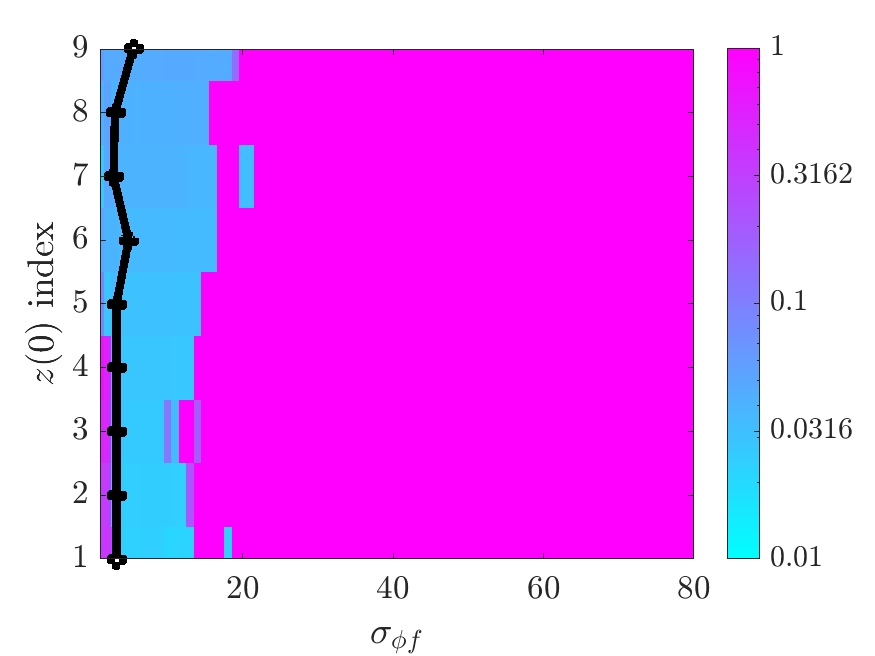} \\
\fbox{$\substack{\text{Example 3} \\ \vep=0.03}$} &
\includegraphics[trim={0 0 0 0},clip,width=0.3\textwidth]{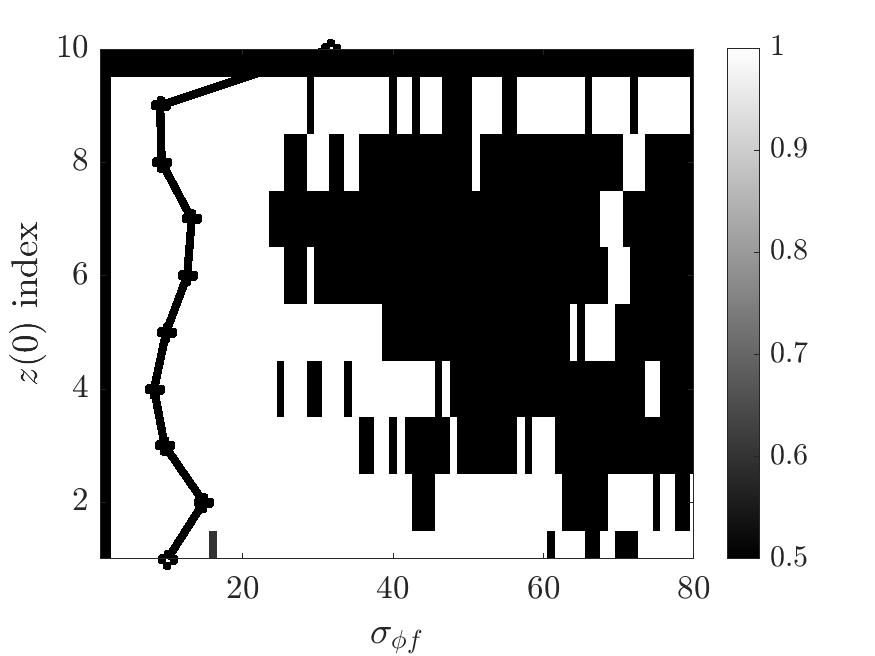} &\hspace{1mm}
		\includegraphics[trim={0 0 0 0},clip,width=0.3\textwidth]{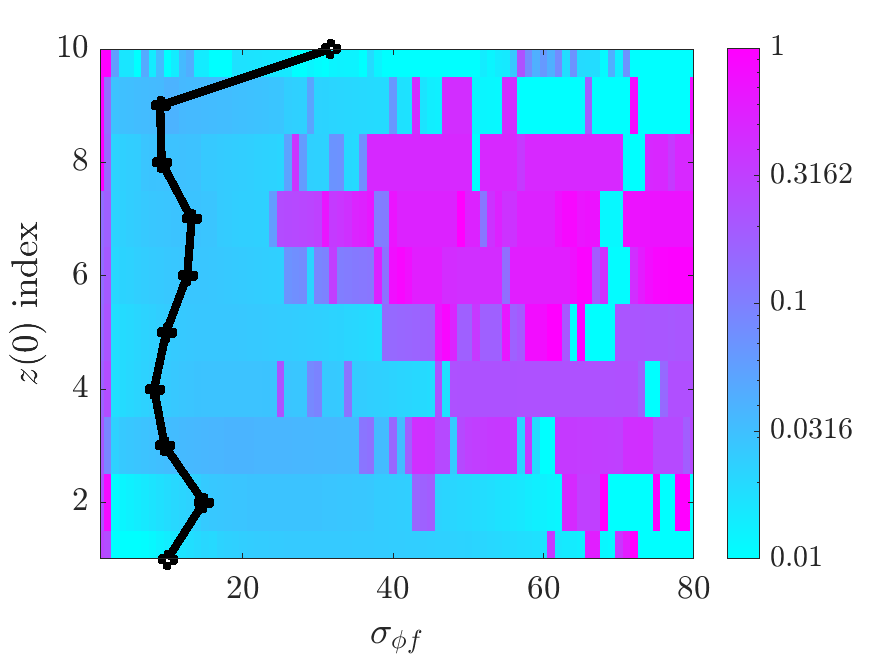} &
		\includegraphics[trim={0 0 0 0},clip,width=0.3\textwidth]{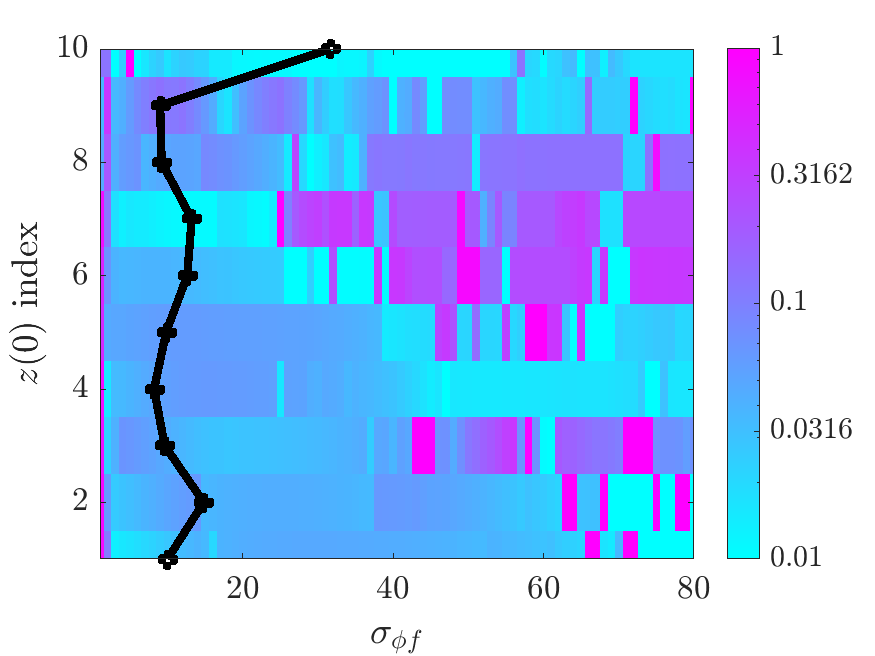} \\
\fbox{$\substack{\text{Example 4} \\ \vep=0.03}$}
&\includegraphics[trim={0 0 0 0},clip,width=0.3\textwidth]{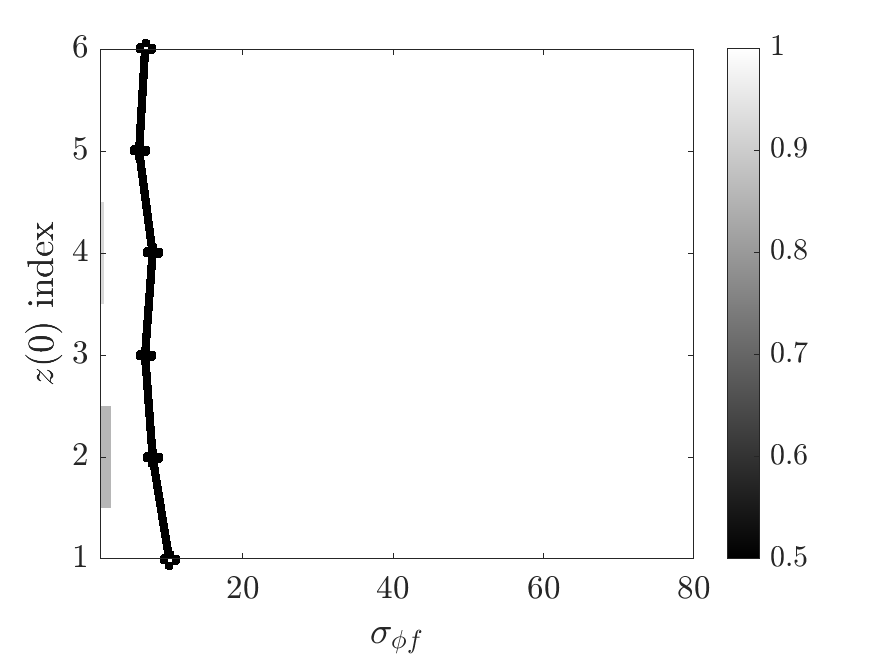} &
		\includegraphics[trim={0 0 0 0},clip,width=0.3\textwidth]{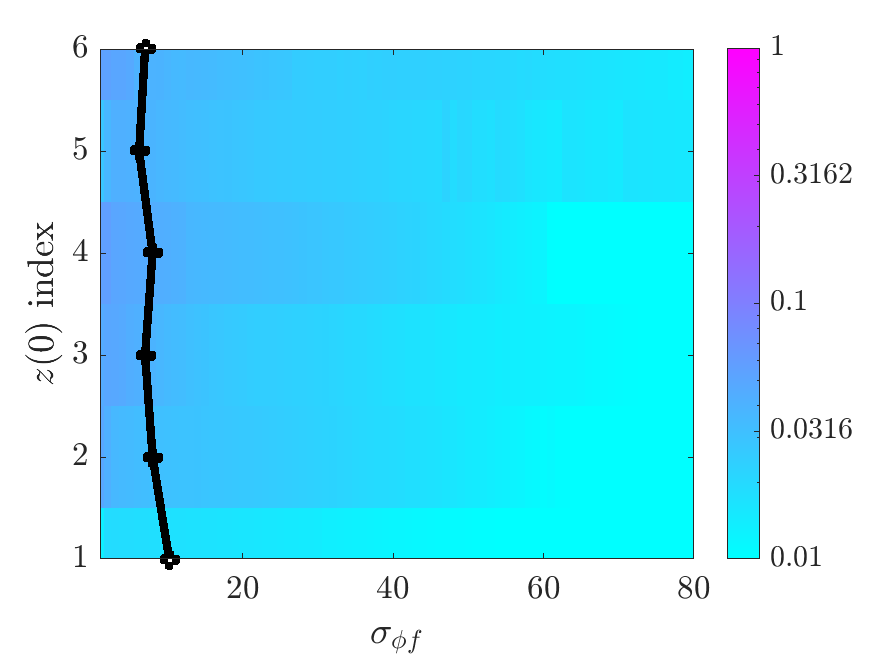} &
		\includegraphics[trim={0 0 0 0},clip,width=0.3\textwidth]{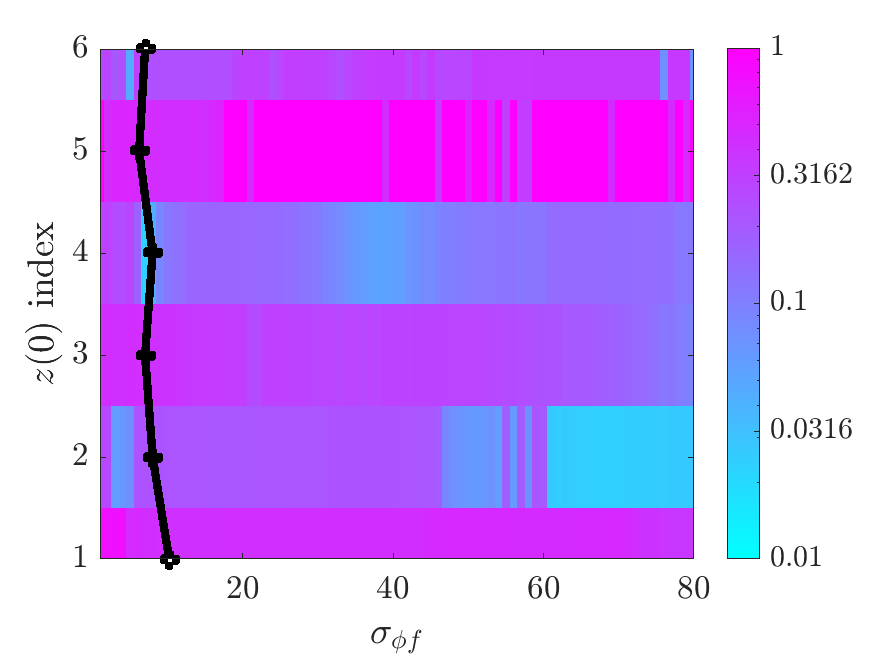} \\ \hline
\end{tabular}
\end{center}
\caption{Values of the TPR, $\Delta H$, and $\Delta \Zbf^\mu$ (see eqs. \eqref{tpr}-\eqref{delZstar}) statistics in the extreme perturbative regime as a function of the ratio $\sigma_{\phi f} = T_\phi/T_f$ (see eq. \eqref{sigmas2}) and the learning trajectory (indicated on the $y$-axes as the $z(0)$ index).}
\label{fig:mt}
\end{figure}

\subsection{Results: zero extrinsic noise ($\sigma_{NR}=0$)}

In this section we demonstrate that the reduced Hamiltonian $\Hred$ is sufficiently recovered over a wide range of initial conditions and perturbative regimes using a single trajectory. This implies that often only a small sample in phase space is needed to recovery the entire reduced Hamiltonian, as opposed to neural-network based approaches, which are often trained on $\CalO(10^4)$ input-output pairs with substantially longer computation times. 

In the top three rows of Figures \ref{fig:tpr}-\ref{fig:exp4_st} we plot the learned trajectories on a black-to-red scale overlaying the training data in cyan. The color of each learned trajectory indicates the value of the given statistic (TPR, $\Delta H$, $\Delta \Zbf^\mu$, $\Delta \Zbf$). As a reference, the bottom rows of Figures \ref{fig:exp1_st}-\ref{fig:exp4_st} plot simulations from $\Hred$, colored according to $\Delta \Zbf^\star$, for comparison with $\Delta \Zbf$.

\subsubsection{Model identification (TPR)}

\begin{figure}
\begin{center}
\begin{tabular}{m{0.2\textwidth}@{}m{0.35\textwidth}@{}m{0.35\textwidth}}
\fbox{Example 1}
$\vep\in\{0.01,0.05\}$ &\hspace{1mm}\includegraphics[trim={20 0 30 10},clip,width=0.32\textwidth]{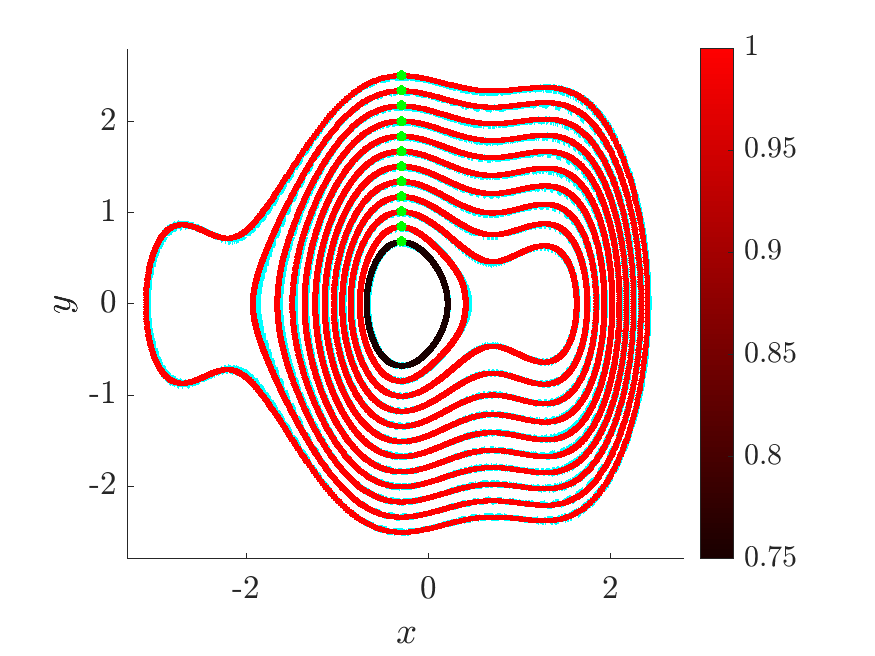} &\hspace{1mm}
		\includegraphics[trim={20 0 30 10},clip,width=0.32\textwidth]{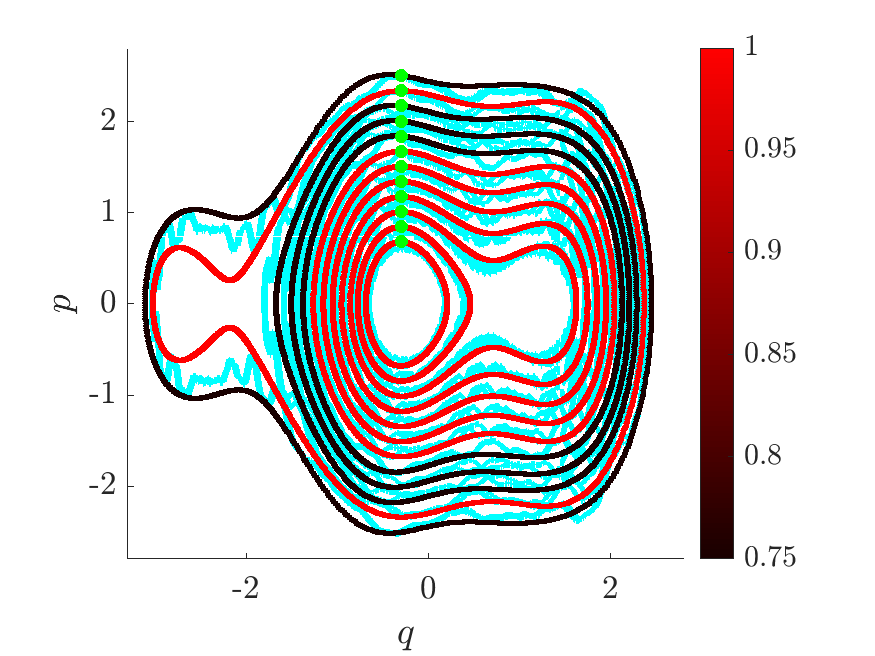} \\
\fbox{Example 2}\textcolor{white}{.................}
$\vep=0.05$, $Q(0)\in\{\frac{\pi}{2},\frac{31\pi}{32}\}$ &\includegraphics[trim={0 0 15 10},clip,width=0.33\textwidth]{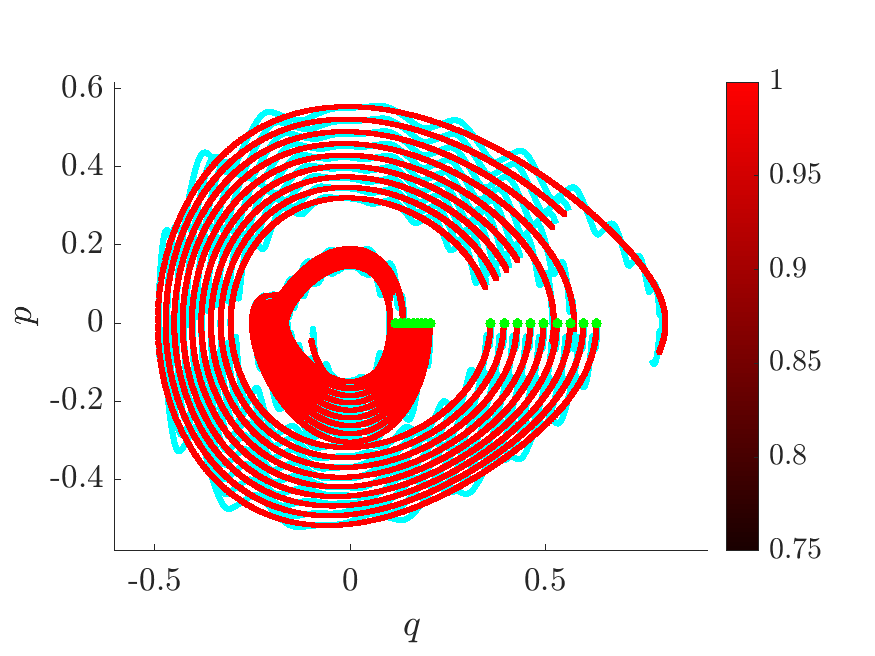} &
		\includegraphics[trim={0 0 15 10},clip,width=0.33\textwidth]{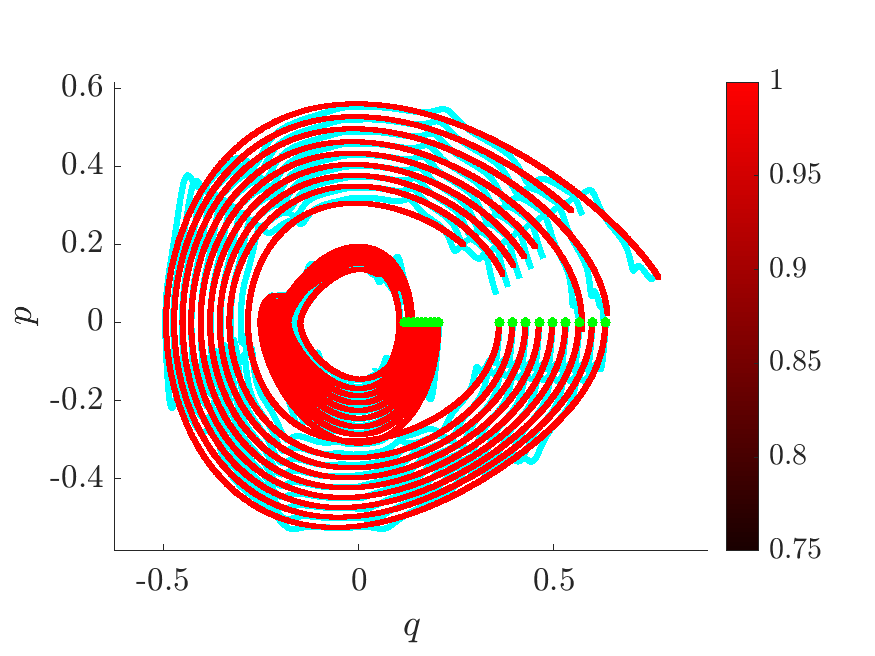} \\
\fbox{Example 3}
$\vep\in\{0.01,0.03\}$ &\hspace{1.5mm}\includegraphics[trim={30 0 30 10},clip,width=0.32\textwidth]{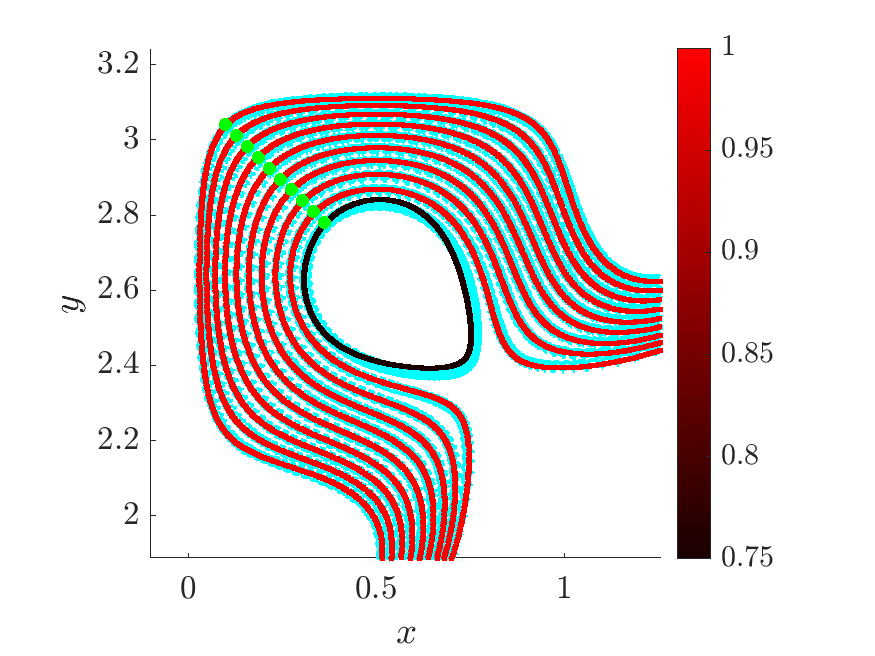} &\hspace{1mm}
		\includegraphics[trim={30 0 30 10},clip,width=0.32\textwidth]{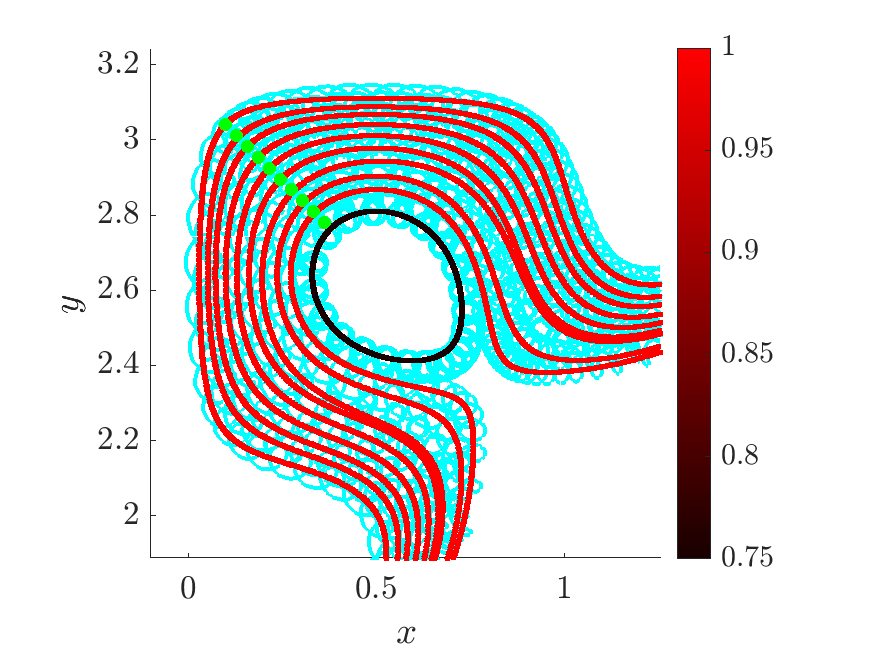} \\
\fbox{Example 4}
$\vep\in\{0.01,0.03\}$ &\includegraphics[trim={30 0 30 10},clip,width=0.32\textwidth]{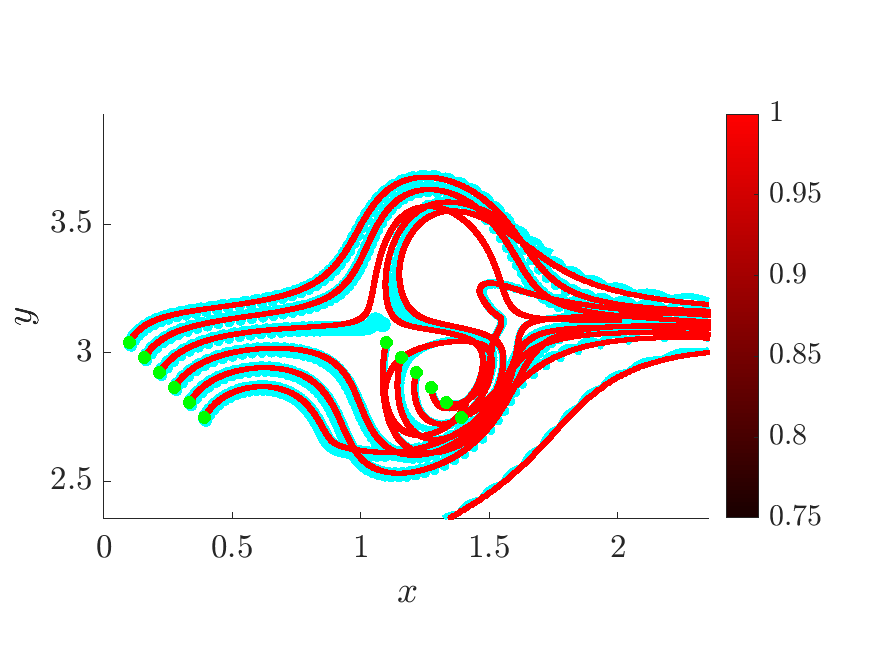} &
		\includegraphics[trim={30 0 30 10},clip,width=0.32\textwidth]{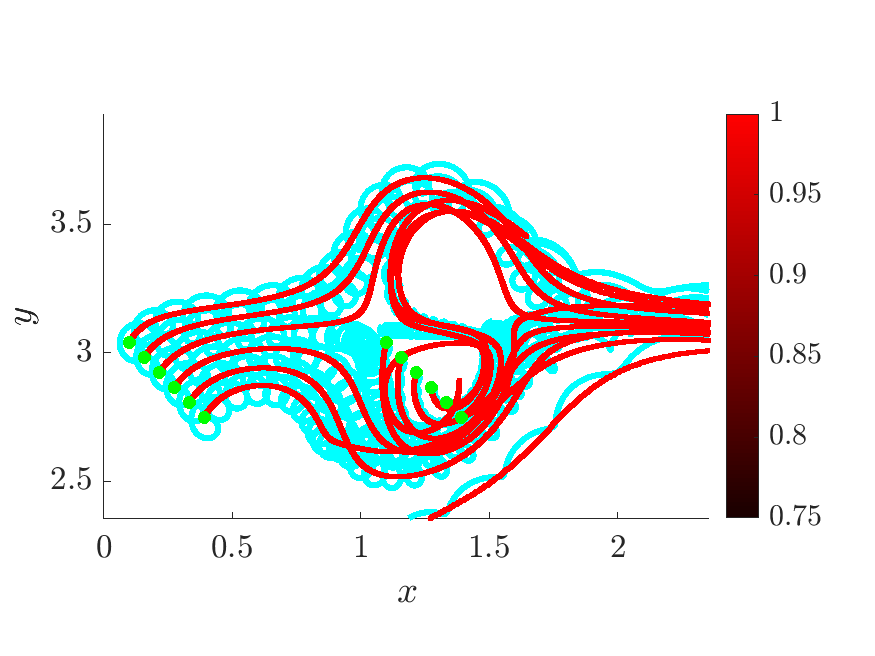} \\ \hline
\end{tabular}
\end{center}
\caption{TPR values for each example (defined in \eqref{tpr}). Each trajectory in red is simulated from the learned Hamiltonian system $\Hredhat$ and is colored according to its TPR. Green dots indicate initial conditions and the training data is plotted in cyan. The correct model is recovered in the vast majority of cases ($\text{TPR}=1$), with exceptions still corresponding to accurate learned dynamics (see Figures \ref{fig:exp1_st}-\ref{fig:exp4_st}). Note also that the right column corresponds to the black curves in Figure \ref{fig:mt}.}
\label{fig:tpr}
\end{figure}

In Figure \ref{fig:tpr} we report the TPR score (defined in \eqref{tpr}) for each trajectory, indicating that in the vast majority of cases WSINDy recovers the correct terms in $\Hred$ even under significant perturbations imparted by the fast scale. In the left column we see that the mild perturbative regime yields a TPR of 1 (i.e.\ perfect model recovery) for all cases except the inner-most trajectories of Examples 1 and 3, which are highly degenerate, nearly-circular orbits. Moreover, trajectories from these misspecified models show qualitative agreement with the true dynamics. It is most surprising that in Example 1 there exist trajectories that only enclose one of the relevant fixed points and still enable identification of $\Hred$. In the extreme perturbative regime (right column of Figure \ref{fig:tpr}), we recover TPR=1 in all cases except indices $\{8,9,10,12\}$ of Example 1 and index 10 in Example 3, yet the learned trajectories are still accurate. (Note that the initial conditions ($z(0)$ indices) referred to in Figure \ref{fig:mt} correspond to the green dots in Figures \ref{fig:exp1_st}-\ref{fig:exp4_st}, with increasing $z(0)$ index corresponding to increasing magnitude $|z(0)|$ in Examples 1 and 2 and to $z(0)$ moving diagonally downwards in Examples 3 and 4.)

\subsubsection{Pointwise agreement with ${\normalfont \Hred}$ ($\Delta H$)}

\begin{table}
\begin{center}
\begin{tabular}{l|llll}
  & Ex1 & Ex2  & Ex3  & Ex4 \\ \hline
(min $\Delta H$, max $\Delta H$), mild & $(0.003,0.012)$, outlier $0.087$ &  $(0.004,0.006)$ &  $(0.003,0.012)$ &  $(0.006,0.018)$ \\  
(min $\Delta H$, max $\Delta H$), extreme & $(0.014,0.053)$ &  $(0.021,0.033)$ & $(0.012,0.035)$ & $(0.017,0.044)$  
\end{tabular}
\end{center}
\caption{Range of $\Delta H$ values across initial conditions for each example.}
\label{tab:H}
\end{table}

Comparing the top three rows with the bottom rows of Figures \ref{fig:exp1_st}-\ref{fig:exp4_st}, we observe that for each example, initial condition, and perturbative regime, the learned trajectory provides excellent qualitative agreement with the $\Hred$ dynamics. That is, the level curves of $\Hredhat$ are very close to those of $\Hred$. More quantitatively, the range of $\Delta H$ is given in Table \ref{tab:H}. For the mild perturbative regime (left columns of Figures \ref{fig:exp1_st}-\ref{fig:exp4_st}), the method produces a maximum $\Delta H$ of $0.012$ (with the exception of one outlier trajectory with $\Delta H = 0.087$), and for the extreme regime the maximum is $\Delta H = 0.053$. To summarize, given a single trajectory from $H_\vep$, the learned Hamiltonian $\Hredhat$ using WSINDy agrees with $\Hred$ over a large region of phase space, and with very mild dependence on the perturbative regime: $\Delta H$ is upper-bounded by $\vep$ in almost all cases.

\subsubsection{Forward simulation errors ($\Delta \Zbf^\mu$,$\Delta \Zbf$,$\Delta \Zbf^\star$)}

\begin{table}
\begin{center}
\begin{tabular}{l|llll}
  & Ex1 & Ex2  & Ex3  & Ex4 \\ \hline
(min $\Delta \Zbf^\mu$, max $\Delta \Zbf^\mu$), mild & $(0.004,0.082)$ & $(0.007,0.015)$ &  $(0.001,0.022)$ &  $(0.036,0.419)$ \\  
(min $\Delta \Zbf^\mu$, max $\Delta \Zbf^\mu$), extreme & $(0.015,0.315)$ &  $(0.024,0.044)$ & $(0.005,0.128)$ & $(0.043,0.435)$  \\
(min $\Delta \Zbf$, max $\Delta \Zbf$), mild & $(0.015,0.124)$ & $(0.056,0.064)$ &  $(0.005,0.018)$ &  $(0.031,0.314)$ \\  
(min $\Delta \Zbf$, max $\Delta \Zbf$), extreme & $(0.069,0.579)$ & $(0.065,0.067)$ & $(0.019,0.145)$ & $(0.050,0.468)$  \\
(min $\Delta \Zbf^*$, max $\Delta \Zbf^*$), mild & $(0.015,0.084)$ &  $(0.058,0.065)$ &  $(0.005,0.019)$ &  $(0.015,0.238)$ \\  
(min $\Delta \Zbf^*$, max $\Delta \Zbf^*$), extreme & $(0.067,0.409)$ &  $(0.071,0.080)$ & $(0.018,0.068)$ & $(0.082,0.408)$  
\end{tabular}
\end{center}
\caption{Range of forward simulation errors across initial conditions for each example.}
\label{tab:Z}
\end{table}

From Figures \ref{fig:exp1_st}-\ref{fig:exp4_st} and Table \ref{tab:Z} we observe that in the mild regime, Examples 1-3 yield $\CalO(10^{-2})$ errors for both $\Delta \Zbf^\mu$ (agreement with $\Hred$) and $\Delta \Zbf$ (agreement with $H_\vep$). Example 4 exhibits much larger forward simulation errors due to the chaotic nature of the dynamics (yet leads to $\CalO(10^{-2})$ values for $\Delta H$ and TPR=1 in all cases).

The right columns of Figures \ref{fig:exp1_st}-\ref{fig:exp4_st} display $\Delta \Zbf^\mu$ and $\Delta \Zbf$ for the extreme perturbative regime, where larger values of $\Delta \Zbf^\mu$ and $\Delta \Zbf$ can be observed for Examples 1 and 4 (Figures \ref{fig:exp1_st} and \ref{fig:exp4_st}), while Examples 2 and 3 accurately capture both the reduced dynamics of $\Hred$ and the full dynamics of $H_\vep$.  The method performs exceptionally well on Example 2, offering better agreement with $H_\vep$ than that provided by $\Hred$, which is especially surprising because of the mixing of slow and fast scales (see Figure \ref{fig:exp2_fft}). For Example 1, the larger forward simulation errors are due to minor phase differences which lead to rapid accumulation of errors (despite close qualitative agreement), while Example 4 suffers again because of its chaotic nature. 

Ultimately, our findings indicate that coarse-grained models provided by WSINDy have the potential to be very useful surrogate models in forward simulations. However, Example 1 indicates that capturing the phase of a slow oscillator for arbitrarily many periods may require additional constraints on the model. On the other hand, all examples indicate that level sets of the reduced Hamiltonian are readily captured by the present method. As well as providing insight into the energy landscape, forward simulations may incorporate knowledge about the level sets. We will pursue these lines of research in a future work.  


\begin{figure}
\begin{center}
\begin{tabular}{m{0.08\textwidth}@{}m{0.35\textwidth}@{}m{0.35\textwidth}}
& \centering\fbox{$\vep=0.01$}  
& \begin{center}\fbox{$\vep=0.05$}\end{center} \\
\fbox{$\Delta H$}  &	\includegraphics[trim={30 0 30 10},clip,width=0.32\textwidth]{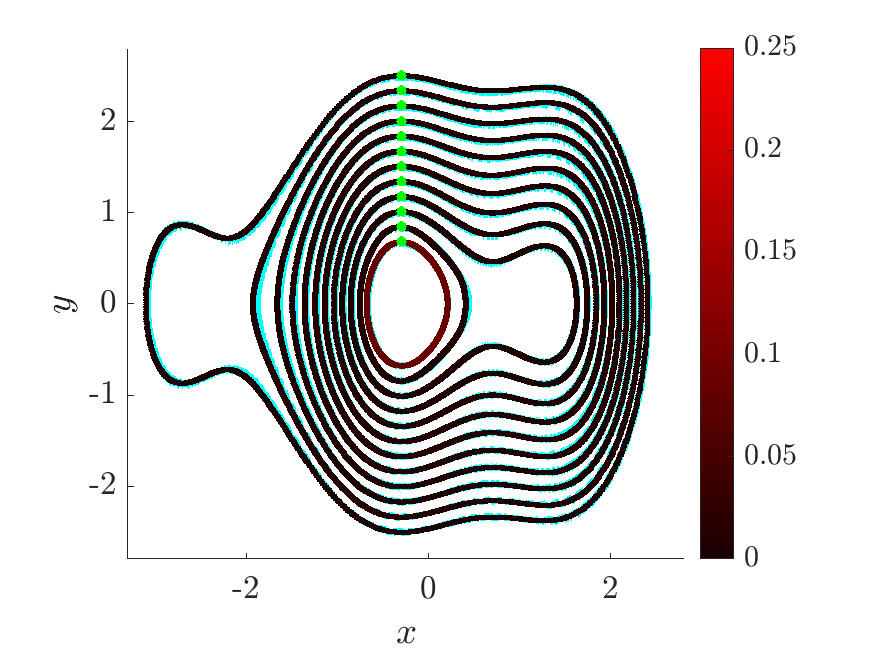} &
		\includegraphics[trim={30 0 30 10},clip,width=0.32\textwidth]{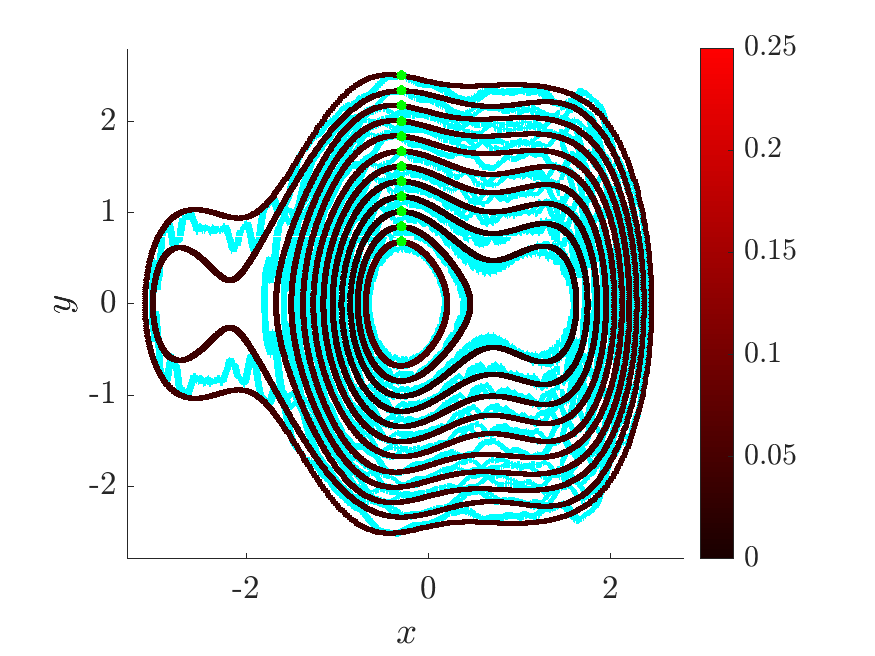} \\
\fbox{$\Delta \Zbf^\mu$} &		\includegraphics[trim={30 0 30 10},clip,width=0.32\textwidth]{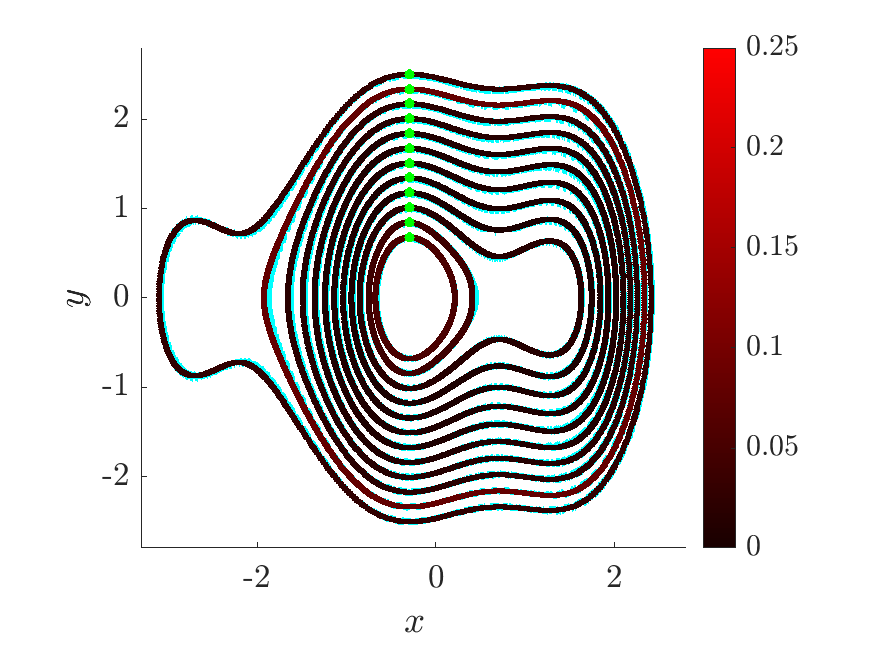} &
		\includegraphics[trim={30 0 30 10},clip,width=0.32\textwidth]{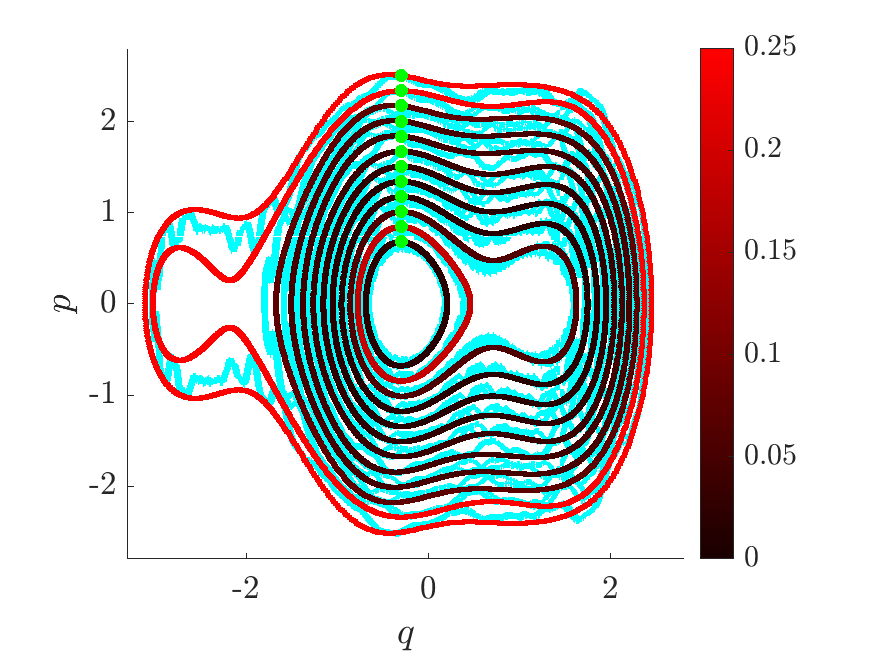} \\
\fbox{$\Delta \Zbf$} &		\includegraphics[trim={30 0 30 10},clip,width=0.32\textwidth]{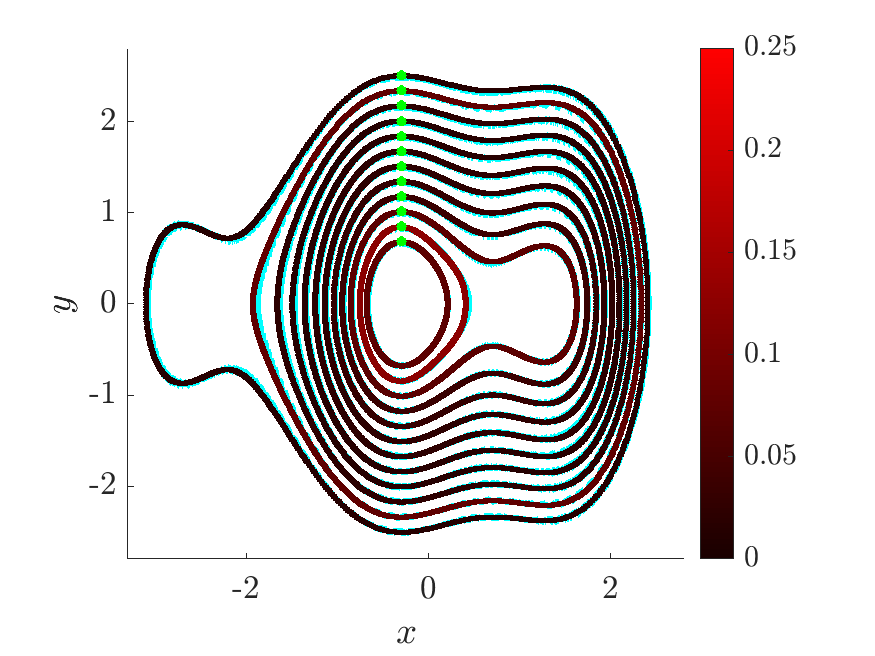} &
		\includegraphics[trim={30 0 30 10},clip,width=0.32\textwidth]{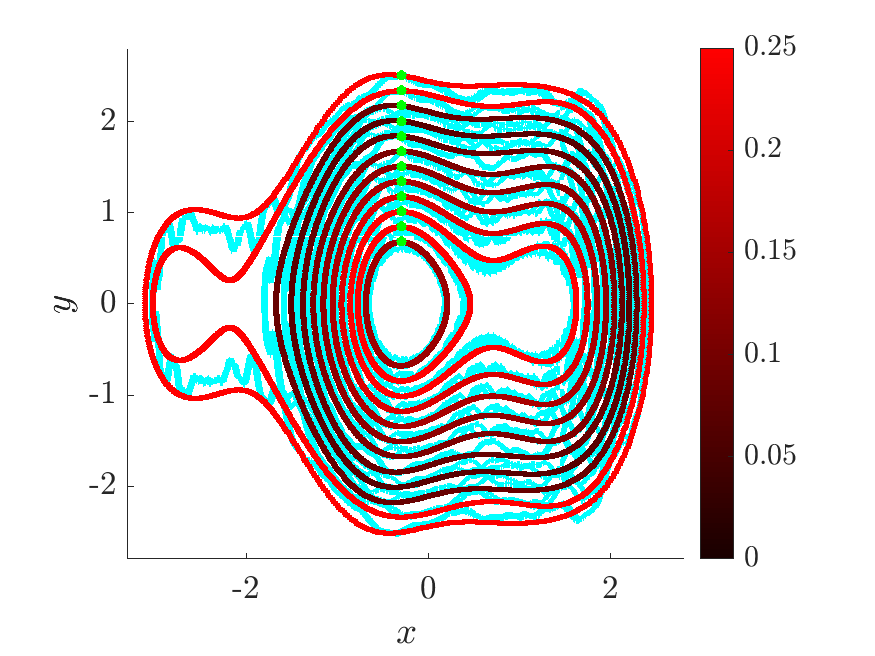} \\ \hline
	 \fbox{$\Delta \Zbf^\star$} &	\includegraphics[trim={30 0 30 10},clip,width=0.32\textwidth]{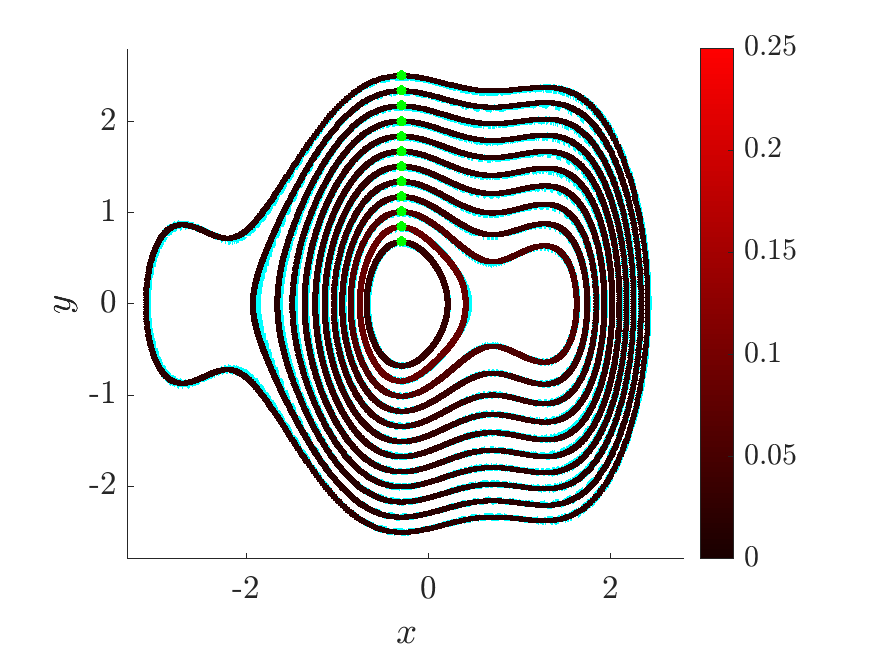} &
		\includegraphics[trim={30 0 30 10},clip,width=0.32\textwidth]{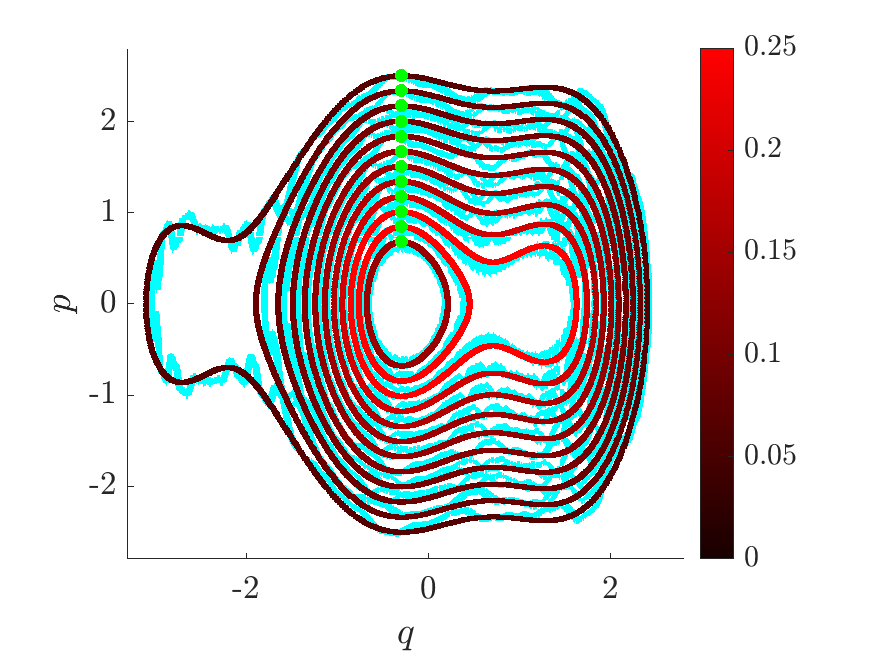} \\ \hline
\end{tabular}
\end{center}
\caption{Error statistics for Example 1. Each trajectory in red is simulated from the learned Hamiltonian system $\Hredhat$ and is colored according to the error metric on the left (defined in \eqref{tpr}-\eqref{delZstar}). The training data is plotted in cyan. Green dots indicate initial conditions. Left: results for $\vep= 0.01$, right: results for $\vep=0.05$.}
\label{fig:exp1_st}
\end{figure}

\begin{figure}
\begin{center}
\begin{tabular}{m{0.08\textwidth}@{}m{0.35\textwidth}@{}m{0.35\textwidth}}
& \centering\fbox{$Q(0)=\frac{\pi}{2}$}  
& \begin{center}\fbox{$Q(0)=\frac{31\pi}{32}$}\end{center} \\
\fbox{$\Delta H$}  &		\includegraphics[trim={30 0 20 10},clip,width=0.32\textwidth]{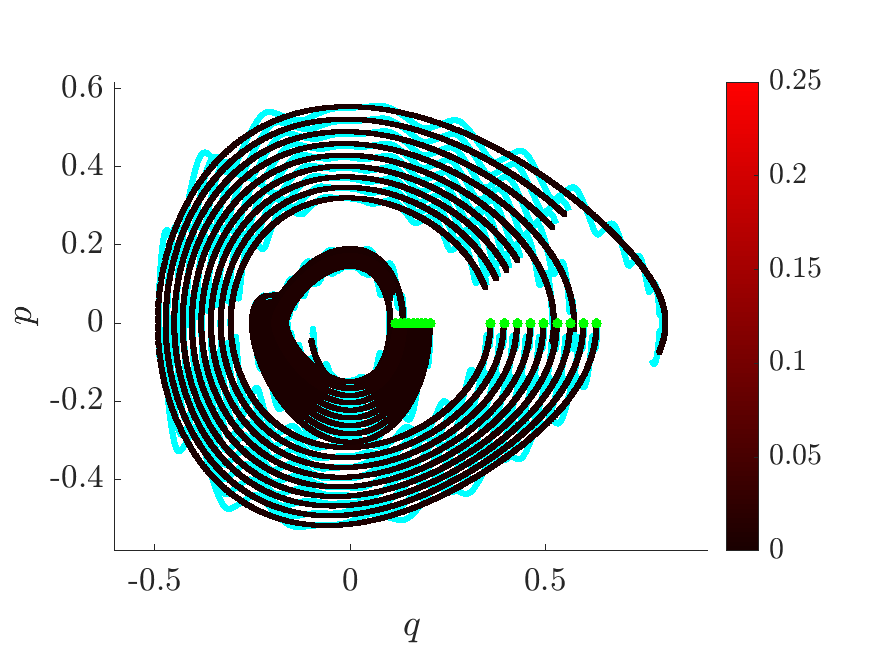} &
		\includegraphics[trim={30 0 20 10},clip,width=0.32\textwidth]{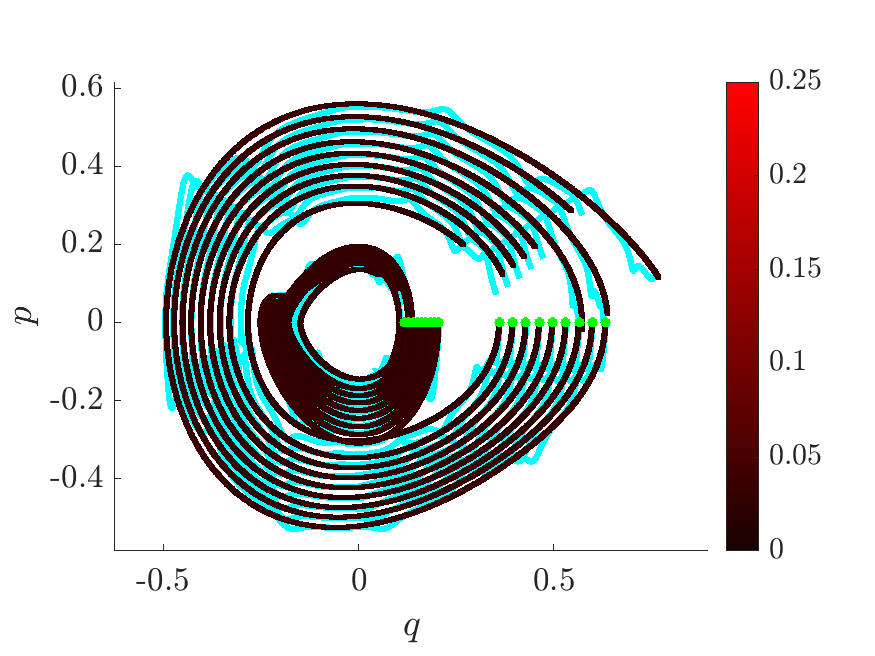} \\
\fbox{$\Delta \Zbf^\mu$}  &	\includegraphics[trim={30 0 20 10},clip,width=0.32\textwidth]{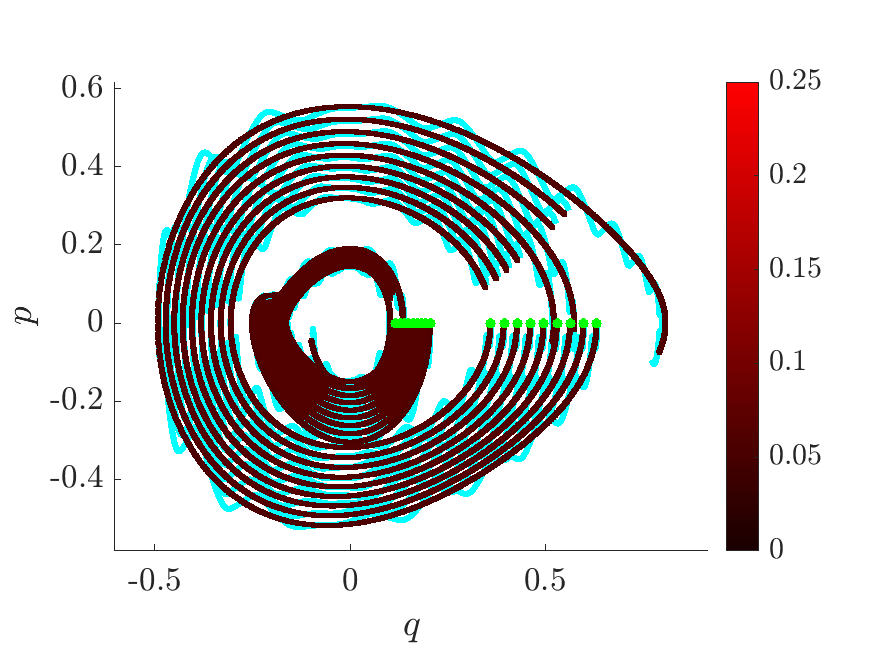} &
		\includegraphics[trim={30 0 20 10},clip,width=0.32\textwidth]{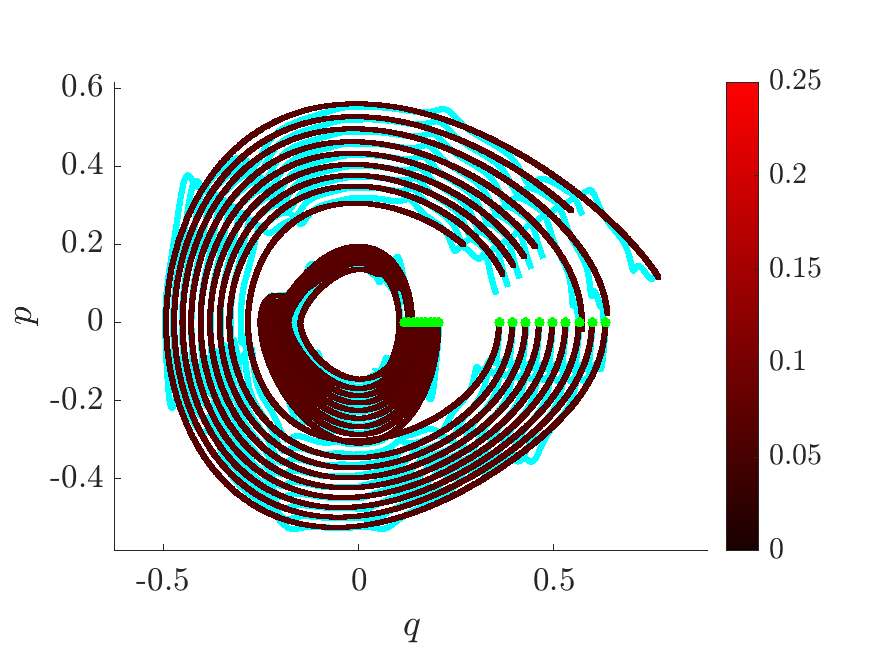} \\
\fbox{$\Delta \Zbf$}  &		\includegraphics[trim={30 0 20 10},clip,width=0.32\textwidth]{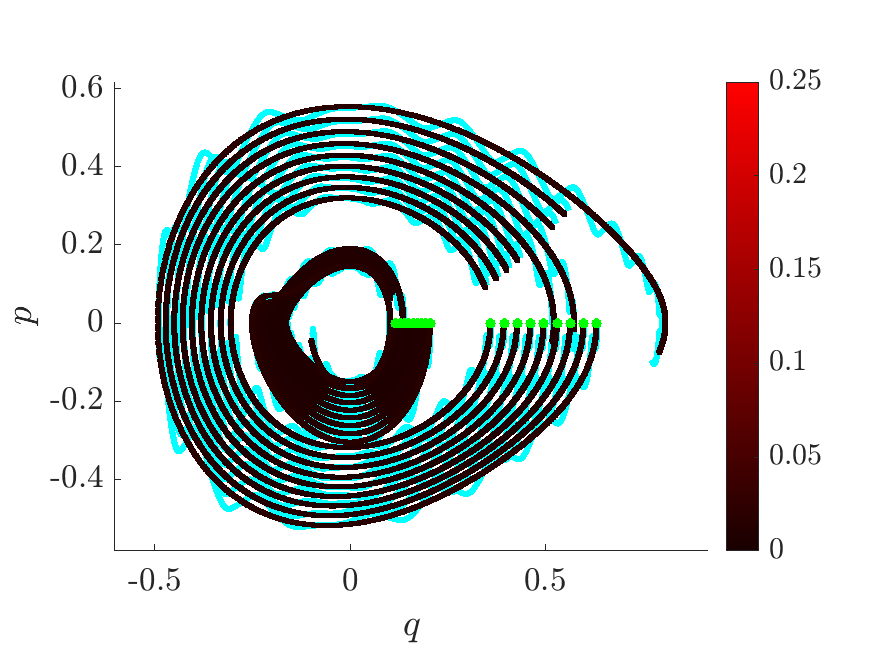} &
		\includegraphics[trim={30 0 20 10},clip,width=0.32\textwidth]{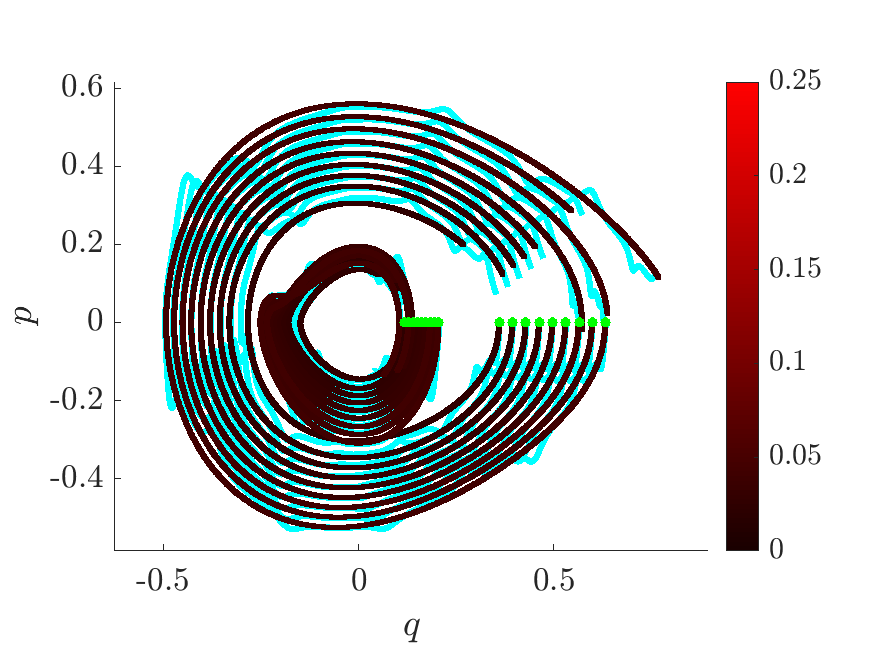} \\ \hline
	\fbox{$\Delta \Zbf^\star$} &	\includegraphics[trim={30 0 20 10},clip,width=0.32\textwidth]{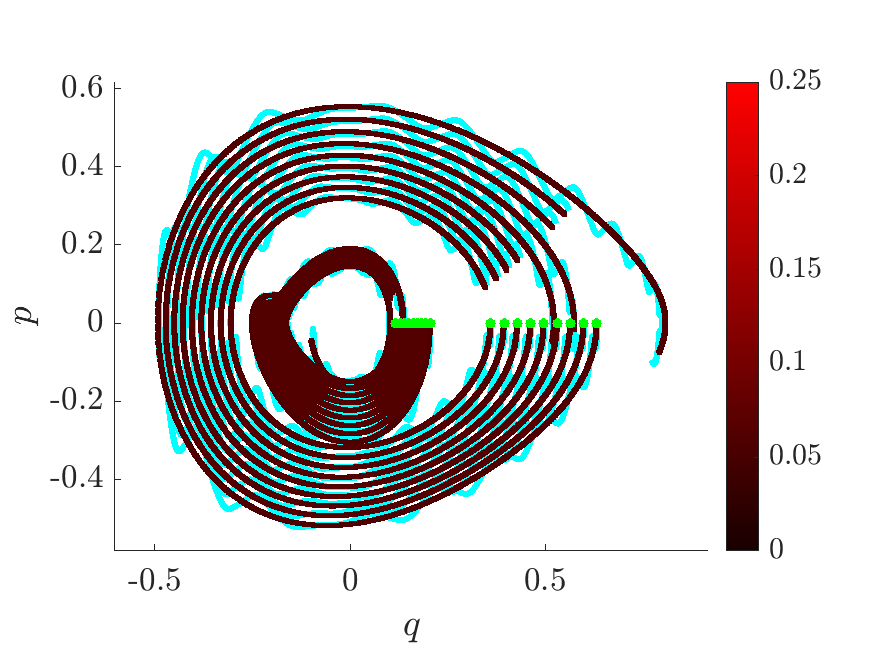} &
		\includegraphics[trim={30 0 20 10},clip,width=0.32\textwidth]{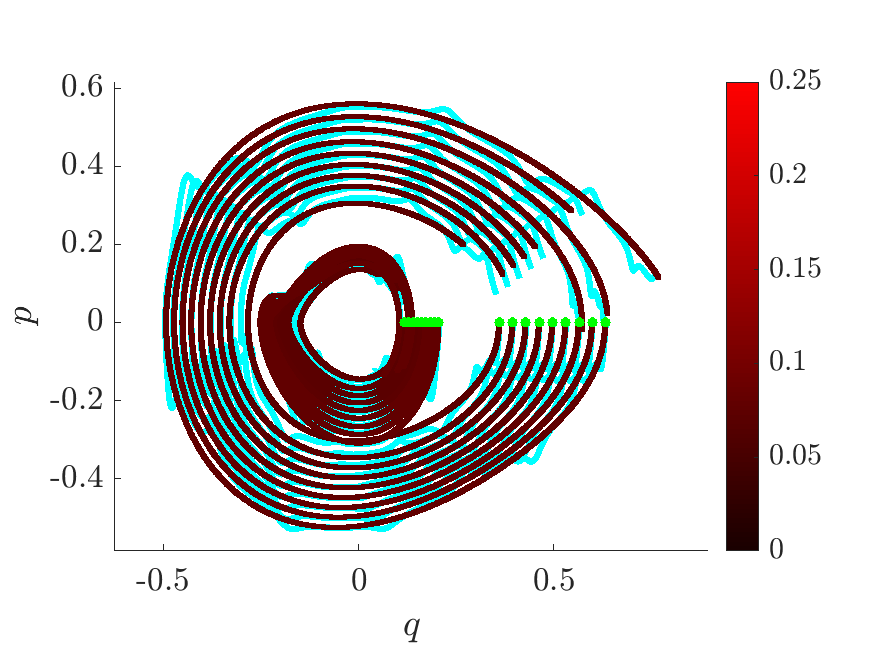} \\ \hline
\end{tabular}
\end{center}
\caption{Error statistics for Example 2 at $\vep=0.05$. Each trajectory in red is simulated from the learned Hamiltonian system $\Hredhat$ and is colored according to the error metric on the left (defined in \eqref{tpr}-\eqref{delZstar}). The training data is plotted in cyan. Green dots indicate initial conditions. Left: results for $Q(0) = \frac{\pi}{2}$, right: results for $Q(0) = \frac{31\pi}{32}$.}
\label{fig:exp2_st}
\end{figure}

\begin{figure}
\begin{center}
\begin{tabular}{m{0.08\textwidth}@{}m{0.35\textwidth}@{}m{0.35\textwidth}}
& \centering\fbox{$\vep=0.01$}  
& \begin{center}\fbox{$\vep=0.03$}\end{center} \\
\fbox{$\Delta H$}  &		\includegraphics[trim={30 0 30 10},clip,width=0.32\textwidth]{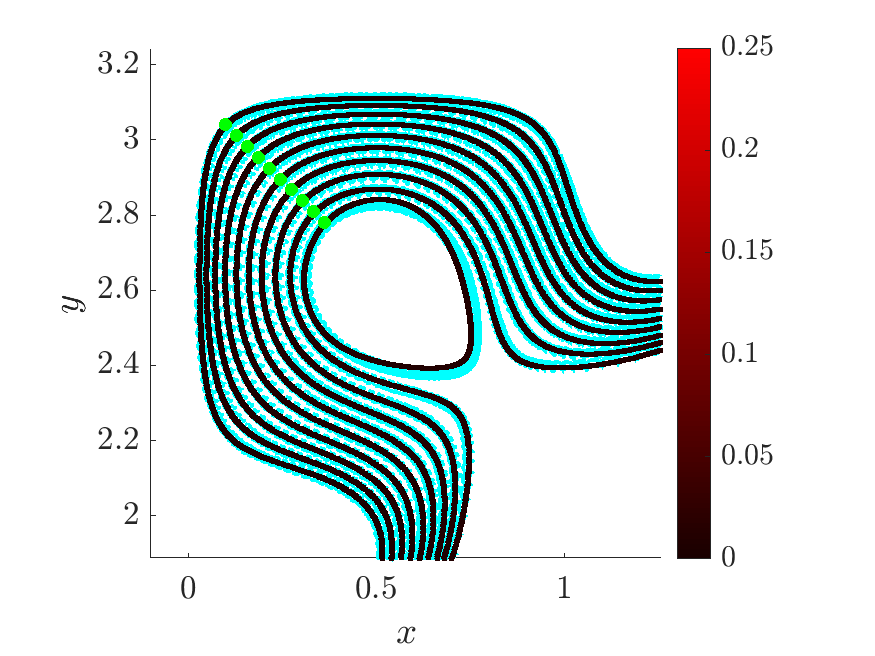} &
		\includegraphics[trim={30 0 30 10},clip,width=0.32\textwidth]{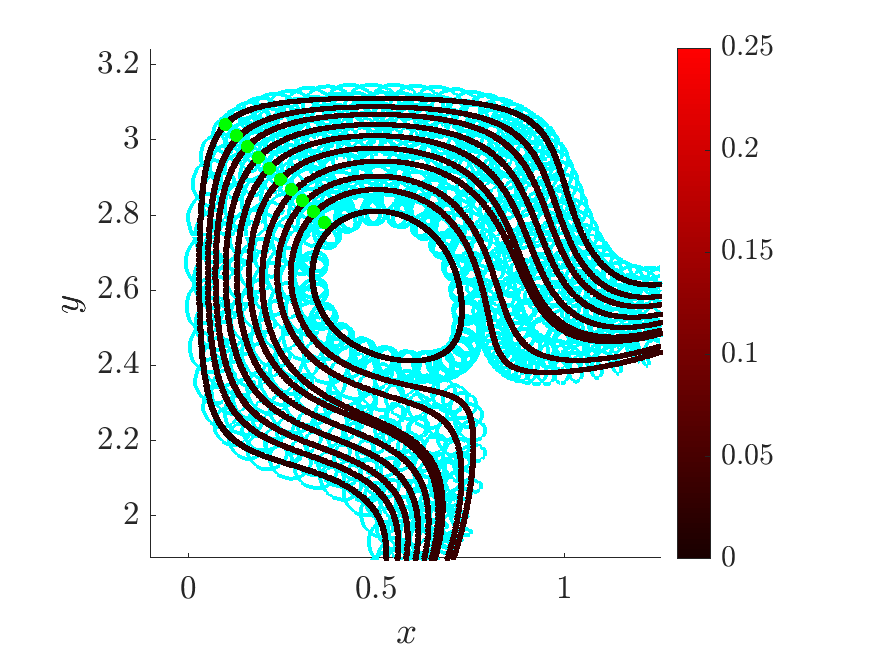} \\
\fbox{$\Delta \Zbf^\mu$}  &	\includegraphics[trim={30 0 30 10},clip,width=0.32\textwidth]{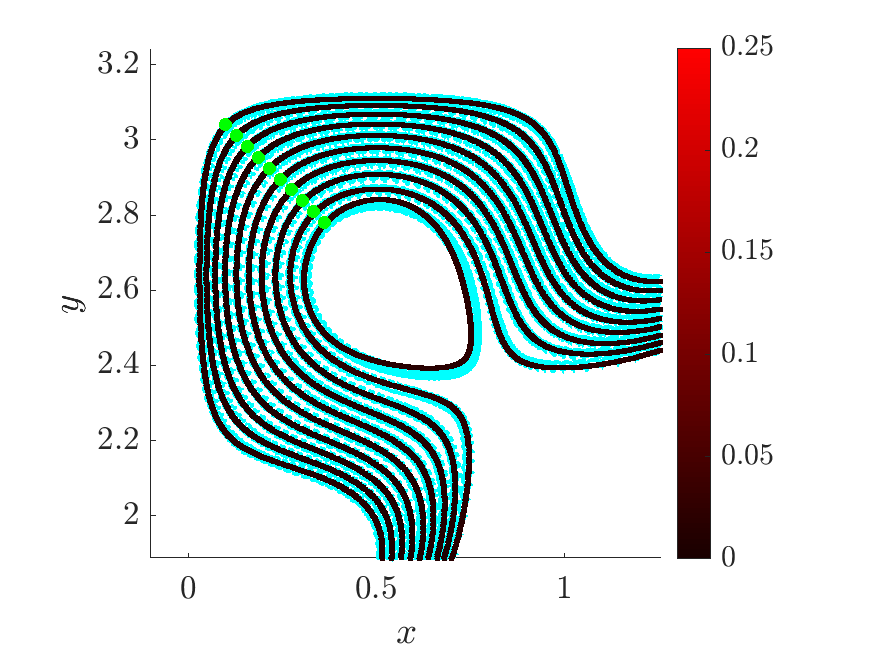} &
		\includegraphics[trim={30 0 30 10},clip,width=0.32\textwidth]{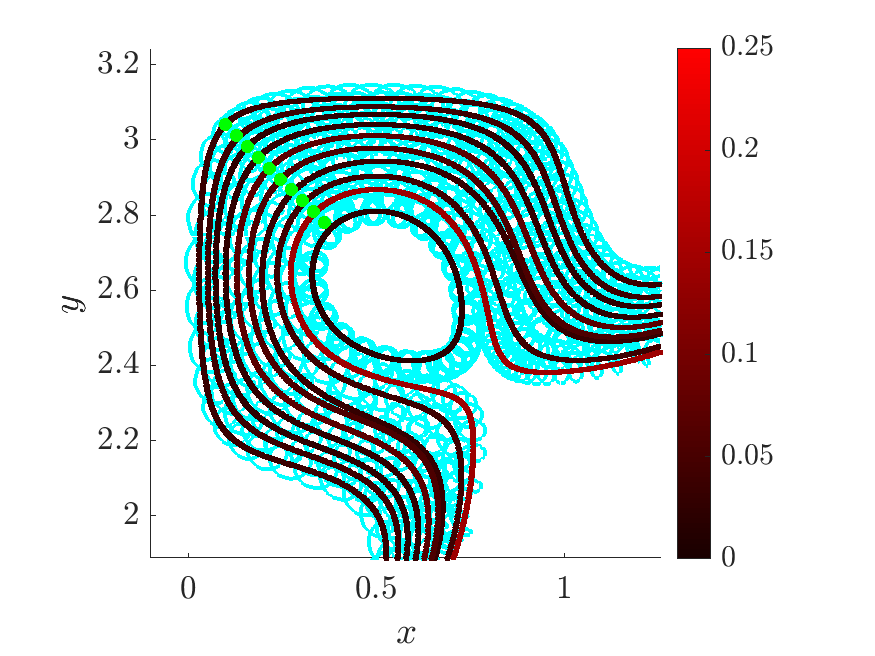} \\
\fbox{$\Delta \Zbf$}  &		\includegraphics[trim={30 0 30 10},clip,width=0.32\textwidth]{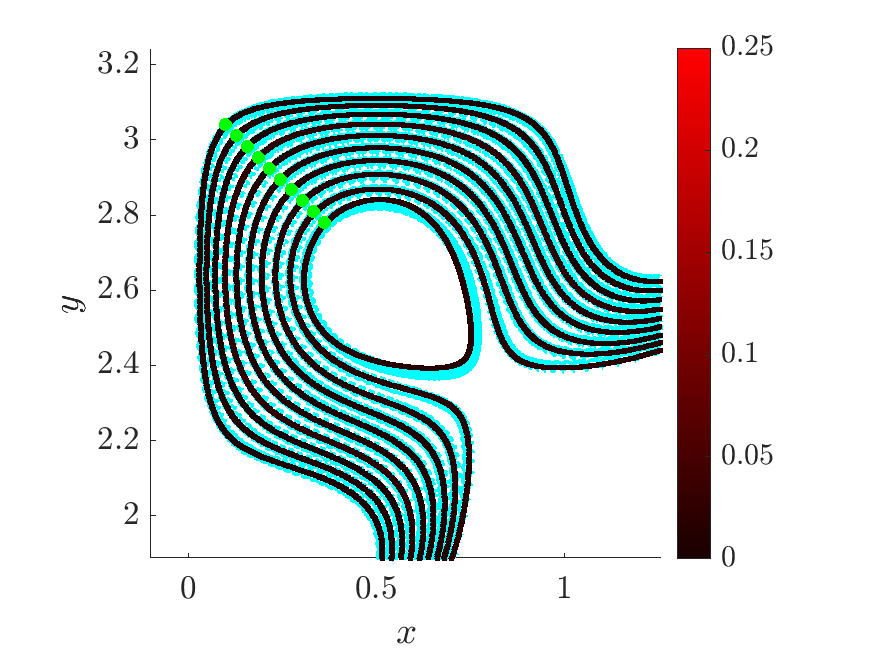} &
		\includegraphics[trim={30 0 30 10},clip,width=0.32\textwidth]{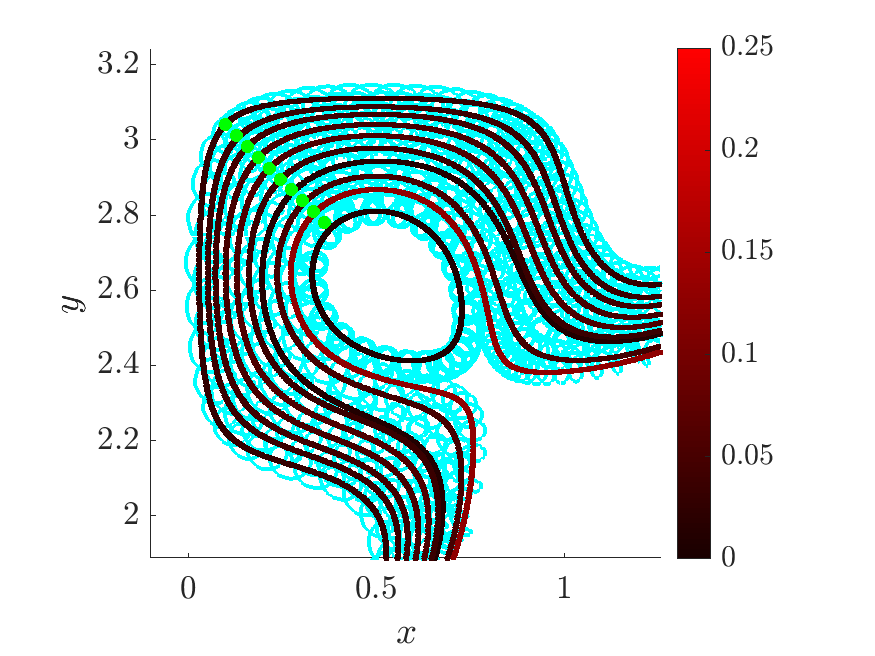} \\ \hline
\fbox{$\Delta \Zbf^\star$}  &	\includegraphics[trim={30 0 30 10},clip,width=0.32\textwidth]{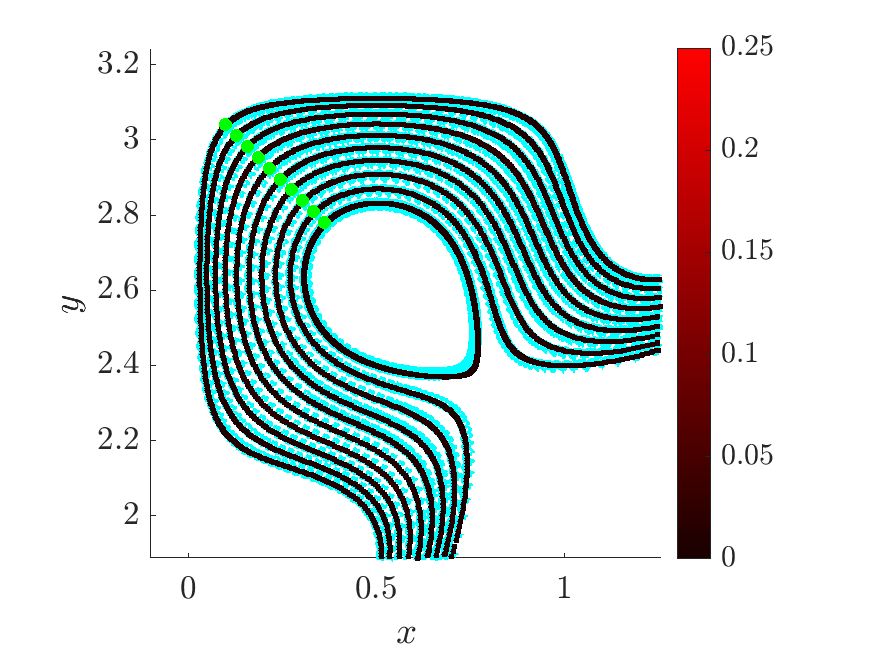} &
		\includegraphics[trim={30 0 30 10},clip,width=0.32\textwidth]{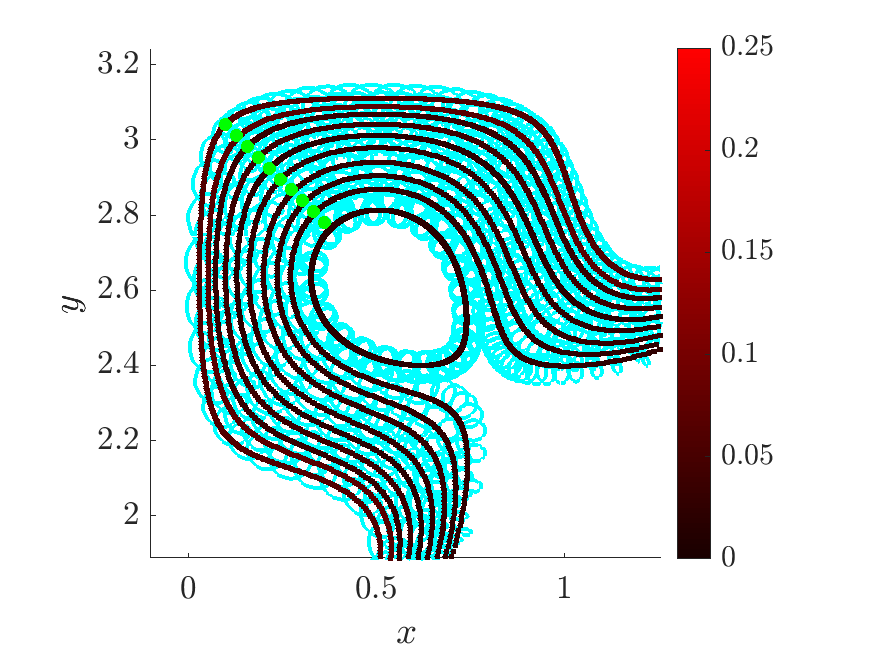} \\ \hline
\end{tabular}
\end{center}
\caption{Error statistics for Example 3. Each trajectory in red is simulated from the learned Hamiltonian system $\Hredhat$ and is colored according to the error metric on the left (defined in \eqref{tpr}-\eqref{delZstar}). The training data is plotted in cyan. Green dots indicate initial conditions. Left: results for $\vep= 0.01$, right: results for $\vep=0.03$.}
\label{fig:exp3_st}
\end{figure}

\begin{figure}
\begin{center}
\begin{tabular}{m{0.08\textwidth}@{}m{0.35\textwidth}@{}m{0.35\textwidth}}
& \centering\fbox{$\vep=0.01$}  
& \begin{center}\fbox{$\vep=0.03$}\end{center} \\
\fbox{$\Delta H$}  &		\includegraphics[trim={5 0 5 10},clip,width=0.32\textwidth]{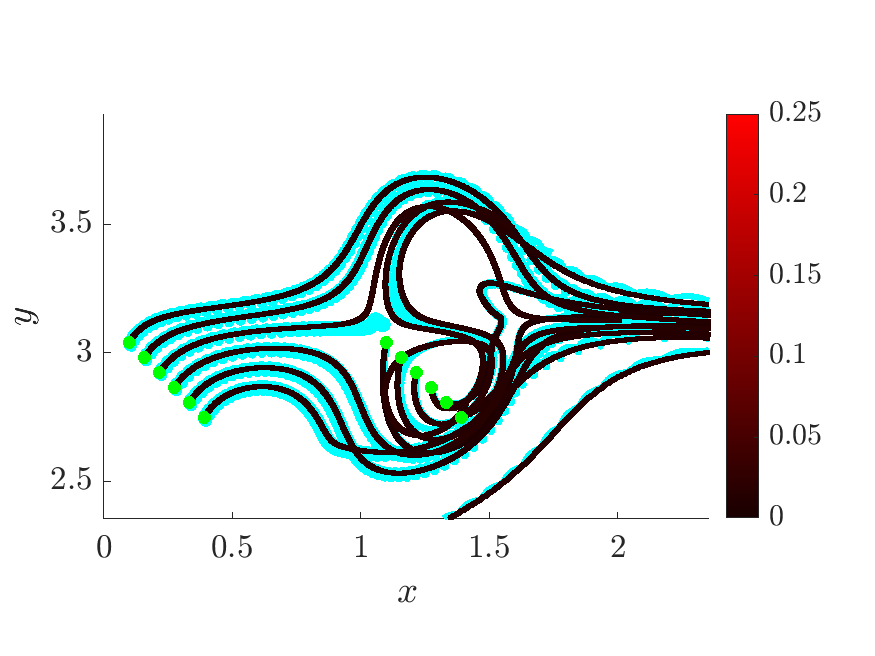} &
		\includegraphics[trim={5 0 5 10},clip,width=0.32\textwidth]{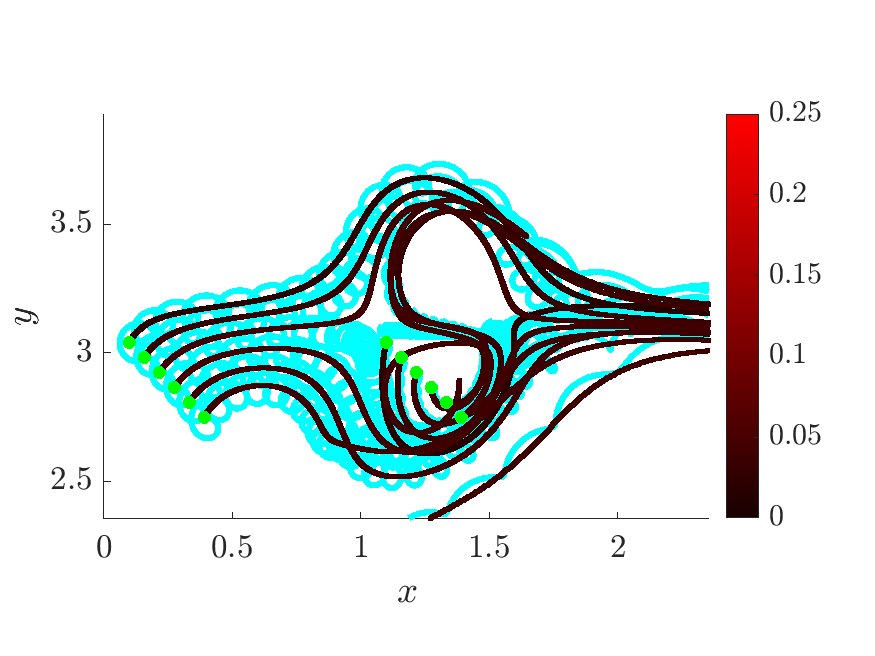} \\
\fbox{$\Delta \Zbf^\mu$}  &	\includegraphics[trim={5 0 5 10},clip,width=0.32\textwidth]{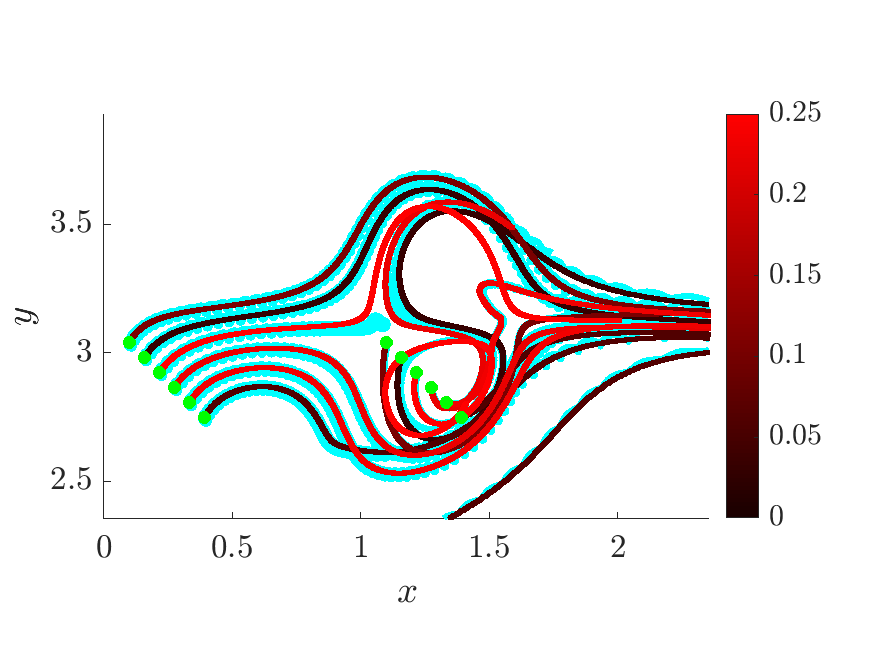} &
		\includegraphics[trim={5 0 5 10},clip,width=0.32\textwidth]{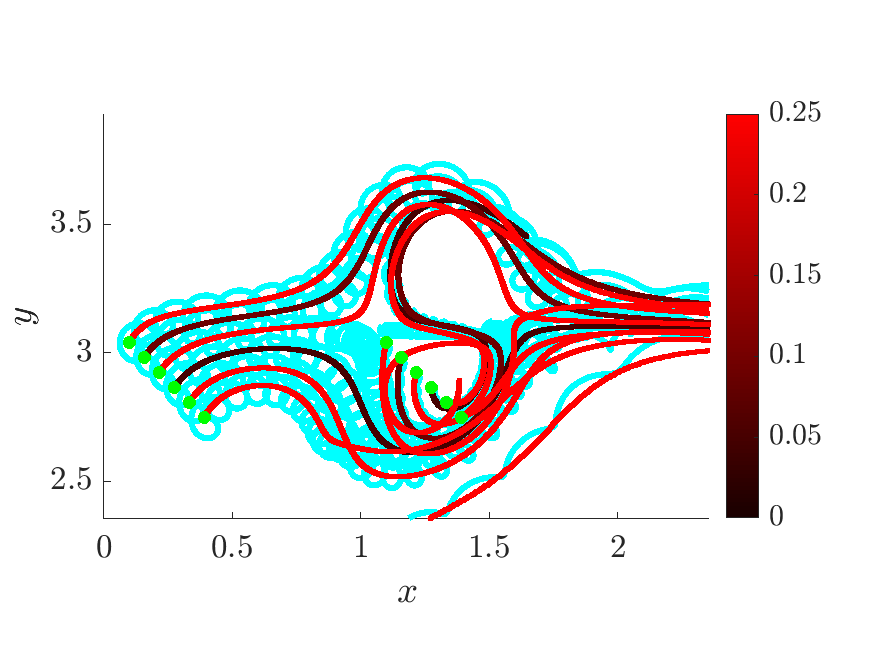} \\
\fbox{$\Delta \Zbf$} &		\includegraphics[trim={5 0 5 10},clip,width=0.32\textwidth]{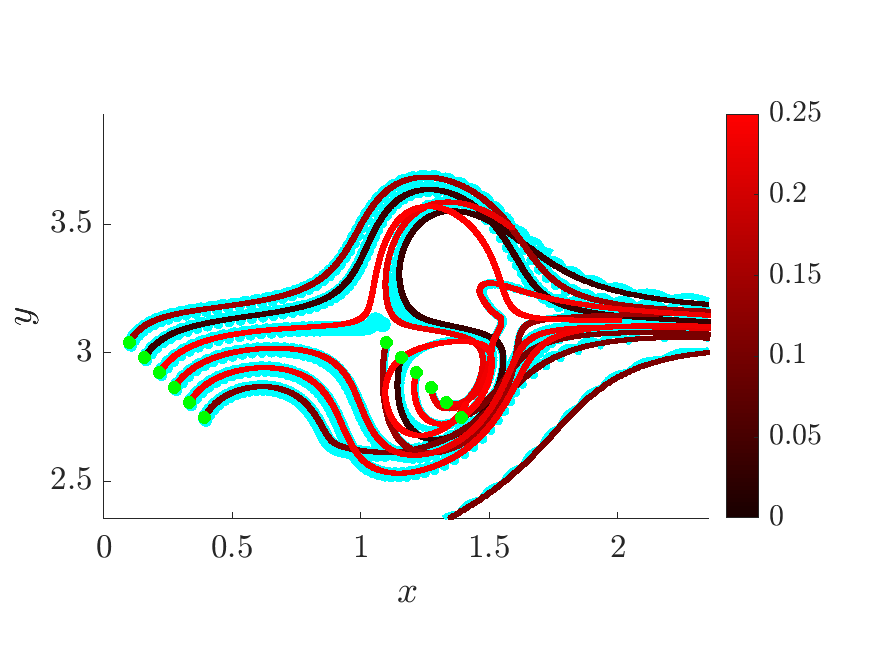} &
		\includegraphics[trim={5 0 5 10},clip,width=0.32\textwidth]{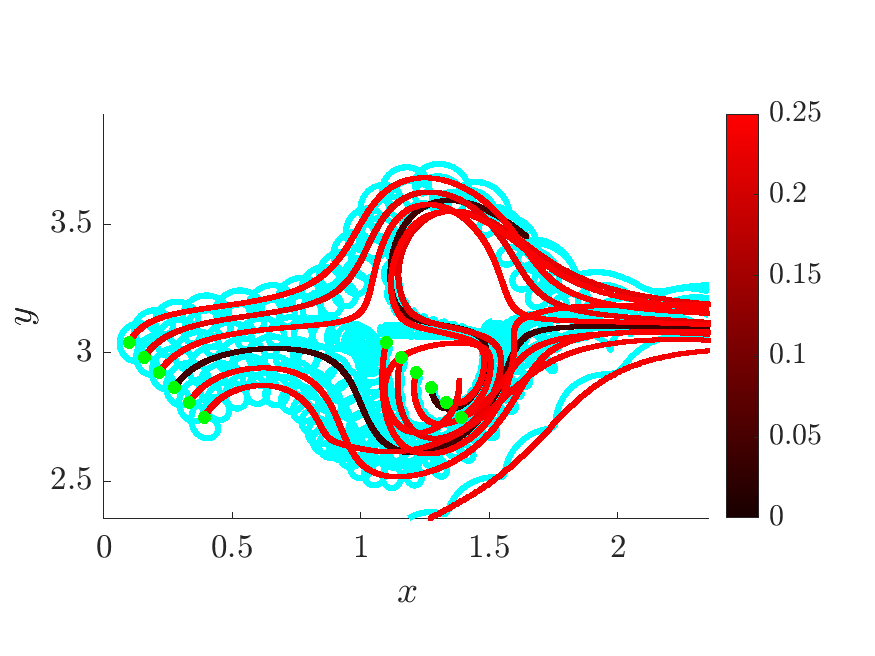} \\ \hline
	\fbox{$\Delta \Zbf^\star$}  &	\includegraphics[trim={5 0 5 10},clip,width=0.32\textwidth]{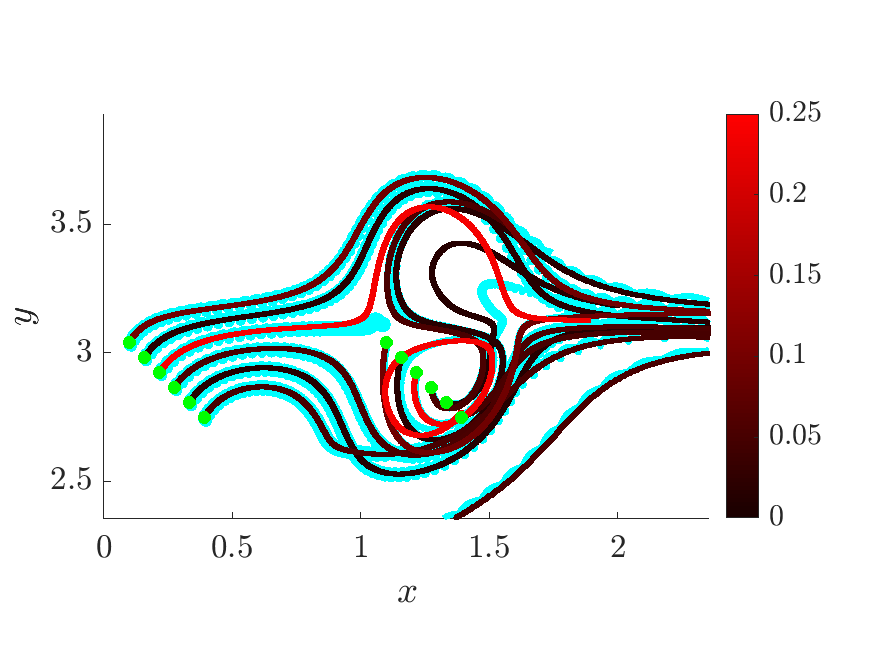} &
 
		\includegraphics[trim={5 0 5 10},clip,width=0.32\textwidth]{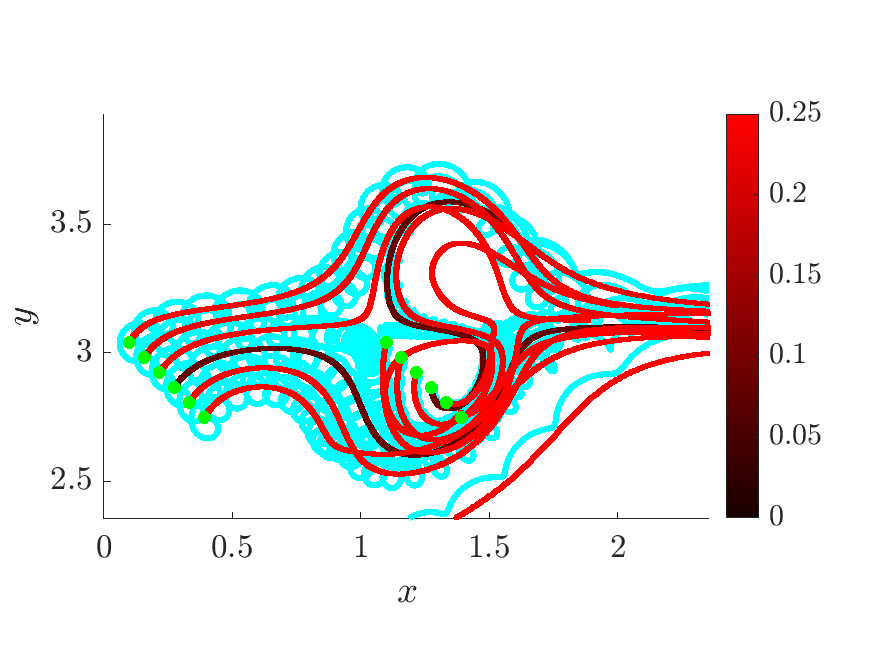} \\ \hline
\end{tabular}
\end{center}
\caption{Error statistics for Example 4. Each trajectory in red is simulated from the learned Hamiltonian system $\Hredhat$ and is colored according to the error metric on the left (defined in \eqref{tpr}-\eqref{delZstar}). The training data is plotted in cyan. Green dots indicate initial conditions. Left: results for $\vep= 0.01$, right: results for $\vep=0.03$.}
\label{fig:exp4_st}
\end{figure}

\subsection{Results: nonzero extrinsic noise ($\sigma_{NR}>0$)}

We now survey the performance of WSINDy in recovering $\Hred$ from data with extrinsic (measurement) noise, with quantitative results presented in Figure \ref{fig:noise_stats}. For each example we take a fixed trajectory from the extreme perturbative regime (defined above) for which WSINDy recovers the correct model without noise (TPR=1), and we add various levels of Gaussian white noise. We apply WSINDy to the noisy data $\Zbf$ and average results over 100 independent trials. Example trajectories with $10\%$ noise are included in Figures \ref{fig:noise_1}-\ref{fig:noise_4}, with the noisy data $\Zbf$ in red, the clean data $\Zbf^\star$ in black, the true reduced data $\Zbf^\mu$ in blue, and the learned reduced data $\widehat{\Zbf}^\mu$ in green. It should be noted that combined extrinsic noise and extreme intrinsic perturbations presents a significant challenge, and the results in Figure \ref{fig:noise_stats} improve greatly under less severe corruption levels.

From Figure \ref{fig:noise_stats} we observe similar trends for each example in the low-medium noise regime. For each example the method is robust up until at least $\sigma_{NR} = 10^{-1.5}\approx 0.032$, or $\approx 3\%$ noise providing on average $\text{TPR}>0.95$ and $\Delta H$ well below $10\%$. For larger noise levels, results vary by performance metric and example. The remainder of this section discusses these variations for the $10\%$ noise case, examples of which are given in Figures \ref{fig:noise_1}-\ref{fig:noise_4}.  

At $10\%$ noise ($\sigma_{NR} = 0.1$), WSINDy is still able to recover the model structure of Examples 1 and 3 very well, with TPR$\in [0.9,1]$, yet for Example 3 this coincides with large errors $\Delta H$ and $\Zbf^\mu$. We can see from the dynamics in Figure \ref{fig:noise_3} that although the model is identified correctly, with reasonable qualitative agreement between the green and blue curves, the large value of $\Delta \Zbf^\mu$ is explained by the warped contour followed by the green curve. This is an artifact of errors in the coefficients $\what$ compared to $\wstar$ due to noise. On the other hand, Example 1 performs well in all three categories TPR, $\Delta H$, $\Delta \Zbf^\mu$, but has significantly larger errors compared to $\Zbf^\star$, as indicated by $\Delta \Zbf$. This is due to accumulation of errors from a slight phase differences between $\Zbf^\mu$ and $\widehat{\Zbf}^\mu$ (see Figure \ref{fig:noise_1}, right). Correcting for this effect will be essential for utilizing WSINDy in related forward simulations. Overall, the fact that WSINDy still identifies the correct model, given combined extrinsic noise and intrinsic fast-scale dynamics, suggests that the method is well-suited for scientific discovery with these mixed effects. 

At $10\%$ noise Example 2 exhibits the lowerest TPR, but this coincides with excellent agreement with $\Hred$ as measured by $\Delta H$ and $\Delta \Zbf^\mu$, both with average values less than $4\%$. Hence, in this case large noise leads to identification of a model with slightly different terms, but still with acceptable accuracy compared to $\Hred$. Figure \ref{fig:noise_stats} (bottom right, orange curve) shows that on average the WSINDy model performs nearly as well as the analytical reduced dynamics given by $\Hred$. Since $\Hred$ is only a leading-order approximation, a natural next line of inquiry would be to examine the connection between ``misspecified'' models and next-order corrections. 

For Example 4, 10\% noise is clearly outside of the feasible recovery regime, with misspecified models (TPR<1) leading to large values of $\Delta H$ and significant forward simulations errors $\Delta \Zbf^\mu$ combined with chaotic effects. This can be overcome by considering multiple trajectories (not shown here), which serves to recover $\text{TPR}=1$, yet accuracy issues with $\Delta H$ remain. We conjecture that some form of variance reduction is needed. We aim to investigate the applicability of WENDy (Weak-form estimation of nonlinear dynamics) \cite{BortzMessengerDukic2023BullMathBiol} which has been demonstrated to reduce regression errors due to extrinsic noise.

\begin{figure}
\begin{center}
\begin{tabular}{@{}c@{}c@{}}
$\boxed{\text{TPR}}$ & $\boxed{\Delta H}$\\
\includegraphics[trim={0 0 0 0},clip,width=0.42\textwidth]{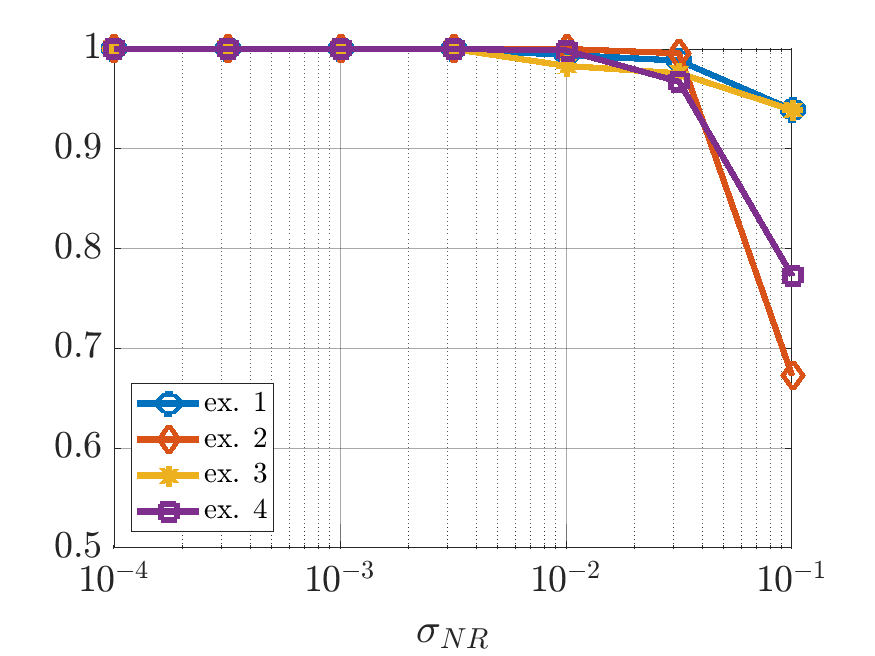} &
		\includegraphics[trim={0 0 0 0},clip,width=0.42\textwidth]{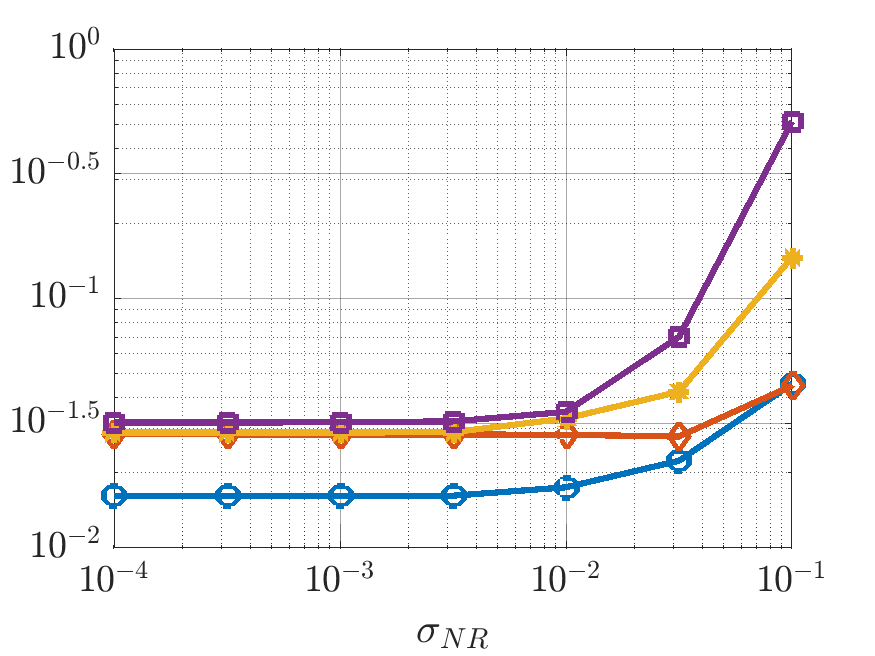} \\
$\boxed{\Delta \Zbf^\mu}$ & $\boxed{\Delta \Zbf}$\\
\includegraphics[trim={0 0 0 0},clip,width=0.42\textwidth]{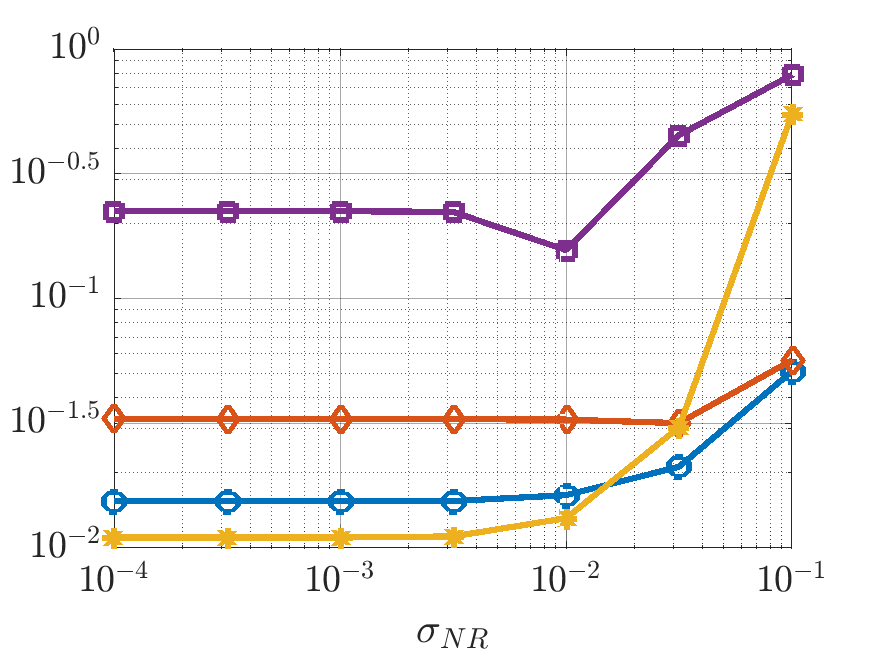} &
		\includegraphics[trim={0 0 0 0},clip,width=0.42\textwidth]{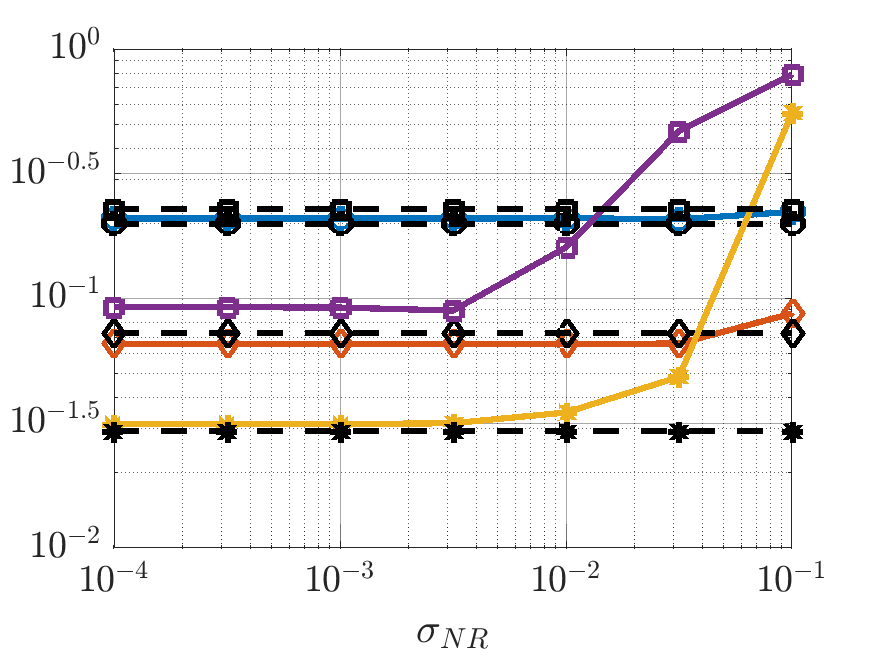} \\ \hline
\end{tabular}
\end{center}
\caption{Results for WSINDy applied to noisy data $\Zbf = \Zbf^\star+\eta$.}
\label{fig:noise_stats}
\end{figure}

\begin{figure}
\begin{center}
\begin{tabular}{@{}c@{}c@{}}
\includegraphics[trim={0 0 0 0},clip,width=0.5\textwidth]{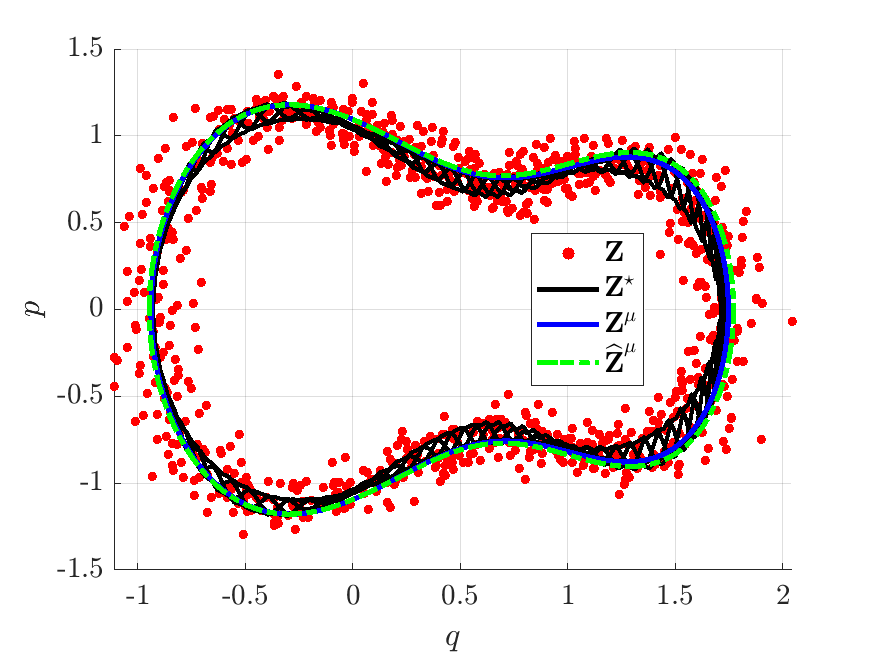} &
		\includegraphics[trim={0 0 0 0},clip,width=0.5\textwidth]{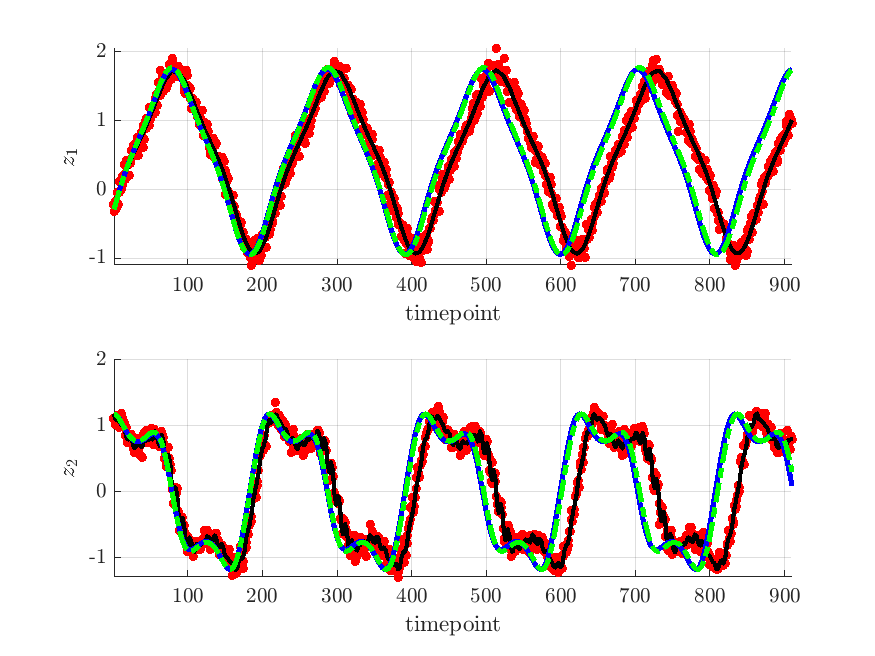} \\ \hline
\end{tabular}
\end{center}
\caption{Visualization of noisy data (in red) and recovered model (in green) for Example 1, as well as the clean data $\Zbf^\star$ and simulation from the true reduced system $\Hred$.}
\label{fig:noise_1}
\end{figure}

\begin{figure}
\begin{center}
\begin{tabular}{@{}c@{}c@{}}
\includegraphics[trim={0 0 0 0},clip,width=0.5\textwidth]{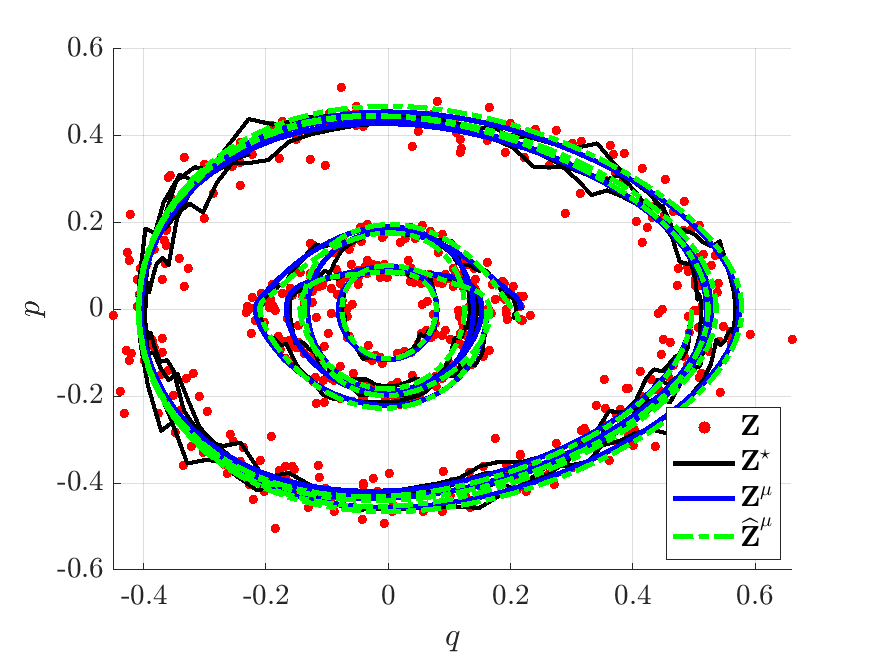} &
		\includegraphics[trim={0 0 0 0},clip,width=0.5\textwidth]{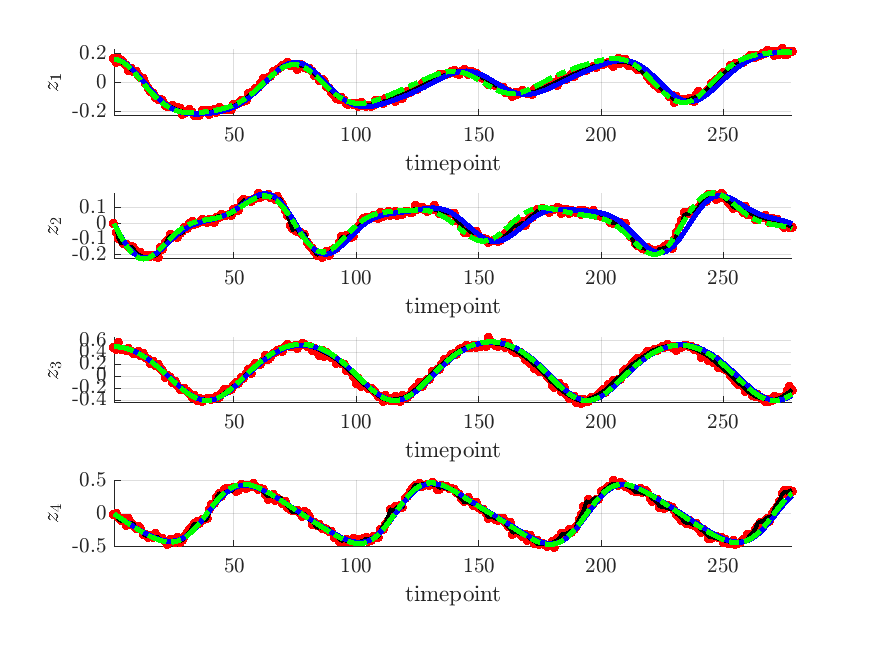} \\ \hline
\end{tabular}
\end{center}
\caption{Visualization of noisy data (in red) and recovered model (in green) for Example 2, as well as the clean data $\Zbf^\star$ and simulation from the true reduced system $\Hred$.}
\label{fig:noise_2}
\end{figure}

\begin{figure}
\begin{center}
\begin{tabular}{@{}c@{}c@{}}
\includegraphics[trim={0 0 0 0},clip,width=0.5\textwidth]{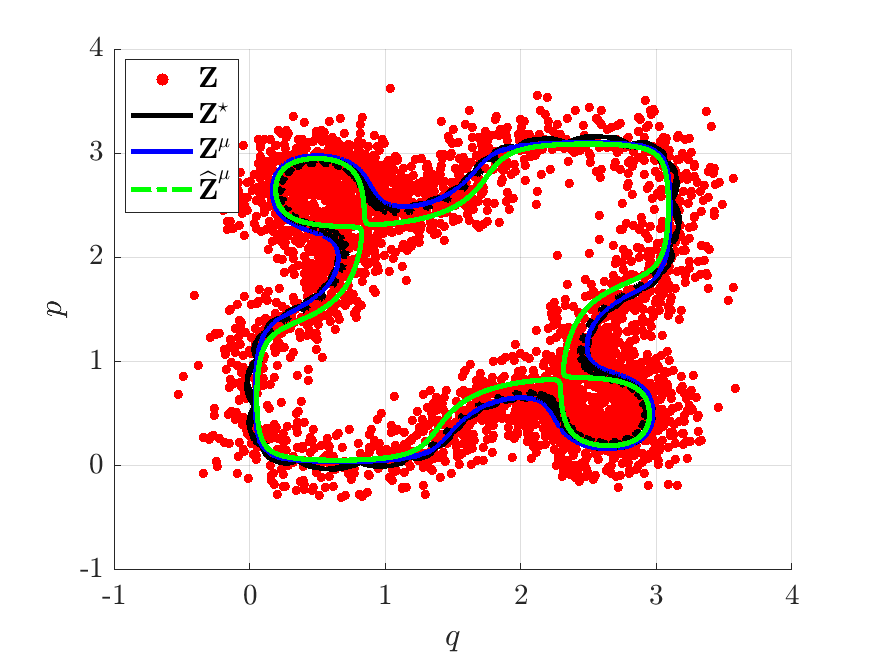} &
		\includegraphics[trim={0 0 0 0},clip,width=0.5\textwidth]{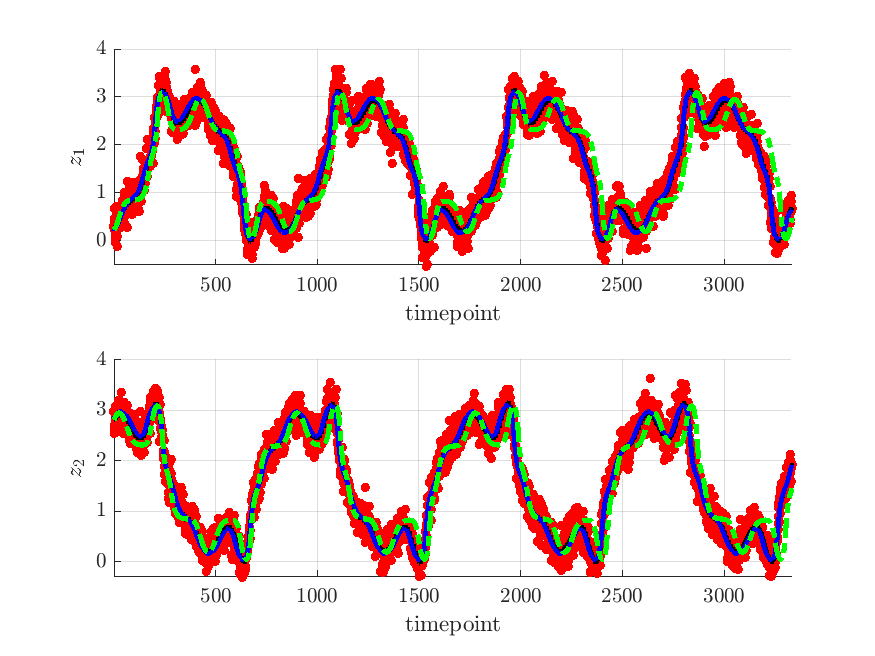} \\ \hline
\end{tabular}
\end{center}
\caption{Visualization of noisy data (in red) and recovered model (in green) for Example 3, as well as the clean data $\Zbf^\star$ and simulation from the true reduced system $\Hred$.}
\label{fig:noise_3}
\end{figure}

\begin{figure}
\begin{center}
\begin{tabular}{@{}c@{}c@{}}
\includegraphics[trim={0 0 0 0},clip,width=0.5\textwidth]{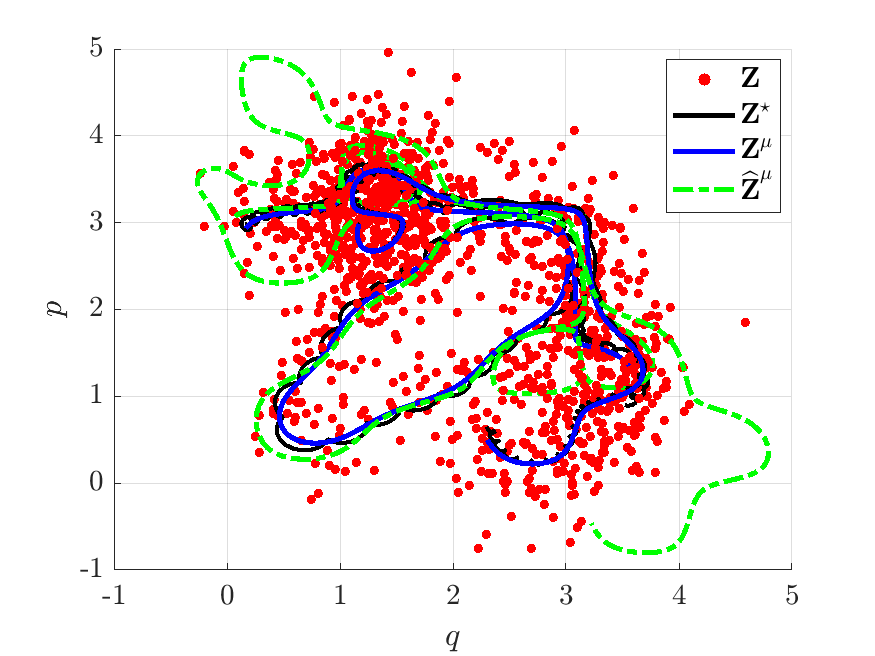} &
		\includegraphics[trim={0 0 0 0},clip,width=0.5\textwidth]{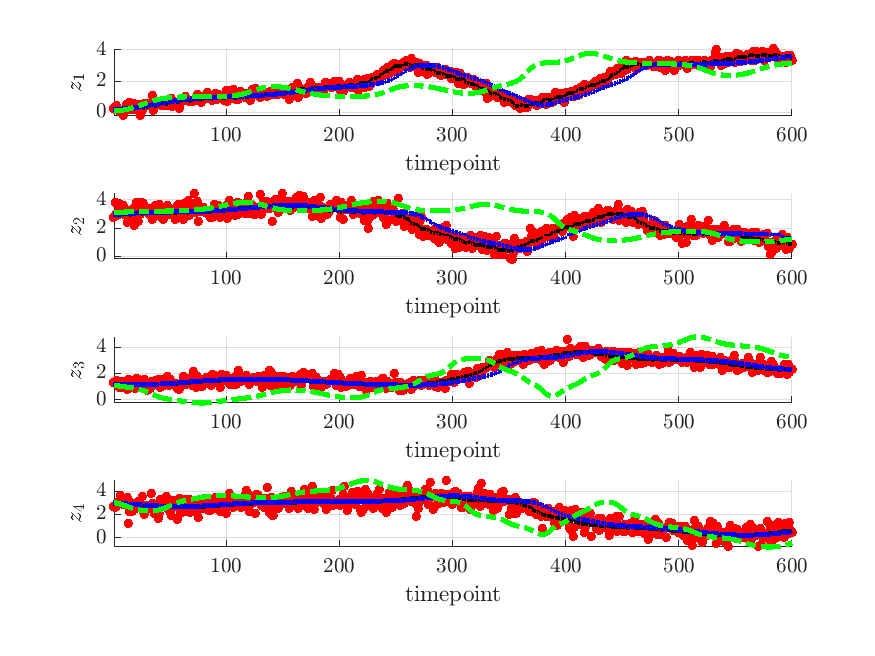} \\ \hline
\end{tabular}
\end{center}
\caption{Visualization of noisy data (in red) and recovered model (in green) for Example 4, as well as the clean data $\Zbf^\star$ and simulation from the true reduced system $\Hred$.}
\label{fig:noise_4}
\end{figure}

\section{Conclusions}

In this article we have described a weak-form equation learning approach to coarse-graining Hamiltonian systems using the WSINDy algorithm. We have provided substantial evidence that appropriate coarse-grained models with Hamiltonian structure can be identified from noisy data simply by choosing a proper weak formulation. This is significant due to the nontrivial analytical procedures involved in deriving reduced Hamiltonian systems using nearly-periodic reduction techniques. The dictionary learning approach naturally enforces structure preservation, and the method is highly efficient, requiring no forward solves of candidate models. The output is a human-readable equation for the reduced-order Hamiltonian which can then be used in down-stream tasks. 

The fact that a suitable weak formulation enables one to identify the reduced-order Hamiltonian over all of phase space using only a single (noisy) trajectory is surprising, and to the best of the authors' knowledge not found in the literature. An obvious next direction is to leverage this in combination with black-box methods (neural networks, Gaussian processes, etc.), and to adapt the test functions to target different models in a hierarchical fashion, as demonstrated in Figures \ref{fig:full_sims} and \ref{fig:full_results}. We aim also in a future work to quantify the sufficiency of information needed to identify the reduced Hamiltonian, as results from degenerate cases (trajectories that only encircle one of the relevant fixed points, see Examples 1 in Figure \ref{fig:tpr}) imply that the reduced Hamiltonian is accessible from regions of phase with seemingly low levels of information. At the other extreme, combining information from multiple trajectories (as typically done in black-box learning approaches) is an obvious means of increasing performance. Answering questions of data and information sufficiency will be crucial to developing methods of identifying higher-order (in $\vep$) coarse-grained models. 

Finally, combined with findings in \cite{MessengerBortz2022PhysicaD}, a major implication of our results is that the weak form opens the door to a new generation of coarse-graining algorithms. We aim to explore the use of these algorithms in surrogate modeling to reduce the number of full simulations of dynamics involving disparate scales, which are computationally prohibitive.

\section*{Data availability}

The datasets and code used in the present work are available from the corresponding author (DAM) upon reasonable request.


\bibliography{mathbioCU.bib,wsindyhamiltonian.bib}



\section*{Acknowledgements (not compulsory)}

This research was supported by a US DOE Mathematical Multifaceted Integrated Capability Center
(MMICC) grant to Los Alamos National Laboratory (supporting JWB) and to the University of Colorado (DE-SC0023346 to DMB). This work also utilized the Blanca condo computing resource at the University of Colorado Boulder. Blanca is jointly funded by computing users and the University of Colorado Boulder.
\section*{Author contributions statement}


D.A.M.: conceptualization, formal analysis, investigation, methodology, software, validation, visualization, writing—original draft, writing—review and editing; J.W.B.: conceptualization, formal analysis, writing—review and editing; D.M.B.: conceptualization, funding acquisition, methodology, project administration, resources, supervision, validation, writing—review and editing. All authors gave final approval for publication and agreed to be held accountable for the work performed therein.

\section*{Additional information}

\textbf{Competing interests} The authors attest to no competing interests. 


\section*{Supplemental Information}
\appendix

\section{Robust cornerpoint method}\label{app:cornerpt}

In \cite{MessengerBortz2021JComputPhys}, the authors developed a method of choosing the test function hyperparameters using a cornerpoint algorithm applied to the Fourier spectrum of the data. This approach was taylored to the case of extrinsic noise, in which case the cumulative sum of the power spectrum of the data can be approximated be two line segments, the intersection of which (the cornerpoint) closely marks the changepoint from signal-dominated to noise-dominated Fourier modes. For nearly periodic systems without extrinsic noise, the two line segment assumption is not accurate, however, a feasible test functions support width $T_\phi$ can still be computed using a slightly altered cornerpoint approach, simply by taking the minimum of the rotated cumulative sum of the power spectrum. We also extend the method in \cite{MessengerBortz2021JComputPhys} to work with any test function, not just those approximated by Gaussian distributions.

To this end, denote the discrete Fourier transform (DFT) of the $n$th component of the data $\Zbf$ by 
\[\CalF[\Zbf_n](k) := \sum_{j=0}^{m-1}\exp\left(-\frac{2\pi i}{m}jk\right)\Zbf_n(t_j)\]
 and let its cumulative sum (taken over the negative wavenumbers) be denoted by  
\[\Hbf_n(k) = \sum_{k'=-m/2}^{k} \left\vert\CalF[\Zbf_n](k')\right\vert, \quad k\in \left\{-m/2,\dots,0\right\}.\]
For convenience we will assume $m$ is even as the odd case is nearly identical. With high probability, $\Hbf_n$ will lie below the line 
\[k\to \Ybf_n(k):=\left(\frac{k_{\max}-k}{k_{\max}+(m/2)}\right)\Hbf_n(-m/2)+\left(\frac{k+(m/2)}{k_{\max}+(m/2)}\right)\Hbf_n(k_{\max})\]
where $k_{\max} = \argmax_{k\in \{-m/2,\dots,0\}}|\CalF[\Zbf_n](k)|$ is the wavenumber of the largest Fourier coefficient. A tie is broken by the maximizing $k$ with the largest magnitude. Moreover, $\Hbf_n$ will be approximately convex for $k\in \{-m/2,\dots,k_{\max}\}$. For instance, convexity and $\Hbf_n\leq \Ybf$ over this region are both true if $\CalF[\Zbf_n]$ exhibits any decay of the form $|\CalF[\Zbf_n](k)|\sim |k|^{-s}$ for $s>0$. We define the cornerpoint of $\Hbf_n$ as the point $(k_n^*,\Hbf_n(k_n^*))$ that maximizes the distance between $\Hbf_n(k)$ and $\Ybf_n(k)$ over $k\in\{-m/2,\dots,k_{\max}\}$ in the total least squares sense. A practical computation of this is the following. Let $R(\theta)$ be the rotation matrix in the $(k,\Hbf_n)$ plane such that 
\[\begin{pmatrix} 0 \\ 1 \end{pmatrix}\cdot \Rbf(\theta)\begin{pmatrix} k \\ \Ybf_n(k) \end{pmatrix} = 0.\]
That is, $\Rbf(\theta)$ rotates $\Ybf_n$ parallel to the $k$-axis. Then 
\[k_n^*:= \argmin_{k\in\{-m/2,\dots,k_{\max}\}} \left[\Rbf(\theta)\begin{pmatrix} k \\ \Hbf_n(k) \end{pmatrix}\right]_2\]
where $[\cdot]_2$ indicates the second component of the vector. This process is visualized in Figure \ref{fig:corner}, and in this work, we let the final cornerpoint be the average $k^* = \frac{1}{2N}\sum_{n=1}^{2N}k^*_n$, used for all observed variables. Figure \ref{fig:corner} displays that the previous method is more prone to labeling irrelevant modes as signal-dominated (2nd row), but overall the performance of both methods is comparable. 

\begin{figure}
\begin{center}
\begin{tabular}{@{}c@{}c@{}}
\includegraphics[trim={0 0 0 0},clip,width=0.45\textwidth]{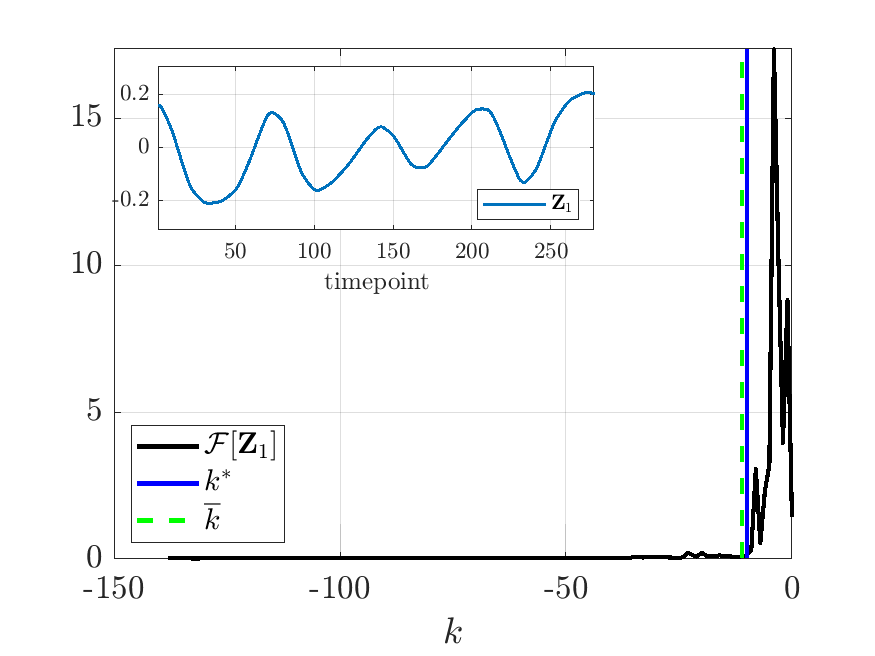} &
		\includegraphics[trim={0 0 0 0},clip,width=0.45\textwidth]{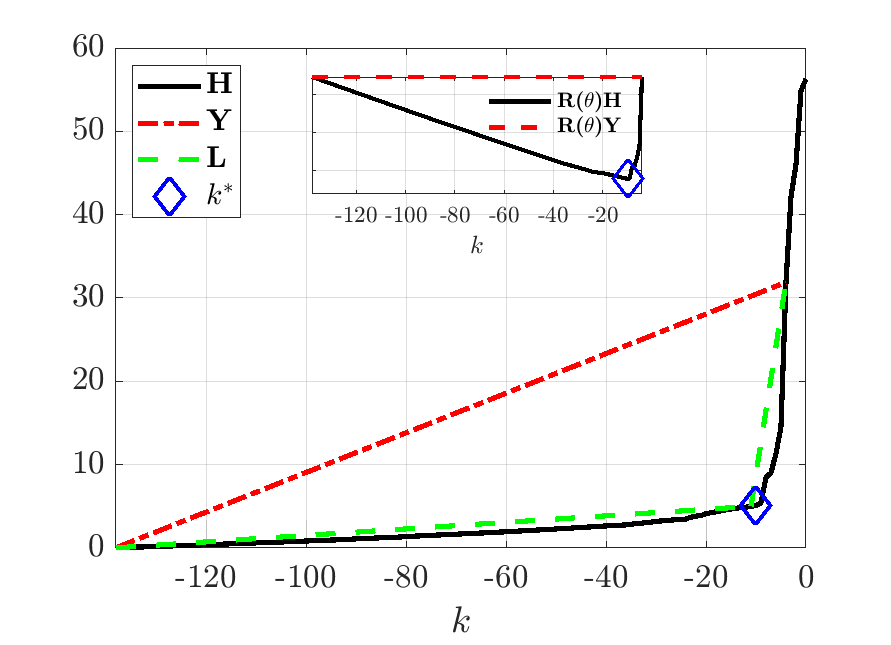} \\
\includegraphics[trim={0 0 0 0},clip,width=0.45\textwidth]{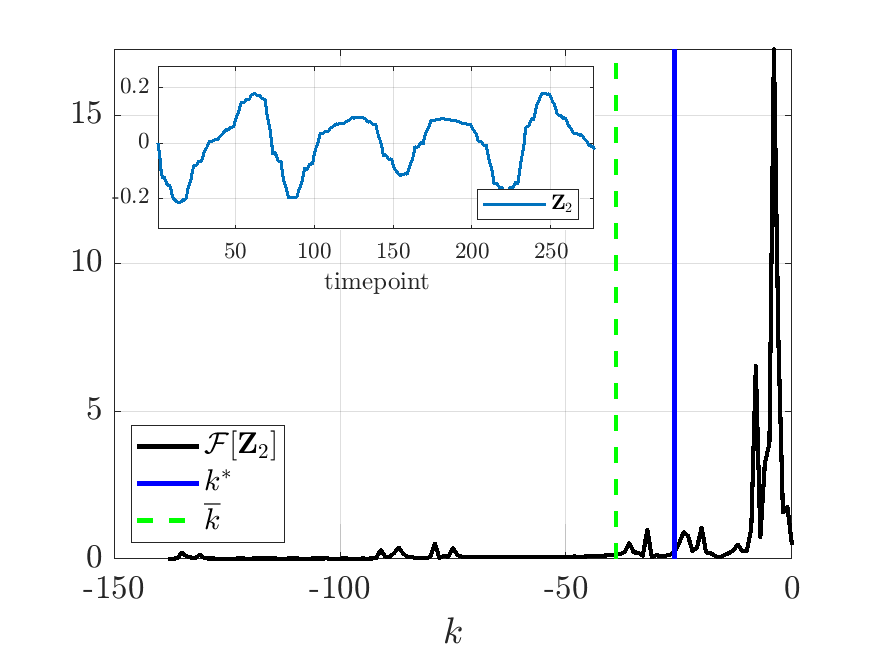} &
		\includegraphics[trim={0 0 0 0},clip,width=0.45\textwidth]{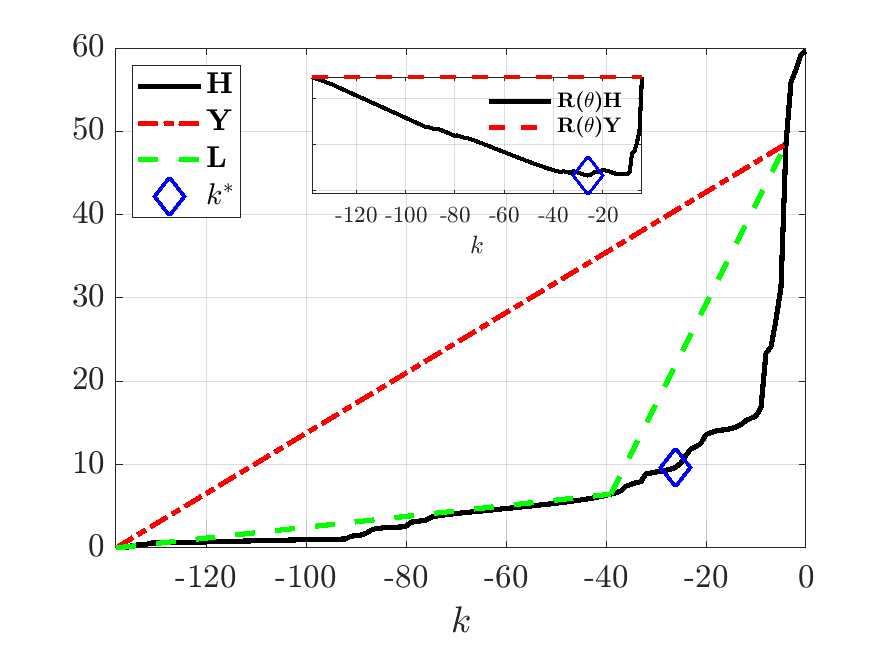}
\end{tabular}
\end{center}
\caption{Visualization of cornerpoint detection and comparison to the method used in \cite{MessengerBortz2021JComputPhys}. The data $\Zbf$ is taken from Example 2 with $\vep=0.05$ and $Q(0)=\frac{31\pi}{32}$. The top and bottom rows display coordinates $1$ and $2$, the configuration and momentum variables for the first oscillator. The data and DFTs are plotted in the left column, along with markers for the detected cornerpoints, while the right column shows the cumulative sums $\Hbf_n$ together with the line $\Ybf$ and the piecewise linear approxmation $\Lbf$ used in \cite{MessengerBortz2021JComputPhys} (inset plots show the rotated $\Hbf$ which is minimized at $k^*$). The value $k^*$ is the cornerpoint detected by the current method, while $\overline{k}$ is the value detected by the method in \cite{MessengerBortz2021JComputPhys}. In the top row, the two methods nearly agree, while the more perturbed dynamics in the bottom row cause the $\overline{k}$ to be significantly larger, leading to more irrelevant modes entering the recovered model.}
\label{fig:corner}
\end{figure}

\section{MSTLS algorithm}\label{app:MSTLS} 

To solve for a sparse coefficient vector $\what$ such that $\Gbf\what \approx \bbf$, we use the MSTLS algorithm proposed in \cite{MessengerBortz2021JComputPhys} that uses the original STLS algorithm from \cite{BruntonProctorKutz2016ProcNatlAcadSci} with  variable thresholding and then performs a line search for the sparsity threshold $\lambda$. By heterogeneous thresholding, we mean that instead of employing the typical hard thresholding operator
\begin{equation}\label{hardthresh}
H_\lambda(\wbf)_j=\begin{cases} \wbf_j, & |\wbf_j|\geq \lambda \\ 0, & \text{otherwise,}\end{cases}
\end{equation}
which treats all columns of $\Gbf$ equally, we define the variable hard-thresholding operator
\begin{equation}\label{hardthresh}
H_{\Lbf,\Ubf}(\wbf)_j=\begin{cases} \wbf_j, & L_j\leq |\wbf_j|\leq U_j \\ 0, & \text{otherwise}\end{cases}
\end{equation}
for specified upper and lower bound vectors $\Ubf,\Lbf \in \Rbb^J$ (for a $J$-term library $\Hbb$). This is particular useful for enforcing not only that the coefficients $\wbf_j$ stay within a certain range, but also that the term magnitudes $\|\Gbf_j\wbf_j\|$ have a reasonable contribution to the dynamics given by $\bbf$.

In this work, we define $\Lbf$ and $\Ubf$ to enforce that $\what_j$ and $\what_j\Gbf_j$ are comparable to the best 1-term solution. That is, denoting 
the projection operator by $\Pbf$, we define
\[\hat{j} :=\argmax_j \|\Pbf_{\Gbf_j}\bbf\|_2, \qquad \hat{w} := \frac{|\bbf\cdot \Gbf_{\hat{j}}|}{\|\Gbf_{\hat{j}}\|_2^2}, \qquad \hat{G} := \frac{\hat{w}\|\Gbf_{\hat{j}}\|_2}{\|\bbf\|_2}\]
where $\hat{w}$ and $\hat{G}$ are the coefficient and relative term magnitudes of the best 1-term projection of $\bbf$ onto the library $\Hbb$. We then let
\begin{align}
L_j(\lambda) &= \lambda^{-1}\max\{\hat{w},\hat{G}/\|\Gbf_j\|_2\} \\
U_j(\lambda) &= \lambda \min\{\hat{w},\hat{G}/\|\Gbf_j\|_2\}
\end{align}
which implies that the nonzero entries of $H_{\Lbf(\lambda),\Ubf(\lambda)}(\wbf)$ are the entries of $\wbf$ satisfying
\[\lambda \hat{w} \leq |\wbf_j|\leq \lambda^{-1}\hat{w} \qquad \&\qquad
\lambda \hat{G} \leq \frac{|\wbf_j|\|\Gbf_j\|_2}{\|\bbf\|_2}\leq \lambda^{-1}\hat{G}.\]
The MSTLS algorithm then proceeds as follows. With initial guess $\wbf^{(0)} = \Gbf^\dagger\bbf$ and support set $S^{(0)} = \{1,\dots,J\}$ the inner
STLS loop for fixed $\lambda$ is defined by
\begin{equation}\label{STLS}
(\text{STLS})\qquad \begin{dcases} 
S^{(\ell+1)} =\supp{H_{\Ubf(\lambda),\Lbf(\lambda)} (\wbf^{(\ell)})} \\
\hspace{-0.05cm}\wbf^{(\ell+1)} = \argmin_{\supp{\wbf}\subset S^{(\ell+1)}}\nrm{\Gbf\wbf-\bbf}_2^2
\end{dcases}
\end{equation}
which is run until termination, which must occur in maximum $J$ iterations \cite{ZhangSchaeffer2019MultiscaleModelSimul}. Denoting the solution by $\text{STLS}(\Gbf,\bbf;\, \lambda) = \wbf^\lambda$ (where $\wbf^0=\wbf^{(0)}$), we then define the auxiliary loss function $\CalL$ by
\begin{equation}\label{lossfcn}
\CalL(\lambda) = \frac{\nrm{\Gbf(\wbf^\lambda-\wbf^0)}_2}{\nrm{\Gbf\wbf^0}_2}+\frac{\nrm{\wbf^\lambda}_0}{\nrm{\wbf^0}_0}, \quad \wbf^\lambda = \text{STLS}(\Gbf,\bbf;\,\lambda).
\end{equation}
Finally, we define MSTLS with candidate sparsity thresholds $\pmb{\lambda}$ by
\begin{equation}\label{MSTLS2}
(\text{MSTLS})\qquad \begin{dcases} 
\hspace{0.01cm} \widehat{\lambda} = \min\left\{\lambda\in \pmb{\lambda} \ :\ \CalL(\lambda) = \min_{\lambda\in \pmb{\lambda}} \CalL(\lambda)\right\}\\
\widehat{\wbf}  =\text{STLS}(\Gbf,\bbf;\,\widehat{\lambda}).\end{dcases}
\end{equation}
We denote by $\text{MSTLS}(\Gbf,\bbf,\pmb{\lambda}):=\what$ the output weight vector in \eqref{MSTLS2}. The collection of thresholds $\pmb{\lambda}$ can be chosen by the user. Following the strategy seen to be successful in \cite{MessengerBortz2021JComputPhys}, we let $\pmb{\lambda}$ be 100 equally log-spaced values from $10^{-4}$ to 1. This process is visualized in Figure \ref{fig:mstlsloss}.

\begin{figure}
  \begin{center}
    \includegraphics[width=0.48\textwidth]{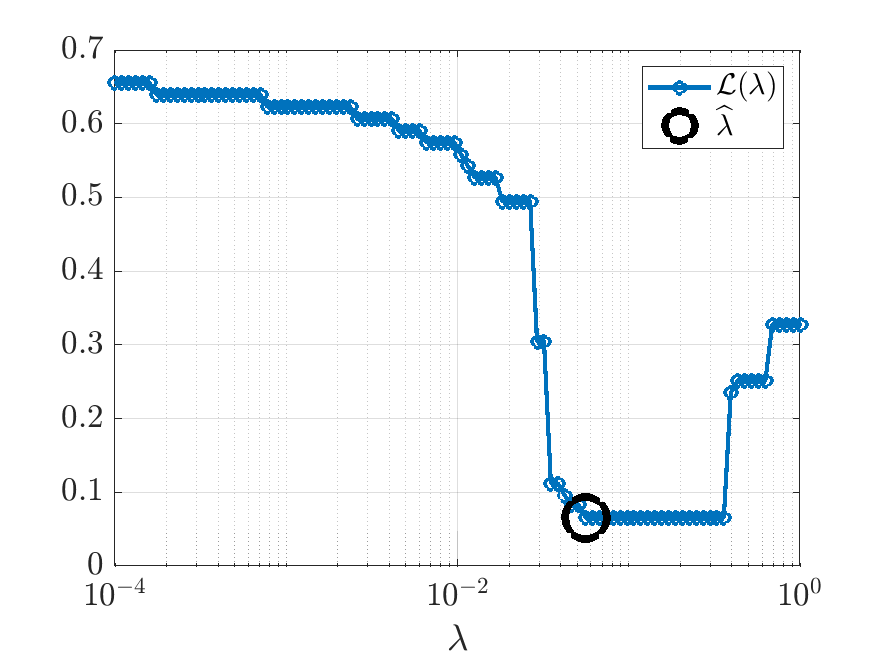}
  \end{center}
  \caption{Example profile of the loss function $\CalL(\lambda)$ employed in MSTLS \eqref{MSTLS2} for the data in Figure \ref{fig:noise_1}.}
  \label{fig:mstlsloss}
\end{figure}

\section{Initial conditions for forward simulations}\label{app:IC} 
In order to measure agreement between forward simulations of $\Hred$ and $\Hredhat$, we first use a data-driven nonlinear least squares approach to find an adequate set of initial conditions. For all time intervals $I_k = [t_0,t_k]$, $k=2,\dots,100$, we fit a polynomial of degree 2 through the data $\Zbf(I_k)$ and let $z(0)$ be the value of this polynomail at $t_0$. We then simulate the system of interest (defined by $\Hred$ or $\Hredhat$) starting from the selected $z(0)$ for 1/8 of the total time of the available data. The initial condition $z(0)$ that minimizes the error between these partial forward simulations and the data $\Zbf$ over the 99 time intervals $I_k$ is chosen for forward simulations.

\end{document}